\providecommand{\U}[1]{\protect\rule{.1in}{.1in}}
\newtheorem{theorem}{Theorem}
\newtheorem{corollary}[theorem]{Corollary}
\newtheorem{definition}[theorem]{Definition}
\newtheorem{proposition}[theorem]{Proposition}
\newtheorem{remark}[theorem]{Remark}
\newenvironment{proof}[1][Proof]{\noindent\textbf{#1.} }{\ \rule{0.5em}{0.5em}}
\begin{document}
\title{Entropy of a quantum channel}
\author{Gilad Gour}
\email{gour@ucalgary.ca}
\affiliation{Department of Mathematics and Statistics, Institute for Quantum Science and
Technology, University of Calgary, Alberta, Canada T2N 1N4}
\author{Mark M. Wilde}
\email{mwilde@lsu.edu}
\affiliation{Hearne Institute for Theoretical Physics, Department of Physics and Astronomy, and 
Center for Computation and Technology, Louisiana State University, Baton
Rouge, Louisiana 70803, USA}

\begin{abstract}
The von Neumann entropy of a quantum state is a central concept in physics and
information theory, having a number of compelling physical interpretations.
There is a certain perspective that the most fundamental notion in quantum
mechanics is that of a quantum channel, as quantum states, unitary evolutions,
measurements, and discarding of quantum systems can each be regarded as
certain kinds of quantum channels. Thus, an important goal is to define a
consistent and meaningful notion of the entropy of a quantum channel.
Motivated by the fact that the entropy of a state $\rho$ can be formulated as
the difference of the number of physical qubits and the \textquotedblleft
relative entropy distance\textquotedblright\ between $\rho$ and the maximally
mixed state, here we define the entropy of a channel $\mathcal{N}$\ as the
difference of the number of physical qubits of the channel output with the
\textquotedblleft relative entropy distance\textquotedblright\ between
$\mathcal{N}$ and the completely depolarizing channel. We prove that this
definition satisfies all of the axioms, recently put forward in [Gour, IEEE
Trans.~Inf.~Theory \textbf{65}, 5880 (2019)], required for a channel entropy
function. The task of quantum channel merging, in which the goal is for the
receiver to merge his share of the channel with the environment's share, gives
a compelling operational interpretation of the entropy of a channel. The entropy of a channel can be negative for certain channels, but this negativity has an operational interpretation in terms of the channel merging protocol. We define
R\'{e}nyi and min-entropies of a channel and prove that they satisfy the
axioms required for a channel entropy function. Among other results, we also
prove that a smoothed version of the min-entropy of a channel satisfies the
asymptotic equipartition property. 

\end{abstract}
\maketitle


\tableofcontents

\section{Introduction}

In his foundational work on quantum statistical mechanics, von Neumann
extended the classical Gibbs entropy concept to the quantum realm \cite{VN32}.
This extension, known as the von Neumann or quantum entropy, plays a key role
in physics and information theory. It is defined by the following formula
\cite{VN32}:%
\begin{equation}
H(A)_{\rho}\equiv-\operatorname{Tr}\{\rho_{A}\log_{2}\rho_{A}\},
\end{equation}
where $\rho_{A}$ is the state of a system $A$. The entropy has operational
interpretations in terms of quantum data compression \cite{S95} and optimal
entanglement manipulation rates of pure bipartite quantum states
\cite{BBPS96}, where the choice of base two for the logarithm becomes clear.
In recent developments of quantum thermodynamics, it was shown that the free
energy, namely, the difference of the energy and the product of the
temperature and the von Neumann entropy, can be interpreted as the rate at
which work can be extracted from a large number of copies of a quantum system
in a thermal bath at fixed temperature, by using only thermal operations
\cite{BHORS13}.

By defining the quantum relative entropy of a state $\rho_{A}$ and a positive
semi-definite operator $\sigma_{A}$ as \cite{U62}%
\begin{equation}
D(\rho_{A}\Vert\sigma_{A})\equiv\operatorname{Tr}\{\rho_{A}\left[  \log
_{2}\rho_{A}-\log_{2}\sigma_{A}\right]  \},
\end{equation}
if $\operatorname{supp}(\rho_{A})\subseteq\operatorname{supp}(\sigma_{A})$ and
$D(\rho_{A}\Vert\sigma_{A})=+\infty$ otherwise, we can rewrite the formula for
quantum entropy as follows:%
\begin{equation}
H(A)_{\rho}=\log_{2}\left\vert A\right\vert -D(\rho_{A}\Vert\pi_{A}),
\label{eq:ent-rel-ent}%
\end{equation}
where $\left\vert A\right\vert $ denotes the dimension of the system $A$ and
$\pi_{A}\equiv I_{A}/\left\vert A\right\vert $ denotes the maximally mixed
state. In this way, we can think of entropy as quantifying the difference of
the number of physical qubits contained in the system~$A$ and the
\textquotedblleft relative entropy distance\textquotedblright\ of the state
$\rho_{A}$ to the maximally mixed state $\pi_{A}$. This way of thinking about
quantum entropy is relevant in the resource theory of purity
\cite{OHHH02,HHHH0SS03,HHO03,OHHHH03,D05,KD07}, in which the goal is to
distill local pure states from a given state (or vice versa) by allowing local
unitary operations for free. Furthermore, the quantum relative entropy
$D(\rho_{A}\Vert\pi_{A})$ has an operational meaning as the optimal rate at
which the state $\rho_{A}$ can be distinguished from the maximally mixed state
$\pi_{A}$ in the Stein setting of quantum hypothesis testing \cite{HP91,ON00}.
In what follows, we use the formula in \eqref{eq:ent-rel-ent}\ as the basis
for defining the entropy of a quantum channel.

For some time now, there has been a growing realization that the fundamental
constituents of quantum mechanics are quantum channels. Recall that a quantum
channel $\mathcal{N}_{A\rightarrow B}$\ is a completely positive, trace
preserving map that takes a quantum state for system$~A$ to one for system$~B$
\cite{H12}. Indeed, all the relevant components of the theory, including
quantum states, measurements, unitary evolutions, etc.,~can be written as
quantum channels. A quantum state can be understood as a preparation channel,
sending a trivial quantum system to a non-trivial one prepared in a given
state. A quantum measurement can be understood as a quantum channel that sends
a quantum system to a classical one; and of course a unitary evolution is a
kind of quantum channel, as well as the discarding of a quantum system. One
might even boldly go as far as to say that there is really only a single
postulate of quantum mechanics, and it is that \textquotedblleft everything is
a quantum channel.\textquotedblright\ With this perspective, one could start
from this unified postulate and then understand from there particular kinds of
channels, i.e., states, measurements, and unitary evolutions.

Due to the fundamental roles of quantum channels and the entropy of a quantum
state, as highlighted above, it is thus natural to ask whether there is a
meaningful notion of the entropy of a quantum channel, i.e., a quantifier of
the uncertainty of a quantum channel. As far as we are aware, this question
has not been fully addressed in prior literature (see Remark~\ref{rem:other-ent-func} for further discussion), and it is the aim of the
present paper to provide a convincing notion of a quantum channel's entropy.
To define such a notion, we look to \eqref{eq:ent-rel-ent} for inspiration. As
such, we need generalizations of the quantum relative entropy and the
maximally mixed state to the setting of quantum channels:

\begin{enumerate}
\item The quantum relative entropy of channels $\mathcal{N}_{A\rightarrow B}$
and $\mathcal{M}_{A\rightarrow B}$ is defined as \cite{CMW16,LKDW18}
\begin{equation}
D(\mathcal{N}\Vert\mathcal{M})\equiv\sup_{\rho_{RA}}D(\mathcal{N}%
_{A\rightarrow B}(\rho_{RA})\Vert\mathcal{M}_{A\rightarrow B}(\rho_{RA})),
\label{eq:channel-rel-ent}%
\end{equation}
where the optimization is with respect to bipartite states $\rho_{RA}$ of a
reference system $R$ of arbitrary size and the channel input system~$A$. Due
to state purification, the data-processing inequality \cite{Lindblad1975}%
,\ and the Schmidt decomposition theorem, it suffices to optimize over states
$\rho_{RA}$ that are pure and such that system $R$ is isomorphic to system
$A$. This observation significantly reduces the complexity of computing the
channel relative entropy.

\item The channel that serves as a generalization of the maximally mixed state
is the channel $\mathcal{R}_{A\rightarrow B}$ that completely randomizes or
depolarizes the input state as follows:%
\begin{equation}
\mathcal{R}_{A\rightarrow B}(X_{A})=\operatorname{Tr}\{X_{A}\}\pi_{B},
\label{eq:cm-random-ch}%
\end{equation}
where $X_{A}$ is an arbitrary operator for system $A$. That is, its action is
to discard the input and replace with a maximally mixed state $\pi_{B}$.
\end{enumerate}

With these notions in place, we can now define the entropy of a quantum channel:

\begin{definition}
[Entropy of a quantum channel]\label{def:entropy-channel}Let $\mathcal{N}%
_{A\rightarrow B}$ be a quantum channel. Its entropy is defined as%
\begin{equation}
H(\mathcal{N})\equiv\log_{2}\left\vert B\right\vert -D(\mathcal{N}%
\Vert\mathcal{R}),
\end{equation}
where $D(\mathcal{N}\Vert\mathcal{R})$ is the channel relative entropy in
\eqref{eq:channel-rel-ent}\ and $\mathcal{R}_{A\rightarrow B}$ is the
completely randomizing channel in \eqref{eq:cm-random-ch}.
\end{definition}

We remark here that, in analogy to the operational interpretation for
$D(\rho_{A}\Vert\pi_{A})$ mentioned above, it is known that $D(\mathcal{N}%
\Vert\mathcal{R})$ is equal to the optimal rate at which the channel
$\mathcal{N}_{A\rightarrow B}$ can be distinguished from the completely
randomizing channel $\mathcal{R}_{A\rightarrow B}$, by allowing for any
possible quantum strategy to distinguish the channels \cite{CMW16}. Again,
this statement holds in the Stein setting of quantum hypothesis testing (see
\cite{CMW16} for details). We also emphasize here that the entropy of a channel can be negative for some channels, but this negativity has an operational interpretation in terms of the channel merging protocol (see Remark~\ref{rem:channel-entr-bounds-interp} in this context). 

The remainder of our paper contains arguments advocating for this definition
of a channel's entropy. In the next section, we show that it satisfies the
three basic axioms, put forward in \cite{G18}, for any function to be called
an entropy function for a quantum channel, including non-decrease under the
action of a random unitary superchannel, additivity, and normalization. After
that, we provide several alternate representations for the entropy of a
channel, the most significant of which is the completely bounded entropy of
\cite{DJKR06}. Section~\ref{sec:q-ch-merge} delivers an operational
interpretation of a channel's entropy in terms of an information-theoretic
task that we call quantum channel merging, which is a dynamical counterpart of
the well known task of quantum state merging \cite{HOW05,HOW07}. We calculate
channel entropies for several example channels in Section~\ref{sec:examples},
which include erasure, dephasing, depolarizing, and Werner--Holevo channels.
In the same section, we introduce the energy-constrained and unconstrained
entropies of a quantum channel and calculate them for thermal, amplifier, and
additive-noise bosonic Gaussian channels. In Section~\ref{sec:renyi-ent-ch},
we define the $\alpha$-R\'{e}nyi entropy of a channel, prove that it satisfies
the basic axioms for certain values of the R\'{e}nyi parameter$~\alpha$, and
provide alternate representations for it. In Section~\ref{sec:min-ent-ch}, we
define the min-entropy of a channel, establish that it satisfies the basic
axioms, and provide alternate representations for it. In Section~\ref{sec:AEP}%
, we define the smoothed min-entropy of a channel, and then we prove an
asymptotic equipartition property, which relates the smoothed min-entropy of a
channel to its entropy. In Section~\ref{sec:other-ents}, we discuss other
entropies of a channel, noting that several of them collapse to the (von
Neumann) entropy of a channel. We finally conclude in Section~\ref{sec:concl}
with a summary and some open questions.

\textit{Note on related work}---After completing the results in our related
preprint \cite{GW18}, we noticed \cite[Eq.~(6)]{Yuan2018}, in which Yuan
proposed to define the entropy of a quantum channel in the same way as we have
proposed in Definition~\ref{def:entropy-channel}. Yuan's work is now published
as \cite{PhysRevA.99.032317}.

\begin{remark}
\label{rem:other-ent-func}
We note here that \textquotedblleft the entropy of a channel\textquotedblright%
\ was also defined in \cite{RZF11,R11}, but the definition given there does
not satisfy \textquotedblleft reduction to states\textquotedblright\ or the
basic axiom of normalization. For this reason, it cannot be considered an
entropy function according to the approach of \cite{G18}.
\end{remark}

\section{Entropy of a quantum channel}

Proceeding with Definition~\ref{def:entropy-channel} for the entropy of a
quantum channel, we now establish several of its properties, and then we
provide alternate representations for it.

\subsection{Properties of the entropy of a quantum channel}

In \cite{G18}, it was advocated that a function of a quantum channel is an
entropy function if it satisfies\ non-decrease under random unitary
superchannels, additivity, and normalization. As shown in the next three
subsections, the entropy of a channel, as given in
Definition~\ref{def:entropy-channel}, satisfies all three axioms, and in fact,
it satisfies stronger properties that imply these.

\subsubsection{Non-decrease under the action of a uniformity preserving
superchannel}

Before addressing the first axiom, let us first briefly review the notion of
superchannels \cite{CDP08}, which are linear maps that take as input a quantum
channel and output a quantum channel. To define them, let $\mathcal{L}%
(A\rightarrow B)$ denote the set of all linear maps from $\mathcal{L}(A)$ to
$\mathcal{L}(B)$. Similarly, let $\mathcal{L}(C\rightarrow D)$ denote the set
of all linear maps from $\mathcal{L}(C)$ to $\mathcal{L}(D)$. Let
$\Theta:\mathcal{L}(A\rightarrow B)\rightarrow\mathcal{L}(C\rightarrow D)$
denote a linear supermap, taking $\mathcal{L}(A\rightarrow B)$ to
$\mathcal{L}(C\rightarrow D)$. A quantum channel is a particular kind of
linear map, and any linear supermap $\Theta$ that takes as input an arbitrary
quantum channel $\Psi_{A\rightarrow B}\in\mathcal{L}(A\rightarrow B)$ and is
required to output a quantum channel $\Phi_{C\rightarrow D}\in\mathcal{L}%
(C\rightarrow D)$ should preserve the properties of complete positivity (CP)
and trace preservation (TP). That is, the supermap should be CPTP preserving.
Furthermore, for the supermap to be physical, the same should be true when it
acts on subsystems of bipartite quantum channels, so that the supermap
$\operatorname{id}\otimes\Theta$ should be CPTP preserving, where
$\operatorname{id}$ represents an arbitrary identity supermap. A supermap
satisfying this property is said to be completely CPTP preserving and is then
called a superchannel. It was proven in \cite{CDP08} that any superchannel
$\Theta:\mathcal{L}(A\rightarrow B)\rightarrow\mathcal{L}(C\rightarrow D)$ can
be physically realized as follows. If
\begin{equation}
\Phi_{C\rightarrow D}=\Theta\lbrack\Psi_{A\rightarrow B}]
\end{equation}
for an arbitrary input channel $\Psi_{A\rightarrow B}\in\mathcal{L}%
(A\rightarrow B)$ and some output channel $\Phi_{C\rightarrow D}\in
\mathcal{L}(C\rightarrow D)$, then the physical realization of the
superchannel $\Theta$ is as follows:%
\begin{equation}
\Phi_{C\rightarrow D}=\Omega_{BE\rightarrow D}\circ\left(  \Psi_{A\rightarrow
B}\otimes\operatorname{id}_{E}\right)  \circ\Lambda_{C\rightarrow AE},
\label{eqn:superchannel}%
\end{equation}
where $\Lambda_{C\rightarrow AE}:\mathcal{L}(C)\rightarrow\mathcal{L}(AE)$ is
a pre-processing channel, system $E$ corresponds to some memory or environment
system, and $\Omega_{BE\rightarrow D}:\mathcal{L}(BE)\rightarrow
\mathcal{L}(D)$ is a post-processing channel.

A uniformity preserving superchannel $\Theta$ is a superchannel that takes the
completely randomizing channel $\mathcal{R}_{A\rightarrow B}$ in
\eqref{eq:cm-random-ch} to another completely randomizing channel
$\mathcal{R}_{C\rightarrow D}$, such that $\left\vert A\right\vert =\left\vert
C\right\vert $ and $\left\vert B\right\vert =\left\vert D\right\vert $, i.e.,%
\begin{equation}
\Theta(\mathcal{R}_{A\rightarrow B})=\mathcal{R}_{C\rightarrow D}.
\label{eq:unif-preserv}%
\end{equation}
For such superchannels, we have the following:

\begin{proposition}
\label{prop:monotone-unif-preserv}Let $\mathcal{N}_{A\rightarrow B}$ be a
quantum channel, and let $\Theta$ be a uniformity preserving superchannel as
defined above. Then the entropy of a channel does not decrease under the
action of such a superchannel:
\begin{equation}
H(\Theta(\mathcal{N}))\geq H(\mathcal{N}).
\end{equation}

\end{proposition}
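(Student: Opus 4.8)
The plan is to reduce the statement to a data-processing inequality for the channel relative entropy under superchannels. First I would unfold the definition: since $\Theta$ is uniformity preserving we have $\left\vert B\right\vert =\left\vert D\right\vert $, so the dimension terms $\log_{2}\left\vert B\right\vert $ and $\log_{2}\left\vert D\right\vert $ appearing in $H(\mathcal{N})$ and $H(\Theta(\mathcal{N}))$ coincide. Consequently the claimed inequality $H(\Theta(\mathcal{N}))\geq H(\mathcal{N})$ is equivalent to
\begin{equation}
D(\Theta(\mathcal{N})\Vert\mathcal{R}_{C\rightarrow D})\leq D(\mathcal{N}\Vert\mathcal{R}_{A\rightarrow B}).
\end{equation}
Using the defining property $\Theta(\mathcal{R}_{A\rightarrow B})=\mathcal{R}_{C\rightarrow D}$, it then suffices to establish the monotonicity
\begin{equation}
D(\Theta(\mathcal{N})\Vert\Theta(\mathcal{R}))\leq D(\mathcal{N}\Vert\mathcal{R}),
\end{equation}
which I would in fact prove for an arbitrary pair of input channels in place of $\mathcal{N}$ and $\mathcal{R}$.

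To prove this monotonicity, I would invoke the physical realization \eqref{eqn:superchannel}, writing $\Theta(\Psi)=\Omega_{BE\rightarrow D}\circ\left(  \Psi_{A\rightarrow B}\otimes\operatorname{id}_{E}\right)  \circ\Lambda_{C\rightarrow AE}$ with a common pre-processing $\Lambda$ and post-processing $\Omega$ for both channels. For an arbitrary input state $\rho_{R'C}$ I would set $\sigma_{R'AE}\equiv\Lambda_{C\rightarrow AE}(\rho_{R'C})$; the two output states are then $\Omega_{BE\rightarrow D}$ applied to $(\mathcal{N}_{A\rightarrow B}\otimes\operatorname{id}_{E})(\sigma_{R'AE})$ and to $(\mathcal{R}_{A\rightarrow B}\otimes\operatorname{id}_{E})(\sigma_{R'AE})$, respectively.

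The two key moves are now the following. First, I discard the post-processing using the ordinary data-processing inequality for the quantum relative entropy under the channel $\operatorname{id}_{R'}\otimes\Omega_{BE\rightarrow D}$, bounding the output relative entropy from above by $D((\mathcal{N}\otimes\operatorname{id}_{E})(\sigma_{R'AE})\Vert(\mathcal{R}\otimes\operatorname{id}_{E})(\sigma_{R'AE}))$. Second, I absorb the pre-processing by observing that $\sigma_{R'AE}$ is one particular bipartite state with channel-input system $A$ and enlarged reference system $R'E$, so this last quantity is bounded above by the supremum $D(\mathcal{N}\Vert\mathcal{R})$ defining the channel relative entropy in \eqref{eq:channel-rel-ent}. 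Taking the supremum over $\rho_{R'C}$ yields the monotonicity, and back-substitution then completes the argument.

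The main conceptual point, and the only place that requires care, is the second move: one must recognize that the definition \eqref{eq:channel-rel-ent} permits a reference system of arbitrary dimension, so the memory/environment system $E$ introduced by the superchannel can simply be folded into the reference, while the pre-processing $\Lambda$ is handled for free by the optimization rather than by a separate data-processing step. Everything else is a routine chain of (in)equalities, and the normalization $\left\vert B\right\vert =\left\vert D\right\vert $ guarantees that the dimensional offsets cancel exactly.
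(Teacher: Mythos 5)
Your proof is correct and follows the same skeleton as the paper's: both reduce the claim to the monotonicity $D(\Theta(\mathcal{N})\Vert\Theta(\mathcal{R}))\leq D(\mathcal{N}\Vert\mathcal{R})$ and then use the uniformity-preserving property $\Theta(\mathcal{R}_{A\rightarrow B})=\mathcal{R}_{C\rightarrow D}$ together with $\left\vert B\right\vert=\left\vert D\right\vert$ to cancel the dimension terms. The only difference is that the paper invokes the monotonicity of the channel relative entropy under superchannels as a cited result from \cite{G18,Yuan2018}, whereas you reprove it inline via the realization \eqref{eqn:superchannel}; your inline argument (data processing under the post-processing channel, then folding the memory system $E$ into the reference so that the pre-processed state becomes an admissible candidate in the supremum of \eqref{eq:channel-rel-ent}) is sound and is in fact the standard proof of that cited lemma, so your write-up is simply a more self-contained version of the same argument.
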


\begin{proof}
This follows from the fact that the channel relative entropy is non-increasing
under the action of an arbitrary superchannel \cite{G18,Yuan2018}. That is,
for two channels $\mathcal{N}_{A\rightarrow B}$ and $\mathcal{M}_{A\rightarrow
B}$, and a superchannel $\Xi$, the following inequality holds%
\begin{equation}
D(\mathcal{N}\Vert\mathcal{M})\geq D(\Xi(\mathcal{N})\Vert\Xi(\mathcal{M})).
\end{equation}
Applying this, we find that%
\begin{align}
H(\mathcal{N})  &  =\log_{2}\left\vert B\right\vert -D(\mathcal{N}%
\Vert\mathcal{R})\label{eq:uniform-pres-1}\\
&  \leq\log_{2}\left\vert B\right\vert -D(\Theta(\mathcal{N})\Vert
\Theta(\mathcal{R}))\\
&  =\log_{2}\left\vert B\right\vert -D(\Theta(\mathcal{N})\Vert\mathcal{R})\\
&  =\log_{2}\left\vert D\right\vert -D(\Theta(\mathcal{N})\Vert\mathcal{R})\\
&  =H(\Theta(\mathcal{N})). \label{eq:uniform-pres-last}%
\end{align}
The second equality follows by definition from \eqref{eq:unif-preserv}.
\end{proof}

\bigskip

In \cite{G18}, a superchannel $\Upsilon$ was called a random unitary
superchannel if its action on a channel $\mathcal{N}_{A\rightarrow B}$ can be
written as%
\begin{equation}
\Upsilon(\mathcal{N}_{A\rightarrow B})=\sum_{x}p_{X}(x)\mathcal{V}%
_{B\rightarrow D}^{x}\circ\mathcal{N}_{A\rightarrow B}\circ\mathcal{U}%
_{C\rightarrow A}^{x},
\end{equation}
where $\mathcal{U}_{C\rightarrow A}^{x}$ and $\mathcal{V}_{B\rightarrow D}%
^{x}$ are unitary channels and $p_{X}(x)$ is a probability distribution. In
\cite{G18}, it was proved that a random unitary superchannel is a special kind
of uniformity preserving superchannel. Thus, due to
Proposition~\ref{prop:monotone-unif-preserv}, it follows that the entropy of a
channel, as given in Definition~\ref{def:entropy-channel}, satisfies the first
axiom from \cite{G18} required for an entropy function.

\subsubsection{Additivity}

In this subsection, we prove that the entropy of a channel is additive, which
is the second axiom proposed in \cite{G18}\ for a channel entropy
function.\ The proof is related to many prior additivity results from
\cite{AC97,A04,DJKR06,CMW16,BHKW18}.

\begin{proposition}
[Additivity]\label{prop:entropy-additive}Let $\mathcal{N}$ and $\mathcal{M}$
be quantum channels. Then the channel entropy is additive in the following
sense:
\begin{equation}
H(\mathcal{N}\otimes\mathcal{M})=H(\mathcal{N})+H(\mathcal{M}).
\end{equation}

\end{proposition}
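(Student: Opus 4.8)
The plan is to prove additivity via the usual two-inequality strategy, reducing the claim about channel entropies to the additivity of the channel relative entropy $D(\mathcal{N}\Vert\mathcal{R})$. Using Definition~\ref{def:entropy-channel} and the fact that the completely randomizing channel tensorizes, $\mathcal{R}_{A_{1}A_{2}\rightarrow B_{1}B_{2}}=\mathcal{R}_{A_{1}\rightarrow B_{1}}\otimes\mathcal{R}_{A_{2}\rightarrow B_{2}}$, together with $\log_{2}|B_{1}B_{2}|=\log_{2}|B_{1}|+\log_{2}|B_{2}|$, the statement
\begin{equation}
H(\mathcal{N}\otimes\mathcal{M})=H(\mathcal{N})+H(\mathcal{M})
\end{equation}
is equivalent to
\begin{equation}
D(\mathcal{N}\otimes\mathcal{M}\Vert\mathcal{R}\otimes\mathcal{R})=D(\mathcal{N}\Vert\mathcal{R})+D(\mathcal{M}\Vert\mathcal{R}).
\end{equation}
So first I would state this reduction and then spend the rest of the proof establishing the additivity of the channel relative entropy against the (tensor-product) randomizing channel.

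The easy direction is superadditivity, $D(\mathcal{N}\otimes\mathcal{M}\Vert\mathcal{R}\otimes\mathcal{R})\geq D(\mathcal{N}\Vert\mathcal{R})+D(\mathcal{M}\Vert\mathcal{R})$. For this I would take near-optimal pure input states $\psi_{R_{1}A_{1}}$ and $\phi_{R_{2}A_{2}}$ for the two individual channel relative entropies and plug the product state $\psi_{R_{1}A_{1}}\otimes\phi_{R_{2}A_{2}}$ into the definition \eqref{eq:channel-rel-ent} for the joint channel; since the state relative entropy is additive on tensor-product arguments, the resulting value is exactly $D(\mathcal{N}\Vert\mathcal{R})+D(\mathcal{M}\Vert\mathcal{R})$, and the supremum over all joint inputs can only be larger. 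This uses nothing beyond the additivity of the von Neumann relative entropy on product states.

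The subadditivity direction, $D(\mathcal{N}\otimes\mathcal{M}\Vert\mathcal{R}\otimes\mathcal{R})\leq D(\mathcal{N}\Vert\mathcal{R})+D(\mathcal{M}\Vert\mathcal{R})$, is the main obstacle, since an optimal input $\rho_{RA_{1}A_{2}}$ to the joint channel may be entangled across the two channel inputs and need not factorize. The plan is to exploit the special structure of the reference channel: because $\mathcal{R}$ discards its input and outputs the maximally mixed state, $\log_{2}\sigma_{B}$ for the second argument is a fixed operator (proportional to identity on the output), and this lets one use a chain-rule or variational characterization to peel off one channel at a time. Concretely, I would invoke a variational/additivity argument of the type in \cite{DJKR06,CMW16} (or equivalently the fact that $D(\mathcal{N}\Vert\mathcal{R})=\max_{\rho_{A}}D\big(\mathcal{N}(\phi_{\rho})\big\Vert\pi_{B}\otimes\rho_{R}^{T}\big)$-style reductions where the second-argument operator factorizes), applying data processing under tracing out one system together with the additivity of the relative entropy against a fixed product operator to bound the joint quantity by the sum. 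The crux is confirming that the completely randomizing channel's output being a fixed product state removes the cross-correlation obstruction, so that any entanglement in the optimal input provides no advantage; this is exactly where the choice of $\mathcal{R}$ as the reference makes the argument go through, and I would make sure to cite the additivity results listed before the proposition as the technical engine.
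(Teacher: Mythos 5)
Your proposal is correct and takes essentially the same route as the paper: the identical reduction of channel-entropy additivity to $D(\mathcal{N}\otimes\mathcal{M}\Vert\mathcal{R}\otimes\mathcal{R})=D(\mathcal{N}\Vert\mathcal{R})+D(\mathcal{M}\Vert\mathcal{R})$, the same trivial superadditivity via product inputs, and the same mechanism for the hard direction, namely that the randomizing channel's fixed product output lets one peel off one channel at a time with data processing, deferring the technical step to the additivity engine of \cite{DJKR06,CMW16,BHKW18}. The paper instantiates that peeling concretely via the inequality $D(\mathcal{N}(\rho)\Vert\mathcal{R}(\sigma))\leq D(\mathcal{N}(\rho)\Vert\mathcal{R}(\rho))+D(\rho\Vert\sigma)$ from \cite[Lemma~38]{BHKW18}, and explicitly notes your variational/conditional-entropy variant (via Proposition~\ref{prop:alt-reps-ent-ch} and \cite[Eq.~(3.28)]{AC97}) as an equivalent alternative, so your sketch matches the paper's argument in substance.
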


\begin{proof}
This can be understood as a consequence of the additivity results from
\cite{CMW16,BHKW18}, which in turn are related to the earlier additivity
results from \cite{AC97,A04,DJKR06}. For channels $\mathcal{N}_{A_{1}%
\rightarrow B_{1}}$ and $\mathcal{M}_{A_{2}\rightarrow B_{2}}$, and
corresponding randomizing channels $\mathcal{R}_{A_{1}\rightarrow B_{1}}%
^{(1)}$ and $\mathcal{R}_{A_{2}\rightarrow B_{2}}^{(2)}$, we have by
definition that%
\begin{align}
&  H(\mathcal{N}\otimes\mathcal{M})\nonumber\\
&  =\log_{2}(\left\vert B_{1}\right\vert \left\vert B_{2}\right\vert
)-D(\mathcal{N}\otimes\mathcal{M}\Vert\mathcal{R}^{(1)}\otimes\mathcal{R}%
^{(2)})\\
&  =\log_{2}\left\vert B_{1}\right\vert +\log_{2}\left\vert B_{2}\right\vert
-D(\mathcal{N}\otimes\mathcal{M}\Vert\mathcal{R}^{(1)}\otimes\mathcal{R}%
^{(2)}),
\end{align}
and so the result follows if%
\begin{equation}
D(\mathcal{N}\otimes\mathcal{M}\Vert\mathcal{R}^{(1)}\otimes\mathcal{R}%
^{(2)})=D(\mathcal{N}\Vert\mathcal{R}^{(1)})+D(\mathcal{M}\Vert\mathcal{R}%
^{(2)}). \label{eq:add-ch-rel-ent}%
\end{equation}
Note that the inequality \textquotedblleft$\geq$\textquotedblright\ for
\eqref{eq:add-ch-rel-ent}\ trivially follows, and so it remains to prove the
inequality \textquotedblleft$\leq$\textquotedblright\ for
\eqref{eq:add-ch-rel-ent}. To this end, let $\psi_{RA_{1}A_{2}}$ be an
arbitrary pure state, and define%
\begin{align}
\rho_{R^{\prime}A_{1}}  &  \equiv\mathcal{M}_{A_{2}\rightarrow B_{2}}%
(\psi_{RA_{1}A_{2}}),\\
\sigma_{R^{\prime}A_{1}}  &  \equiv\mathcal{R}_{A_{2}\rightarrow B_{2}}%
(\psi_{RA_{1}A_{2}}),
\end{align}
where system $R^{\prime}\equiv RB_{2}$. Then we find that\begin{widetext}
\begin{align}
&  D((\mathcal{N}_{A_{1}\rightarrow B_{1}}\otimes\mathcal{M}_{A_{2}\rightarrow
B_{2}})(\psi_{RA_{1}A_{2}})\Vert(\mathcal{R}_{A_{1}\rightarrow B_{1}}%
\otimes\mathcal{R}_{A_{2}\rightarrow B_{2}})(\psi_{RA_{1}A_{2}}))\nonumber\\
&  =D(\mathcal{N}_{A_{1}\rightarrow B_{1}}(\rho_{R^{\prime}A_{1}}%
)\Vert\mathcal{R}_{A_{1}\rightarrow B_{1}}(\sigma_{R^{\prime}A_{1}%
}))\label{eq:addit-1}\\
&  \leq D(\mathcal{N}_{A_{1}\rightarrow B_{1}}(\rho_{R^{\prime}A_{1}}%
)\Vert\mathcal{R}_{A_{1}\rightarrow B_{1}}(\rho_{R^{\prime}A_{1}}%
))+D(\rho_{R^{\prime}A_{1}}\Vert\sigma_{R^{\prime}A_{1}})\\
&  =D(\mathcal{N}_{A_{1}\rightarrow B_{1}}(\rho_{R^{\prime}A_{1}}%
)\Vert\mathcal{R}_{A_{1}\rightarrow B_{1}}(\rho_{R^{\prime}A_{1}%
}))+D(\mathcal{M}_{A_{2}\rightarrow B_{2}}(\psi_{RA_{1}A_{2}})\Vert
\mathcal{R}_{A_{2}\rightarrow B_{2}}(\psi_{RA_{1}A_{2}}))\\
&  \leq\sup_{\rho_{R^{\prime}A_{1}}}D(\mathcal{N}_{A_{1}\rightarrow B_{1}%
}(\rho_{R^{\prime}A_{1}})\Vert\mathcal{R}_{A_{1}\rightarrow B_{1}}%
(\rho_{R^{\prime}A_{1}}))\nonumber\\
&  \qquad+\sup_{\psi_{RA_{1}A_{2}}}D(\mathcal{M}_{A_{2}\rightarrow B_{2}}%
(\psi_{RA_{1}A_{2}})\Vert\mathcal{R}_{A_{2}\rightarrow B_{2}}(\psi
_{RA_{1}A_{2}}))\\
&  =D(\mathcal{N}_{A_{1}\rightarrow B_{1}}\Vert\mathcal{R}_{A_{1}\rightarrow
B_{1}})+D(\mathcal{M}_{A_{2}\rightarrow B_{2}}\Vert\mathcal{R}_{A_{2}%
\rightarrow B_{2}}). \label{eq:addit-last}%
\end{align}
\end{widetext}
The first inequality follows from the same steps given in the proof of
\cite[Lemma~38]{BHKW18}. This concludes the proof.
\end{proof}

\bigskip

Another approach to establishing additivity is to employ the first identity of
Proposition~\ref{prop:alt-reps-ent-ch} (in Section~\ref{sec:alt-reps-ent}) and
\cite[Eq.~(3.28)]{AC97}, the latter of which was independently formulated in
\cite[Section~2.3]{DJKR06}.

\subsubsection{Reduction to states and normalization}

We now prove that the entropy of a channel reduces to the entropy of a state
if the channel is one that replaces the input with a given state.

\begin{proposition}
[Reduction to states]\label{prop:reduction-to-states}Let the channel
$\mathcal{N}_{A\rightarrow B}$ be a replacer channel, defined such that
$\mathcal{N}_{A\rightarrow B}(\rho_{A})=\sigma_{B}$ for all states $\rho_{A}$
and some state $\sigma_{B}$. Then the following equality holds%
\begin{equation}
H(\mathcal{N})=H(B)_{\sigma}.
\end{equation}

\end{proposition}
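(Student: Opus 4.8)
The plan is to reduce the claim to a statement purely about the channel relative entropy. By the state formula \eqref{eq:ent-rel-ent}, the target quantity is $H(B)_{\sigma}=\log_{2}\left\vert B\right\vert -D(\sigma_{B}\Vert\pi_{B})$, while Definition~\ref{def:entropy-channel} gives $H(\mathcal{N})=\log_{2}\left\vert B\right\vert -D(\mathcal{N}\Vert\mathcal{R})$. Since the dimensional terms $\log_{2}\left\vert B\right\vert$ agree on both sides, it suffices to establish that $D(\mathcal{N}\Vert\mathcal{R})=D(\sigma_{B}\Vert\pi_{B})$.

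First I would compute the action of each channel on an arbitrary bipartite input $\rho_{RA}$. Because $\mathcal{N}_{A\rightarrow B}$ is a replacer channel, it discards system $A$ and prepares $\sigma_{B}$, so that $\mathcal{N}_{A\rightarrow B}(\rho_{RA})=\rho_{R}\otimes\sigma_{B}$; likewise the completely randomizing channel in \eqref{eq:cm-random-ch} yields $\mathcal{R}_{A\rightarrow B}(\rho_{RA})=\rho_{R}\otimes\pi_{B}$, where in both expressions $\rho_{R}=\operatorname{Tr}_{A}\{\rho_{RA}\}$ is the same reduced operator. Next I would invoke additivity of the quantum relative entropy under tensor products, namely $D(\rho_{R}\otimes\sigma_{B}\Vert\rho_{R}\otimes\pi_{B})=D(\rho_{R}\Vert\rho_{R})+D(\sigma_{B}\Vert\pi_{B})=D(\sigma_{B}\Vert\pi_{B})$, using that $D(\rho_{R}\Vert\rho_{R})=0$. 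The crucial feature is that this value does not depend on the input state $\rho_{RA}$, so the supremum defining the channel relative entropy in \eqref{eq:channel-rel-ent} is trivial and equals $D(\sigma_{B}\Vert\pi_{B})$. Combining this with the first step gives $H(\mathcal{N})=H(B)_{\sigma}$.

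I do not anticipate a genuine obstacle here: the whole argument rests on the observation that a replacer channel completely decouples the reference from the output, producing a product state whose reference factor is identical on both sides of the relative entropy and therefore contributes nothing. The only point requiring a small amount of care is confirming that the reference marginal $\rho_{R}$ appearing in the two outputs is literally the same operator — which holds because both channels act as the full trace on system $A$ — so that the tensor factorization and the cancellation $D(\rho_{R}\Vert\rho_{R})=0$ go through cleanly.
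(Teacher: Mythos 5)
Your proposal is correct and follows essentially the same route as the paper's own proof: both observe that the replacer and completely randomizing channels produce outputs $\rho_{R}\otimes\sigma_{B}$ and $\rho_{R}\otimes\pi_{B}$ with an identical reference factor, so that $D(\rho_{R}\otimes\sigma_{B}\Vert\rho_{R}\otimes\pi_{B})=D(\sigma_{B}\Vert\pi_{B})$ independently of the input, making the supremum in \eqref{eq:channel-rel-ent} trivial. The only cosmetic difference is that the paper evaluates on pure inputs $\psi_{RA}$ (sufficient by the remark after \eqref{eq:channel-rel-ent}), whereas you work directly with arbitrary $\rho_{RA}$, which changes nothing.
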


\begin{proof}
For any input $\psi_{RA}$, the output is $\mathcal{N}_{A\rightarrow B}%
(\psi_{RA})=\psi_{R}\otimes\sigma_{B}$, and we find that%
\begin{align}
D(\mathcal{N}_{A\rightarrow B}(\psi_{RA})\Vert\mathcal{R}_{A\rightarrow
B}(\psi_{RA}))  &  =D(\psi_{R}\otimes\sigma_{B}\Vert\psi_{R}\otimes\pi
_{B})\nonumber\\
&  =D(\sigma_{B}\Vert\pi_{B}).
\end{align}
This implies that%
\begin{align}
H(\mathcal{N})  &  =\log_{2}\left\vert B\right\vert -D(\mathcal{N}%
\Vert\mathcal{R})\\
&  =\log\left\vert B\right\vert -D(\sigma_{B}\Vert\pi_{B})\\
& =H(B)_{\sigma},
\end{align}
concluding the proof.
\end{proof}

\bigskip

A final axiom (normalization) for a channel entropy function \cite{G18}\ is
that it should be equal to zero for any channel that replaces the input with a
pure state and it should be equal to the logarithm of the output dimension for
any channel that replaces the input with the maximally mixed state. Clearly,
Proposition~\ref{prop:reduction-to-states}\ implies the normalization property
if the replaced state is maximally mixed or pure.

\subsection{Alternate representations for the entropy of a channel}

\label{sec:alt-reps-ent}

The entropy of a quantum channel has at least three alternate representations,
in terms of the completely bounded entropy of \cite{DJKR06}, the entropy gain
of its complementary channel \cite{A04}, and the maximum output entropy of the
channel conditioned on its environment. We recall these various channel
functions now.

Recall that the completely bounded entropy of a quantum channel $\mathcal{N}%
_{A\rightarrow B}$\ is defined as \cite{DJKR06}%
\begin{equation}
H_{\text{CB},\min}(\mathcal{N})\equiv\inf_{\rho_{RA}}H(B|R)_{\omega},
\end{equation}
where $H(B|R)_{\omega}\equiv H(BR)_{\omega}-H(R)_{\omega}$ is the conditional
entropy of the state $\omega_{RB}=\mathcal{N}_{A\rightarrow B}(\rho_{RA})$ and
the system $R$ is unbounded. However, due to data processing, purification,
and the Schmidt decomposition theorem, it follows that%
\begin{equation}
H_{\text{CB},\min}(\mathcal{N})=\inf_{\psi_{RA}}H(B|R)_{\omega},
\label{eq:cb-min}%
\end{equation}
where $\psi_{RA}$ is a pure bipartite state with system $R$ isomorphic to the
channel input system $A$.

Due to the Stinespring representation theorem \cite{S55}, every channel
$\mathcal{N}_{A\rightarrow B}$ can be realized by the action of an isometric
channel $\mathcal{U}_{A\rightarrow BE}^{\mathcal{N}}$ and a partial trace as
follows:%
\begin{equation}
\mathcal{N}_{A\rightarrow B}=\operatorname{Tr}_{E}\circ\mathcal{U}%
_{A\rightarrow BE}^{\mathcal{N}}. \label{eq:iso-extend}%
\end{equation}
If we instead trace over the channel output $B$, this realizes a complementary
channel of $\mathcal{N}_{A\rightarrow B}$:%
\begin{equation}
\mathcal{N}_{A\rightarrow E}^{c}\equiv\operatorname{Tr}_{B}\circ
\mathcal{U}_{A\rightarrow BE}^{\mathcal{N}}. \label{eq:comp-chan}%
\end{equation}
Using these notions, we can define the entropy gain of a complementary channel
of $\mathcal{N}_{A\rightarrow B}$ as follows \cite{A04}:%
\begin{equation}
G(\mathcal{N}_{A\rightarrow E}^{c})\equiv\inf_{\rho_{A}}\left[  H(E)_{\tau
}-H(A)_{\rho}\right]  ,
\end{equation}
where $\tau_{BE}\equiv\mathcal{U}_{A\rightarrow BE}^{\mathcal{N}}(\rho_{A})$.
The entropy gain has been investigated for infinite-dimensional quantum
systems in \cite{Holevo2010,Holevo2011,H11ISIT}. We can also define the
maximum output entropy of the channel conditioned on its environment as%
\begin{equation}
\sup_{\rho_{A}}H(B|E)_{\tau},
\end{equation}
where again $\tau_{BE}\equiv\mathcal{U}_{A\rightarrow BE}^{\mathcal{N}}%
(\rho_{A})$.

We now prove that the entropy of a channel, as given in
Definition~\ref{def:entropy-channel}, is equal to the completely bounded
entropy, the entropy gain of a complementary channel, and the negation of the
maximum output entropy of the channel conditioned on its environment.

\begin{proposition}
\label{prop:alt-reps-ent-ch}Let $\mathcal{N}_{A\rightarrow B}$ be a quantum
channel, and let $\mathcal{U}_{A\rightarrow BE}^{\mathcal{N}}$ be an isometric
channel extending it, as in \eqref{eq:iso-extend}. Then%
\begin{equation}
H(\mathcal{N})=H_{\operatorname{CB},\min}(\mathcal{N})=G(\mathcal{N}%
_{A\rightarrow E}^{c})=-\sup_{\rho_{A}}H(B|E)_{\tau},
\label{eq:von-neu-collapse}%
\end{equation}
where $\tau_{BE}\equiv\mathcal{U}_{A\rightarrow BE}^{\mathcal{N}}(\rho_{A}%
)$.\ It then follows that%
\begin{equation}
\left\vert H(\mathcal{N})\right\vert \leq\log_{2}\left\vert B\right\vert .
\label{eq:dim-bnd-ent-ch}%
\end{equation}

\end{proposition}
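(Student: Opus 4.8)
The plan is to establish the three equalities in \eqref{eq:von-neu-collapse} as a short chain of conditional-entropy identities and then to read off \eqref{eq:dim-bnd-ent-ch} from the last representation. Throughout I would use the reduction to pure states $\psi_{RA}$ with $R\cong A$ recorded at \eqref{eq:cb-min} (and established analogously for the channel relative entropy earlier); applying the isometric channel $\mathcal{U}_{A\rightarrow BE}^{\mathcal{N}}$ then produces a pure state $\psi_{RBE}$ on which all of the relevant entropies can be compared directly.

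First I would show $H(\mathcal{N})=H_{\operatorname{CB},\min}(\mathcal{N})$. For a pure input $\psi_{RA}$ the completely randomizing channel acts as $\mathcal{R}_{A\rightarrow B}(\psi_{RA})=\psi_R\otimes\pi_B$, so with $\omega_{RB}=\mathcal{N}_{A\rightarrow B}(\psi_{RA})$ and $\omega_R=\psi_R$ I would expand the relative entropy, using $\log_2\pi_B=-\log_2\left\vert B\right\vert\,I_B$, to obtain
\begin{align}
D(\omega_{RB}\Vert\psi_R\otimes\pi_B)
&=-H(RB)_\omega-\operatorname{Tr}\{\omega_R\log_2\psi_R\}+\log_2\left\vert B\right\vert\notag\\
&=\log_2\left\vert B\right\vert-H(B|R)_\omega.
\end{align}
Taking the supremum over $\psi_{RA}$ turns this into the infimum defining $H_{\operatorname{CB},\min}$, so that $D(\mathcal{N}\Vert\mathcal{R})=\log_2\left\vert B\right\vert-H_{\operatorname{CB},\min}(\mathcal{N})$, and the first equality then follows directly from Definition~\ref{def:entropy-channel}.

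Next I would pass to the purification $\psi_{RBE}$. Purity gives $H(B|R)_\psi=H(RB)_\psi-H(R)_\psi=H(E)_\psi-H(BE)_\psi=-H(B|E)_\psi$, and since $H(B|E)_\tau$ depends only on the reduced input $\rho_A$, optimizing over pure $\psi_{RA}$ coincides with optimizing over $\rho_A$; this yields $H_{\operatorname{CB},\min}(\mathcal{N})=-\sup_{\rho_A}H(B|E)_\tau$. For the entropy gain I would use that the isometry preserves the spectrum, so $H(BE)_\tau=H(A)_\rho$ and hence $H(B|E)_\tau=H(BE)_\tau-H(E)_\tau=H(A)_\rho-H(E)_\tau$. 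Negating and taking the supremum gives $-\sup_{\rho_A}H(B|E)_\tau=\inf_{\rho_A}[H(E)_\tau-H(A)_\rho]=G(\mathcal{N}_{A\rightarrow E}^{c})$, completing \eqref{eq:von-neu-collapse}.

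Finally, \eqref{eq:dim-bnd-ent-ch} is immediate from the representation $H(\mathcal{N})=-\sup_{\rho_A}H(B|E)_\tau$ together with the standard bound $\left\vert H(B|E)_\tau\right\vert\leq\log_2\left\vert B\right\vert$, valid for every state and in particular for the optimizer. I do not expect a genuine obstacle, as each step is a routine relative- or conditional-entropy identity. The only point requiring care is the bookkeeping of marginals and optimization domains---confirming that $\mathcal{R}$ acts as $\psi_R\otimes\pi_B$ on a pure input, that $\omega_R=\psi_R$, and that the pure-state optimization over $\psi_{RA}$ genuinely matches the input-state optimizations over $\rho_A$ appearing in $G(\mathcal{N}_{A\rightarrow E}^{c})$ and in $\sup_{\rho_A}H(B|E)_\tau$.
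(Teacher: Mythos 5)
Your proof is correct and follows essentially the same route as the paper's: both rewrite $D(\mathcal{N}\Vert\mathcal{R})$ as $\log_{2}\left\vert B\right\vert-\inf_{\psi_{RA}}H(B|R)_{\omega}$ (your direct expansion of the relative entropy is equivalent to the paper's use of $D(\rho\Vert c\sigma)=D(\rho\Vert\sigma)-\log_{2}c$), then exploit purity of $\psi_{RBE}$ to obtain the entropy-gain and $-\sup_{\rho_{A}}H(B|E)_{\tau}$ representations, and read off \eqref{eq:dim-bnd-ent-ch} from the standard conditional-entropy dimension bound. The only cosmetic difference is ordering: you reach $-\sup_{\rho_{A}}H(B|E)_{\tau}$ first via duality and then get $G(\mathcal{N}_{A\rightarrow E}^{c})$ from spectrum preservation under the isometry, whereas the paper derives $G(\mathcal{N}_{A\rightarrow E}^{c})$ from $H(B|R)_{\tau}=H(E)_{\tau}-H(A)_{\rho}$ and invokes duality last — the same identities in a different order.
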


\begin{proof}
Using the identity $D(\rho\Vert c\sigma)=D(\rho\Vert\sigma)-\log_{2}c$, for a
constant $c>0$, and the fact that the conditional entropy $H(B|R)_{\mathcal{N}%
(\psi)}=-D(\mathcal{N}_{A\rightarrow B}(\psi_{RA})\Vert\psi_{R}\otimes I_{B}%
)$, we find that%
\begin{align}
H(\mathcal{N})  &  =\log_{2}\left\vert B\right\vert -D(\mathcal{N}%
\Vert\mathcal{R})\label{eq:ent-to-CB-ent-1}\\
&  =\log_{2}\left\vert B\right\vert -\sup_{\psi_{RA}}D(\mathcal{N}%
_{A\rightarrow B}(\psi_{RA})\Vert\mathcal{R}_{A\rightarrow B}(\psi_{RA}))\\
&  =\log_{2}\left\vert B\right\vert -\sup_{\psi_{RA}}D(\mathcal{N}%
_{A\rightarrow B}(\psi_{RA})\Vert\psi_{R}\otimes\pi_{B})\\
&  =-\sup_{\psi_{RA}}D(\mathcal{N}_{A\rightarrow B}(\psi_{RA})\Vert\psi
_{R}\otimes I_{B})\\
&  =\inf_{\psi_{RA}}H(B|R)_{\mathcal{N}(\psi)}\label{eq:ent-to-CB-ent-last}\\
&  =H_{\operatorname{CB},\min}(\mathcal{N}).
\end{align}
We can then conclude the dimension bound in \eqref{eq:dim-bnd-ent-ch} from the
fact that it holds uniformly for the conditional entropy $\left\vert
H(B|R)\right\vert \leq\log_{2}\left\vert B\right\vert $. Defining $\tau
_{RBE}=\mathcal{U}_{A\rightarrow BE}^{\mathcal{N}}(\psi_{RA})$, from the
identity
\begin{equation}
H(B|R)_{\tau}=H(BR)_{\tau}-H(R)_{\tau}=H(E)_{\tau}-H(A)_{\rho},
\end{equation}
for $\rho_{A}=\operatorname{Tr}_{R}\{\psi_{RA}\}$, and where we used $\tau
_{R}=\psi_{R}$, we have that%
\begin{equation}
H(\mathcal{N})=G(\mathcal{N}_{A\rightarrow E}^{c})\equiv\inf_{\rho_{A}}\left[
H(E)_{\tau}-H(A)_{\rho}\right]  .
\end{equation}
We finally conclude that%
\begin{equation}
H(\mathcal{N})=-\sup_{\rho_{A}}H(B|E)_{\tau},
\label{eq:duality-for-channel-entropy}%
\end{equation}
which follows from the identity (duality of conditional entropy)%
\begin{equation}
H(B|R)_{\omega}=-H(B|E)_{\mathcal{U}(\psi_{A})}.
\end{equation}
This concludes the proof.
\end{proof}

\bigskip

\begin{remark}
\label{rem:channel-entr-bounds-interp}
We note here, as observed in \cite{DJKR06}, that the dimension lower bound
$H(\mathcal{N})\geq-\log_{2}\left\vert B\right\vert $ is saturated by the
identity channel, while the dimension upper bound $H(\mathcal{N})\leq\log
_{2}\left\vert B\right\vert $\ is saturated for the completely randomizing
(depolarizing) channel, which sends every state to the maximally mixed state.
Also, the entropy $H(\mathcal{N})$ is equal to zero for a replacer channel
that replaces the input with a pure quantum state. It is also known that the entropy of a channel is non-negative for all entanglement-breaking channels, as shown in \cite{DJKR06}. This includes all classical channels.

Thus, unlike entropy of a quantum state, the entropy of a quantum channel can
be negative. This negativity captures the ability of the channel to distill
quantum entanglement, in a sense made precise by the quantum channel merging
theorem stated as Theorem~\ref{thm:opt-int-ent-ch} in
Section~\ref{sec:q-ch-merge}. In the previous subsection we saw that for a
replacer channel with pure output state, the entropy of a channel is zero.
This replacer channel is also entanglement breaking. On the other hand, the
identity channel is the least noisy channel, and therefore should have the
least entropy possible. Indeed, as stated above, for the identity channel, our
entropy function equals the negative of the logarithm of the dimension (which
is the smallest possible value).
\end{remark}

\begin{corollary}
For any quantum channel $\mathcal{N}_{A\to B}$
\begin{equation}
H(\mathcal{N})\geq-\log|A|\;.
\end{equation}
with equality if and only if $\mathcal{N}_{A\to B}$ is an isometry.
\end{corollary}

\begin{proof}
The proof that $-\log|A|$ is the smallest possible value follows trivially
from the well known bound $D\big(\mathcal{N}_{A\to B}(\psi_{RA}%
)\big\|\mathcal{R}_{A\to B}(\psi_{RA})\big)\leq\log|AB|$. Now, from the
proposition above
\begin{align}
H(\mathcal{N})=\inf_{\psi_{RA}}H(B|R)_{\mathcal{N}(\psi)}\;.
\end{align}
Therefore, the smallest possible value $-\log|A|$ is achieved if and only if
$\mathcal{N}_{A\to B}(\psi_{RA})$ is the maximally entangled state (recall
$|R|=|A|$). This is only possible if $|B|\geq|A|$ and $\mathcal{N}$ is an isometry.
\end{proof}

\section{Quantum channel merging}

\label{sec:q-ch-merge}

Given a bipartite state $\rho_{BE}$, the goal of quantum state merging is for
Bob to use forward classical communication to Eve, as well as entanglement, to
merge his share of the state with Eve's share \cite{HOW05,HOW07}. The optimal
rate of entanglement consumed is equal to the conditional
entropy$~H(B|E)_{\rho}$. Alternatively, the optimal rate of entanglement
\textit{gained} is equal to the conditional entropy $H(B|R)_{\psi}$, where
$\psi_{RBE}$ is a purification of $\rho_{BE}$.

In this section, we define a task, called \textit{quantum channel merging},
that can be considered a dynamical counterpart of state merging. Given a
quantum channel $\mathcal{N}_{A\rightarrow B}$ with isometric extension
$\mathcal{U}_{A\rightarrow BE}^{\mathcal{N}}$, the goal is for Bob to merge
his share of the channel with Eve's share. We find here that the entanglement
cost of the protocol is equal to $\sup_{\rho_{A}}H(B|E)_{\omega}$, where
$\omega_{BE}=\mathcal{U}_{A\rightarrow BE}^{\mathcal{N}}(\rho_{A})$.
Equivalently, by employing \eqref{eq:duality-for-channel-entropy}, the
entanglement gain of the protocol is equal to $H(\mathcal{N})$, the entropy of
the channel $\mathcal{N}_{A\rightarrow B}$. Thus, the main result of this
section is a direct operational interpretation of the entropy of a channel as
the entanglement gain in quantum channel merging. We note here that the
completely bounded entropy of \cite{DJKR06}\ (i.e., entropy of a channel)\ was
recently interpreted in terms of a cryptographic task in \cite{YHW18}.

We now specify the quantum channel merging information-processing task in
detail. Let $\mathcal{N}_{A\rightarrow B}$ be a quantum channel, and suppose
that $\mathcal{U}_{A\rightarrow BE}^{\mathcal{N}}$ is an isometric channel
extending it. Here, we think of the isometric channel $\mathcal{U}%
_{A\rightarrow BE}^{\mathcal{N}}$ as a broadcast channel (three-terminal
device), which connects a source to the receivers Bob and Eve. Suppose that a
source generates an arbitrary state $\psi_{RA^{n}}$ and then sends the $A$
systems through the isometric channel $(\mathcal{U}_{A\rightarrow
BE}^{\mathcal{N}})^{\otimes n}$, which transmits the $B$ systems to Bob and
the $E$ systems to Eve. The goal is for Bob to use free one-way local
operations and classical communication (one-way LOCC) in order to generate
ebits at the maximum rate possible, while also merging his systems with Eve's.

\begin{figure*}
\begin{center}
\includegraphics[
width=6in
]{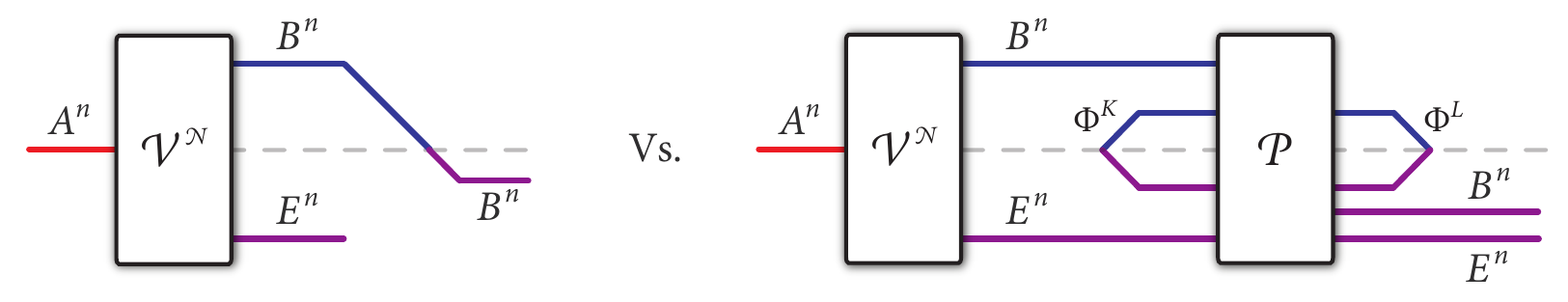}
\end{center}
\caption{The goal of quantum channel merging is for Bob to merge his share of
the channel with Eve's. Given a channel $\mathcal{N}_{A\rightarrow B}$, let
$\mathcal{V}^{\mathcal{N}}\equiv(\mathcal{U}_{A\rightarrow BE}^{\mathcal{N}%
})^{\otimes n}$, where $\mathcal{U}_{A\rightarrow BE}^{\mathcal{N}}$ is an
isometric channel extending $\mathcal{N}_{A\rightarrow B}$. By consuming a
maximally entangled state $\Phi^{K}$ of Schmidt rank $K$ and applying a
one-way LOCC\ protocol $\mathcal{P\,}$, Bob and Eve can distill a maximally
entangled state $\Phi^{L}$ of Schmidt rank $L$ and transfer Bob's systems
$B^{n}$ to Eve, in such a way that any third party having access to the inputs
$A^{n}$ and the outputs $B^{n}$ and $E^{n}$ would not be able to distinguish
the difference between the ideal situation on the left and the simulation on
the right. Theorem~\ref{thm:opt-int-ent-ch} states that the optimal asymptotic
rate of entanglement gain is equal to the entropy of the channel $\mathcal{N}%
$.}%
\label{fig:ch-merge}%
\end{figure*}

Let $n\in\mathbb{N}$, $M\in\mathbb{Q}$, and $\varepsilon\in\left[  0,1\right]
$. An $(n,M,\varepsilon)$ protocol for this task consists of a one-way
LOCC\ channel $\mathcal{P}_{B^{n}E^{n}\overline{B}_{0}\overline{E}%
_{0}\rightarrow\widetilde{B}_{E}^{n}\widetilde{E}^{n}\overline{B}_{1}%
\overline{E}_{1}}$ such that\begin{widetext}
\begin{multline}
\sup_{\psi_{RA^{n}}}\frac{1}{2}\bigg\Vert[\operatorname{id}_{BE\rightarrow
\widetilde{B}_{E}\widetilde{E}}^{\otimes n}\circ(\mathcal{U}_{A\rightarrow
BE}^{\mathcal{N}})^{\otimes n}](\psi_{RA^{n}})\otimes\Phi_{\overline{B}%
_{1}\overline{E}_{1}}^{L}\label{eq:criterion-CM}\\
-\mathcal{P}_{B^{n}E^{n}\overline{B}_{0}\overline{E}_{0}\rightarrow
\widetilde{B}_{E}^{n}\widetilde{E}^{n}\overline{B}_{1}\overline{E}_{1}%
}\!\left(  [(\mathcal{U}_{A\rightarrow BE}^{\mathcal{N}})^{\otimes n}%
(\psi_{RA^{n}})]\otimes\Phi_{\overline{B}_{0}\overline{E}_{0}}^{K}\right)
\bigg\Vert_{1}\leq\varepsilon,
\end{multline}
\end{widetext}
where $\Phi_{\overline{B}_{0}\overline{E}_{0}}^{K}$ and $\Phi_{\overline
{B}_{1}\overline{E}_{1}}^{L}$ are maximally entangled states of Schmidt rank
$K$ and $L$, respectively and $M=L/K$, so that the number of ebits gained in
the protocol is equal to $\log_{2}M=\log_{2}L-\log_{2}K$.
Figure~\ref{fig:ch-merge}\ depicts the task of quantum channel merging.

\begin{definition}
[Q.~channel merging capacity]A rate $R$ is achievable for quantum channel
merging if for all $\varepsilon\in(0,1]$, $\delta>0$, and sufficiently large
$n$, there exists an $(n,2^{n\left[  R-\delta\right]  },\varepsilon)$ protocol
of the above form. The quantum channel merging capacity $C_{M}(\mathcal{N})$
is defined to be the supremum of all achievable rates:%
\begin{multline}
C_{M}(\mathcal{N})\equiv\\
\sup\left\{  R\ |\ R\text{ is achievable for channel merging on }%
\mathcal{N}\right\}  .
\end{multline}

\end{definition}

\begin{theorem}
\label{thm:opt-int-ent-ch} The quantum channel merging capacity of a channel
$\mathcal{N}$ is equal to its entropy:%
\begin{equation}
C_{M}(\mathcal{N})=H(\mathcal{N}).
\end{equation}

\end{theorem}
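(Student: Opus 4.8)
The plan is to read off the capacity as the common value of the achievability lower bound of Proposition~\ref{prop:achievability} and the converse upper bound of Proposition~\ref{prop:1-shot-converse}, proving the two inequalities $C_{M}(\mathcal{N})\geq H(\mathcal{N})$ and $C_{M}(\mathcal{N})\leq H(\mathcal{N})$ separately and then combining them. No new machinery is needed beyond these two single-shot bounds, the additivity and Rényi representations already established, and the elementary limiting behavior of the sandwiched R\'enyi channel divergence.

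For the achievability direction, I would fix $\alpha>1$ and apply Proposition~\ref{prop:achievability}: for each $\varepsilon\in(0,1)$ and each $n$ there is an $(n,L/K,\varepsilon)$ protocol whose net rate $\tfrac{1}{n}[\log_{2}L-\log_{2}K]$ is at least $H_{\alpha}(\mathcal{N})$ minus correction terms of the form $\tfrac{\alpha}{n(\alpha-1)}[4\log_{2}(1/\varepsilon)+4(\left\vert A\right\vert^{2}-1)\log_{2}(n+1)+1/\alpha+2\log_{2}13]$. Holding $\alpha$ and $\varepsilon$ fixed and sending $n\to\infty$, every correction term vanishes (the slowest decaying one scaling as $\log_{2}(n+1)/n$), so after discarding a negligible number of output ebits the rate $H_{\alpha}(\mathcal{N})$ is achievable for each fixed $\alpha>1$; hence $C_{M}(\mathcal{N})\geq\sup_{\alpha>1}H_{\alpha}(\mathcal{N})$. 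Since $H_{\alpha}(\mathcal{N})=\log_{2}\left\vert B\right\vert-D_{\alpha}(\mathcal{N}\Vert\mathcal{R})$ is non-increasing in $\alpha$ and $\lim_{\alpha\to 1^{+}}D_{\alpha}(\mathcal{N}\Vert\mathcal{R})=D(\mathcal{N}\Vert\mathcal{R})$, this supremum equals the limit $H(\mathcal{N})$. The delicate point, which I would flag explicitly, is that the two limits do not commute: the prefactor $\alpha/(\alpha-1)$ diverges as $\alpha\to 1$, so one must send $n\to\infty$ first for fixed $\alpha$ and only afterward let $\alpha\to 1$.

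For the converse direction, let $R$ be any achievable rate, so that for every $\varepsilon\in(0,1]$, $\delta>0$, and all sufficiently large $n$ there is an $(n,2^{n[R-\delta]},\varepsilon)$ protocol, i.e.\ $\log_{2}L-\log_{2}K=n(R-\delta)$. Feeding this into Proposition~\ref{prop:1-shot-converse} and rewriting the left-hand side via $(1-\sqrt{\varepsilon})\log_{2}L-\log_{2}K=n(R-\delta)-\sqrt{\varepsilon}\log_{2}L$, I would arrive at
\[
R-\delta-\frac{\sqrt{\varepsilon}}{n}\log_{2}L\leq H(\mathcal{N})+\sqrt{\varepsilon}\log_{2}\left\vert A\right\vert+g_{2}(\sqrt{\varepsilon}).
\]
Letting $\varepsilon\to 0$ and then $\delta\to 0$ (using $g_{2}(0)=0$) would then give $R\leq H(\mathcal{N})$, and since $R$ was an arbitrary achievable rate this yields $C_{M}(\mathcal{N})\leq H(\mathcal{N})$.

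The main obstacle is the residual term $\tfrac{\sqrt{\varepsilon}}{n}\log_{2}L$ in the converse. Because the protocol pins down only the ratio $L/K$, a wasteful protocol could in principle consume and regenerate an unbounded amount of entanglement, so that $\tfrac{1}{n}\log_{2}L$ grows with $n$ and the term fails to vanish as $\varepsilon\to 0$. I would resolve this by observing that one may restrict, without loss of generality, to protocols with a bounded consumed-entanglement rate $\tfrac{1}{n}\log_{2}K$, equivalently a uniformly bounded $\tfrac{1}{n}\log_{2}L\leq c$; such a bound follows from the entanglement-of-formation chain in the proof of Proposition~\ref{prop:1-shot-converse} together with the dimension bound $H(B^{n}|R)\leq n\log_{2}\left\vert B\right\vert$. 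With $\tfrac{1}{n}\log_{2}L\leq c$, the offending term is at most $\sqrt{\varepsilon}\,c$ and vanishes as $\varepsilon\to 0$, closing the argument and establishing $C_{M}(\mathcal{N})=H(\mathcal{N})$.
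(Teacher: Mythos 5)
Your overall route is exactly the paper's: the converse follows from Proposition~\ref{prop:1-shot-converse} by taking $n\rightarrow\infty$ and then $\varepsilon\rightarrow 0$, and the achievability follows from Proposition~\ref{prop:achievability} by fixing $\alpha>1$, letting $n\rightarrow\infty$ so that the $O(\log_{2}(n)/n)$ corrections vanish, concluding that each $H_{\alpha}(\mathcal{N})$ is an achievable rate, and only afterward using $\sup_{\alpha>1}H_{\alpha}(\mathcal{N})=H(\mathcal{N})$. The order-of-limits point you flag is precisely how the paper's proof is organized (achieve $H_{\alpha}(\mathcal{N})$ for each fixed $\alpha>1$ first, then let $\alpha\rightarrow 1$), so the core of your argument is sound and matches the paper's.

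The one place where you go beyond the paper is the residual term $\frac{\sqrt{\varepsilon}}{n}\log_{2}L$, and there your patch fails as stated. The entanglement-of-formation chain in the proof of Proposition~\ref{prop:1-shot-converse}, combined with the dimension bound $H(B^{n}|R)\leq n\log_{2}\left\vert B\right\vert$, yields only $(1-\sqrt{\varepsilon})\log_{2}L-\log_{2}K\leq n\log_{2}\left\vert B\right\vert+\sqrt{\varepsilon}\,n\log_{2}\left\vert A\right\vert+g_{2}(\sqrt{\varepsilon})$; since an $(n,M,\varepsilon)$ protocol pins down only the ratio $M=L/K$, this controls the difference $\log_{2}L-\log_{2}K$ and nothing more. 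A wasteful protocol with $K=2^{nc_{n}}$ and $L=2^{n(c_{n}+R-\delta)}$ for $c_{n}\rightarrow\infty$ satisfies every such bound while making $\frac{1}{n}\log_{2}L$ unbounded, so no LOCC-monotonicity argument can bound $\log_{2}L$ absolutely --- the initial entanglement across Bob's cut is itself of order $\log_{2}K+n\log_{2}\left\vert B\right\vert$, so one only ever bounds $\log_{2}L$ relative to $\log_{2}K$. To close the gap you would need either to build the restriction into the definition of an achievable rate (demand protocols with a uniformly bounded entanglement-consumption rate $\frac{1}{n}\log_{2}K$, which is the implicit reading behind the paper's one-line ``apply the limits,'' and which costs nothing on the achievability side, where the protocols from Proposition~\ref{prop:achievability} consume entanglement at a rate bounded by a constant depending only on $\left\vert A\right\vert$ and $\left\vert B\right\vert$), or to supply a separate argument converting wasteful protocols into non-wasteful ones. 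As written, your justification of the claimed bound $\frac{1}{n}\log_{2}L\leq c$ is the one defective step in an otherwise faithful reconstruction; to your credit, it addresses a subtlety that the paper's own proof passes over silently.
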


We provide a detailed proof of Theorem~\ref{thm:opt-int-ent-ch} in
Appendix~\ref{sec:channel-merging-proof}.

\section{Examples}

\label{sec:examples}

In this section, we provide formulas for the entropy of several fundamental
channel models, including erasure channels, dephasing channels, depolarizing
channels, and Werner--Holevo channels. We also define the energy-constrained
and unconstrained entropies of a channel and determine formulas for them for
common bosonic channel models, including thermal, amplifier, and
additive-noise channels.

\subsection{Finite-dimensional channels}

A first observation to make is that, for any finite-dimensional channel, it is
an \textquotedblleft easy\textquotedblright\ optimization task to calculate
its entropy. This is a consequence of the identity $H(\mathcal{N})=-\sup
_{\rho_{A}}H(B|E)_{\mathcal{U}(\rho)}$\ from
Proposition~\ref{prop:alt-reps-ent-ch}\ and the concavity of conditional
entropy \cite{LR73,PhysRevLett.30.434} (in this context, see also
\cite[Eq.~(3.19)]{AC97}). Thus, one can exploit numerical optimizations to
calculate it \cite{Fawzi2018,FF18}.

For channels with symmetry, it can be much easier to evaluate a channel's
entropy, following from some observations from, e.g., \cite[Section~6]{KW17a}.
Let us begin by recalling the notion of a covariant channel $\mathcal{N}%
_{A\rightarrow B}$\ \cite{Hol02}. For a group $G$ with unitary channel
representations $\{\mathcal{U}_{A}^{g}\}_{g}$ and $\{\mathcal{V}_{B}^{g}%
\}_{g}$ acting on the input system $A$ and output system $B$ of the channel
$\mathcal{N}_{A\rightarrow B}$, the channel $\mathcal{N}_{A\rightarrow B}$ is
covariant with respect to the group$~G$ if the following equality holds for
all $g\in G$:%
\begin{equation}
\mathcal{N}_{A\rightarrow B}\circ\mathcal{U}_{A}^{g}=\mathcal{V}_{B}^{g}%
\circ\mathcal{N}_{A\rightarrow B}. \label{eq:cov-to-help}%
\end{equation}
If the averaging channel is such that $\frac{1}{\left\vert G\right\vert }%
\sum_{g}\mathcal{U}_{A}^{g}(X)=\operatorname{Tr}[X]I/\left\vert A\right\vert $
(implementing a unitary one-design), then we simply say that the channel
$\mathcal{N}_{A\rightarrow B}$ is covariant. It turns out that the entropy of
a channel is simple to calculate for covariant channels, with the optimal
$\psi_{RA}$ in \eqref{eq:cb-min} being the maximally entangled state, or
equivalently, the optimal $\rho_{A}$ in $-\sup_{\rho_{A}}H(B|E)_{\mathcal{U}%
(\rho)}$\ being the maximally mixed state.

\begin{proposition}
\label{prop:ch-ent-covariance}Let $\mathcal{N}_{A\rightarrow B}$ be a quantum
channel that is covariant with respect to a group $G$, in the sense of
\eqref{eq:cov-to-help}, and let $\mathcal{U}_{A\rightarrow BE}^{\mathcal{N}}$
be an isometric channel extending it. Then it suffices to perform the
optimization for the entropy of a channel over states that respect the
symmetry of the channel:%
\begin{equation}
H(\mathcal{N})=-\sup_{\rho_{A}=\mathcal{S}_{A}(\rho_{A})}H(B|E)_{\mathcal{U}%
(\rho)},
\end{equation}
where the symmetrizing channel $\mathcal{S}_{A}=\frac{1}{\left\vert
G\right\vert }\sum_{g\in G}\mathcal{U}_{A}^{g}$. Thus, if a channel is
covariant, then $H(\mathcal{N})=-H(B|E)_{\mathcal{U}(\pi)}$; i.e., the optimal
state $\rho_{A}$ is the maximally mixed state $\pi_{A}$.
\end{proposition}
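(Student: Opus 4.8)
The plan is to start from the representation $H(\mathcal{N})=-\sup_{\rho_{A}}H(B|E)_{\mathcal{U}(\rho)}$ established in Proposition~\ref{prop:alt-reps-ent-ch}, and to show that the supremum over all input states may be restricted to those fixed by the symmetrizing channel $\mathcal{S}_{A}=\frac{1}{\left\vert G\right\vert }\sum_{g\in G}\mathcal{U}_{A}^{g}$. Since averaging over a group is idempotent (for each $k\in G$ there are exactly $\left\vert G\right\vert $ pairs $(g,h)$ with $gh=k$, so $\sum_{g,h}\mathcal{U}_{A}^{gh}=\left\vert G\right\vert \sum_{k}\mathcal{U}_{A}^{k}$), the state $\mathcal{S}_{A}(\rho_{A})$ indeed satisfies $\mathcal{S}_{A}(\mathcal{S}_{A}(\rho_{A}))=\mathcal{S}_{A}(\rho_{A})$. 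Thus it suffices to prove that symmetrizing an arbitrary input can only increase the conditional entropy:
\[
H(B|E)_{\mathcal{U}(\mathcal{S}_{A}(\rho))}\ \geq\ H(B|E)_{\mathcal{U}(\rho)}.
\]

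The step I expect to be the main obstacle is lifting the covariance relation from the channel to its isometric extension. Concretely, taking $\mathcal{U}_{A\rightarrow BE}^{\mathcal{N}}$ to be a \emph{minimal} Stinespring dilation, I would show that there exist unitary channels $\{\mathcal{W}_{E}^{g}\}_{g}$ with
\[
\mathcal{U}_{A\rightarrow BE}^{\mathcal{N}}\circ\mathcal{U}_{A}^{g}=(\mathcal{V}_{B}^{g}\otimes\mathcal{W}_{E}^{g})\circ\mathcal{U}_{A\rightarrow BE}^{\mathcal{N}},\quad g\in G.
\]
Writing $V$ for the isometry with $\mathcal{N}(X)=\operatorname{Tr}_{E}[VXV^{\dag}]$, one checks that $W\equiv(V_{B}^{g\dag}\otimes I_{E})\,V\,U_{A}^{g}$ is a second (again minimal) isometry dilating the \emph{same} channel: tracing out $E$ and using covariance in the form $\mathcal{V}_{B}^{g\dag}\circ\mathcal{N}\circ\mathcal{U}_{A}^{g}=\mathcal{N}$ gives $\operatorname{Tr}_{E}[WXW^{\dag}]=\mathcal{N}(X)$. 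By the essential uniqueness of the minimal Stinespring dilation there is a unitary $W_{E}^{g}$ on $E$ with $W=(I_{B}\otimes W_{E}^{g})\,V$, which rearranges to $V U_{A}^{g}=(V_{B}^{g}\otimes W_{E}^{g})\,V$, i.e.\ the displayed isometric covariance.

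With this in hand the remaining steps are routine. For fixed $g$, set $\tau_{BE}=\mathcal{U}^{\mathcal{N}}(\rho)$ and $\tau_{BE}^{g}=\mathcal{U}^{\mathcal{N}}(\mathcal{U}_{A}^{g}(\rho))=(\mathcal{V}_{B}^{g}\otimes\mathcal{W}_{E}^{g})(\tau_{BE})$; invariance of conditional entropy under the local unitaries $\mathcal{V}_{B}^{g}$ and $\mathcal{W}_{E}^{g}$ gives $H(B|E)_{\tau^{g}}=H(B|E)_{\tau}$. Because $\mathcal{U}^{\mathcal{N}}$ is linear, $\mathcal{U}^{\mathcal{N}}(\mathcal{S}_{A}(\rho))=\frac{1}{\left\vert G\right\vert }\sum_{g}\tau_{BE}^{g}$, so concavity of the conditional entropy \cite{LR73,PhysRevLett.30.434} together with Jensen's inequality yields $H(B|E)_{\mathcal{U}(\mathcal{S}_{A}(\rho))}\geq\frac{1}{\left\vert G\right\vert }\sum_{g}H(B|E)_{\tau^{g}}=H(B|E)_{\tau}$, establishing the symmetrization inequality. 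Taking the supremum over $\rho_{A}$ and using the idempotence of $\mathcal{S}_{A}$ proves the first identity. Finally, in the covariant case where the average implements a unitary one-design, $\mathcal{S}_{A}(\rho)=\operatorname{Tr}[\rho]\,I_{A}/\left\vert A\right\vert =\pi_{A}$ for every $\rho$, so $\pi_{A}$ is the only state fixed by $\mathcal{S}_{A}$, and specializing the identity gives $H(\mathcal{N})=-H(B|E)_{\mathcal{U}(\pi)}$.
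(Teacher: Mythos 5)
Your proof is correct and follows essentially the same route as the paper's: starting from $H(\mathcal{N})=-\sup_{\rho_{A}}H(B|E)_{\mathcal{U}(\rho)}$, lifting the covariance relation to the isometric extension, and then combining local-unitary invariance with concavity of conditional entropy to conclude that symmetrizing the input can only increase $H(B|E)$. The only difference is that you prove the covariance-lifting lemma yourself via essential uniqueness of minimal Stinespring dilations (and explicitly check idempotence of $\mathcal{S}_{A}$) where the paper cites \cite{H06,H12} and \cite{DBW17}; your restriction to a minimal dilation is harmless because $H(B|E)_{\mathcal{U}(\rho)}$ is invariant under isometries acting on $E$, so the quantity does not depend on the choice of isometric extension.
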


\begin{proof}
First recall from Proposition~\ref{prop:alt-reps-ent-ch}\ that $H(\mathcal{N}%
)=-\sup_{\rho_{A}}H(B|E)_{\mathcal{U}(\rho)}$. Let $\rho_{A}$ be an arbitrary
state. If a channel $\mathcal{N}_{A\rightarrow B}$ is covariant as in
\eqref{eq:cov-to-help}, then it is known that there exists a unitary channel
$\mathcal{W}_{E}^{g}$ such that \cite{H06,H12}%
\begin{equation}
\mathcal{U}_{A\rightarrow BE}^{\mathcal{N}}\circ\mathcal{U}_{A}^{g}%
=(\mathcal{V}_{B}^{g}\otimes\mathcal{W}_{E}^{g})\circ\mathcal{U}_{A\rightarrow
BE}^{\mathcal{N}}.
\end{equation}
See also \cite[Appendix~A]{DBW17}\ for a simple proof. Then we find that%
\begin{align}
H(B|E)_{\mathcal{U}(\rho)}  &  =H(B|E)_{(\mathcal{V}^{g}\otimes\mathcal{W}%
^{g})\mathcal{U}(\rho)}\\
&  =\frac{1}{\left\vert G\right\vert }\sum_{g\in G}H(B|E)_{(\mathcal{V}%
^{g}\otimes\mathcal{W}^{g})\mathcal{U}(\rho)}\\
&  =\frac{1}{\left\vert G\right\vert }\sum_{g\in G}H(B|E)_{(\mathcal{U}%
\circ\mathcal{U}^{g})(\rho)}\\
&  \leq H(B|E)_{(\mathcal{U}\circ\mathcal{S})(\rho)}.
\end{align}
The first equality follows from invariance of conditional entropy under the
action of a local unitary (the equality holds for all $g\in G$). The third
equality follows from channel covariance. The inequality follows from
concavity of conditional entropy \cite{LR73,PhysRevLett.30.434}.
\end{proof}

\bigskip

A simple example of a channel that is covariant is the quantum erasure
channel, defined as \cite{GBP97}%
\begin{equation}
\mathcal{E}^{p}(\rho)\equiv(1-p)\rho+p|e\rangle\langle e|,
\end{equation}
where $\rho$ is a $d$-dimensional input state, $p\in\left[  0,1\right]  $ is
the erasure probability, and $|e\rangle\langle e|$ is a pure erasure state
orthogonal to any input state, so that the output state has $d+1$ dimensions.
A $d$-dimensional dephasing channel has the following action:%
\begin{equation}
\mathcal{D}^{\mathbf{p}}(\rho)=\sum_{\ell=0}^{d-1}p_{\ell}Z^{\ell}\rho
Z^{\ell\dag},
\end{equation}
where $\mathbf{p}$ is a vector containing the probabilities $p_{\ell}$ and $Z$
has the following action on the computational basis $Z|x\rangle=e^{2\pi
ix/d}|x\rangle$. This channel is covariant with respect to the
Heisenberg--Weyl group of unitaries, which is well known to form a unitary
one-design. A particular kind of Werner--Holevo channel performs the following
transformation on a $d$-dimensional input state $\rho$ \cite{WH02}:%
\begin{equation}
\mathcal{W}^{(d)}(\rho)\equiv\frac{1}{d-1}\left(  \operatorname{Tr}%
\{\rho\}I-T(\rho)\right)  ,
\end{equation}
where $d\geq2$ and $T$ denotes the transpose map $T(\cdot)=\sum_{i,j}%
|i\rangle\langle j|(\cdot)|i\rangle\langle j|$. As observed in
\cite[Section~II]{WH02}, this channel is covariant. The $d$-dimensional
depolarizing channel is a common model of noise in quantum information,
transmitting the input state with probability $1-p\in\left[  0,1\right]  $ and
replacing it with the maximally mixed state $\pi\equiv\frac{I}{d}$ with
probability~$p$:%
\begin{equation}
\Delta^{p}(\rho)=\left(  1-p\right)  \rho+p\pi.
\end{equation}

By applying Proposition~\ref{prop:ch-ent-covariance}\ and evaluating the
resulting entropy $-H(B|E)$ for each of the above channels when the maximally
mixed state $\pi$ is input, we arrive at the following formulas:%
\begin{align}
H(\mathcal{E}^{p})  &  =h_{2}(p)+\left(  p-1\right)  \log_{2}d,\\
H(\mathcal{D}^{\mathbf{p}})  &  =H(\mathbf{p})-\log_{2}d,\\
H(\mathcal{W}^{(d)})  &  =\log_{2}\left[  (d-1)/2\right]  ,\\
H(\Delta^{p})  &  =-\left(  1-p+\frac{p}{d^{2}}\right)  \log_{2}\left(
1-p+\frac{p}{d^{2}}\right) \nonumber\\
&  \qquad-\left(  d^{2}-1\right)  \frac{p}{d^{2}}\log_{2}\frac{p}{d^{2}}%
-\log_{2}d,
\end{align}
where $H(\mathbf{p})$ is the Shannon entropy of the probability vector
$\mathbf{p}$. These formulas are plotted and interpreted in
Figures~\ref{fig:erasure-example}--\ref{fig:depolarizing-example}.

\begin{figure}
\begin{center}
\includegraphics[
width=3.3in
]{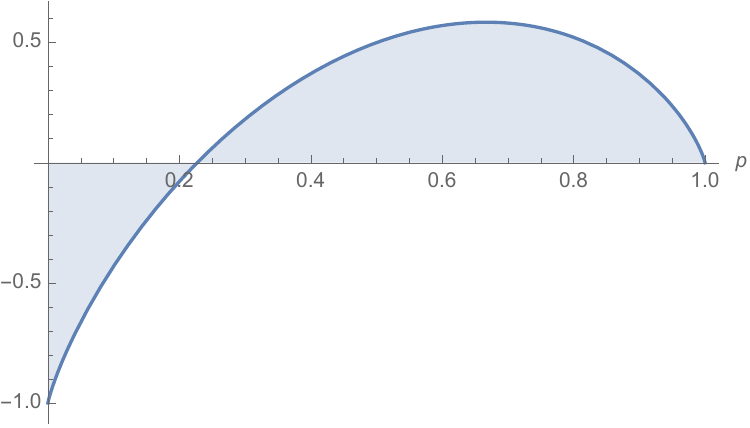}
\end{center}
\caption{Entropy of the qubit erasure channel as a function of the erasure
probability $p$. When $p=0$, the erasure channel is the identity qubit channel
and thus takes on its smallest value. When $p=1$, the erasure channel
deterministically replaces the input with the pure state $|e\rangle\langle e|$
and thus has entropy equal to zero.}%
\label{fig:erasure-example}%
\end{figure}

\begin{figure}
\begin{center}
\includegraphics[
width=3.3in
]{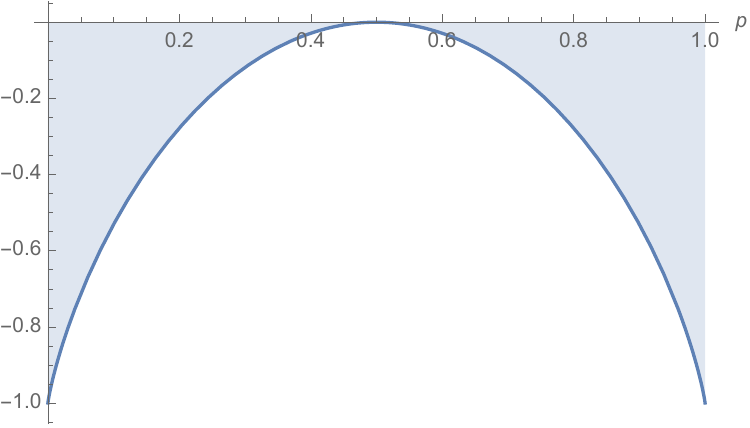}
\end{center}
\caption{Entropy of the qubit dephasing channel as a function of the dephasing
probability $p$. The optimal input state is the maximally entangled state, so
that the channel entropy is evaluated on the Choi state of the channel. When
$p=0$, the dephasing channel is the identity qubit channel and thus takes on
its smallest value. When $p=1/2$, the dephasing channel is a classical
channel, so that its Choi state is maximally classically correlated. For such
a state, $D(\mathcal{N}\Vert\mathcal{R})=1$ so that the channel entropy is
equal to zero.}%
\label{fig:dephasing-example}%
\end{figure}

\begin{figure}
\begin{center}
\includegraphics[
width=3.3in
]{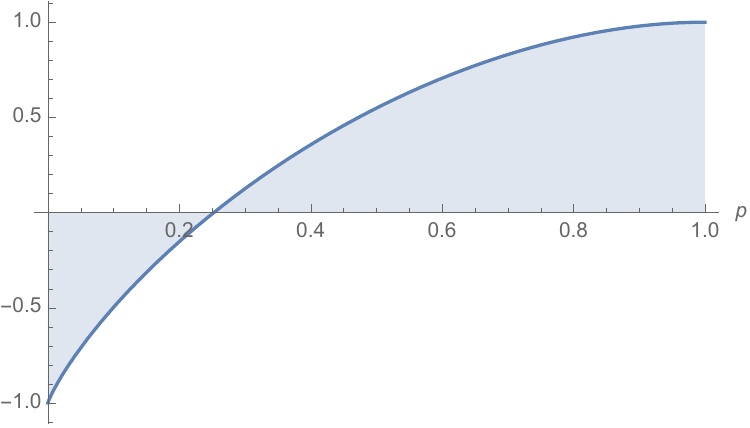}
\end{center}
\caption{Entropy of the qubit depolarizing channel as a function of the
depolarizing probability $p$. When $p=0$, the depolarizing channel is the
identity qubit channel and thus takes on its smallest value. When $p=1$, the
depolarizing channel replaces the channel input with the maximally mixed state
and thus takes on its maximal value.}%
\label{fig:depolarizing-example}%
\end{figure}

\subsection{Energy-constrained entropy of a channel}

We can define the energy-constrained entropy of a channel for
infinite-dimensional systems, by employing the identity in
Proposition~\ref{prop:alt-reps-ent-ch}\ and the definition of conditional
entropy from \cite{K11}.

To review the definition from \cite{K11}, recall that the quantum entropy of a
state $\rho$ acting on a separable Hilbert space is defined as%
\begin{equation}
H(\rho)\equiv\operatorname{Tr}\{\eta(\rho)\},
\end{equation}
where $\eta(x)=-x\log_{2}x$ if $x>0$ and $\eta(0)=0$. The trace in the above
equation can be taken with respect to any countable orthonormal basis of
$\mathcal{H}$ \cite[Definition~2]{AL70}. The quantum entropy is a
non-negative, concave, lower semicontinuous function on $\mathcal{D}%
(\mathcal{H})$ \cite{W76}. It is also not necessarily finite (see, e.g.,
\cite{BV13}). When $\rho_{A}$ is assigned to a system $A$, we write
$H(A)_{\rho}\equiv H(\rho_{A})$. Recall that the relative entropy of two
states $\rho$ and $\sigma$ acting on a separable Hilbert space is given by
\cite{F70,Lindblad1973}%
\begin{multline}
D(\rho\Vert\sigma)\equiv\\
\lbrack\ln2]^{-1}\sum_{i,j}|\langle\phi_{i}|\psi_{j}\rangle|^{2}%
[p(i)\ln\!\left(  \frac{p(i)}{q(j)}\right)  +q(j)-p(i)],
\end{multline}
where $\rho=\sum_{i}p(i)|\phi_{i}\rangle\langle\phi_{i}|$ and $\sigma=\sum
_{j}q(j)|\psi_{j}\rangle\langle\psi_{j}|$ are spectral decompositions of
$\rho$ and $\sigma$ with $\{|\phi_{i}\rangle\}_{i}$ and $\{|\psi_{j}%
\rangle\}_{j}$ orthonormal bases. The prefactor $[\ln2]^{-1}$ is there to
ensure that the units of the quantum relative entropy are bits. For a
bipartite state $\rho_{AB}$, the mutual information is defined as%
\begin{equation}
I(A;B)_{\rho}\equiv D(\rho_{AB}\Vert\rho_{A}\otimes\rho_{B}).
\end{equation}
Finally, for a bipartite state $\rho_{AB}$ such that $H(A)_{\rho}<\infty$, the
conditional entropy is defined as \cite{K11}%
\begin{equation}
H(A|B)_{\rho}\equiv H(A)_{\rho}-I(A;B)_{\rho},
\end{equation}
and it is known that $H(A|B)_{\rho}\in\lbrack-H(A)_{\rho},H(A)_{\rho}]$
\cite{K11}.

A Gibbs observable is a positive semi-definite operator $G$ acting on a
separable Hilbert space such that $\operatorname{Tr}\{e^{-\beta G}\}<\infty$
for all $\beta>0$ \cite{H03,H04,H12}. This condition for a Gibbs observable
means that there is always a well defined thermal state.

Finally, we say that a quantum channel $\mathcal{N}_{A\rightarrow B}$\ obeys
the finite-output entropy condition \cite{H03,H04,H12}\ with respect to a
Gibbs observable $G$\ if for all $P\geq0$, the following inequality holds%
\begin{equation}
\sup_{\rho_{A}:\operatorname{Tr}\{G\rho_{A}\}\leq P}H(B)_{\mathcal{N}(\rho
)}<\infty.
\end{equation}

We now define the energy-constrained and unconstrained channel entropy as follows:

\begin{definition}
Let $\mathcal{N}_{A\rightarrow B}$ be a quantum channel that satisfies the
finite-output entropy condition with respect to a Gibbs observable$~G$. For
$P\geq0$, the energy-constrained entropy of $\mathcal{N}_{A\rightarrow B}$ is
defined as%
\begin{equation}
H(\mathcal{N},G,P)\equiv\inf_{\psi_{RA}:\operatorname{Tr}\{G\psi_{A}\}\leq
P}H(B|R)_{\omega},
\end{equation}
where $\omega_{RB}\equiv\mathcal{N}_{A\rightarrow B}(\psi_{RA})$ and the
optimization is with respect to all pure bipartite states with system $R$
isomorphic to system $A$.\ The unconstrained entropy of $\mathcal{N}%
_{A\rightarrow B}$ with respect to $G$ is then defined as%
\begin{equation}
H(\mathcal{N},G)\equiv\inf_{P\geq0}H(\mathcal{N},G,P).
\end{equation}

\end{definition}

\subsection{Bosonic Gaussian channels}

In this section, we evaluate the energy-constrained and unconstrained entropy
of several important bosonic Gaussian channels \cite{H12,S17}, including the
thermal, amplifier, and additive-noise channels. Here we take the Gibbs
observable to be the photon number operator $\hat{n}$ \cite{H12,S17}, and we
note that each of these channels satisfies the finite-output entropy condition
mentioned above. From a practical perspective, we should be most interested in
these particular single-mode bosonic Gaussian channels, as these are of the
greatest interest in applications, as stressed in \cite[Section~12.6.3]{H12}
and \cite[Section~3.5]{HG12}. Each of these are defined respectively by the
following Heisenberg input-output relations:%
\begin{align}
\hat{b}  &  =\sqrt{\eta}\hat{a}+\sqrt{1-\eta}\hat{e}%
,\label{eq:thermal-channel}\\
\hat{b}  &  =\sqrt{G}\hat{a}+\sqrt{G-1}\hat{e}^{\dag}%
,\label{eq:amplifier-channel}\\
\hat{b}  &  =\hat{a}+\left(  x+ip\right)  /\sqrt{2},
\label{eq:additive-noise-channel}%
\end{align}
where $\hat{a}$, $\hat{b}$, and $\hat{e}$ are the field-mode annihilation
operators for the sender's input, the receiver's output, and the environment's
input of these channels, respectively.

The channel in \eqref{eq:thermal-channel} is a thermalizing channel, in which
the environmental mode is prepared in a thermal state $\theta(N_{B})$\ of mean
photon number $N_{B}\geq0$, defined as%
\begin{equation}
\theta(N_{B})\equiv\frac{1}{N_{B}+1}\sum_{n=0}^{\infty}\left(  \frac{N_{B}%
}{N_{B}+1}\right)  ^{n}|n\rangle\langle n|, \label{eq:bosonic-thermal-st}%
\end{equation}
where $\left\{  |n\rangle\right\}  _{n=0}^{\infty}$ is the orthonormal,
photonic number-state basis. When $N_{B}=0$, $\theta(N_{B})$ reduces to the
vacuum state, in which case the resulting channel in
\eqref{eq:thermal-channel} is called the pure-loss channel. The parameter
$\eta\in(0,1)$ is the transmissivity of the channel, representing the average
fraction of photons making it from the input to the output of the channel. Let
$\mathcal{L}_{\eta,N_{B}}$ denote this channel.

The channel in \eqref{eq:amplifier-channel} is an amplifier channel, and the
parameter $G>1$ is its gain. For this channel, the environment is prepared in
the thermal state $\theta(N_{B})$. If $N_{B}=0$, the amplifier channel is
called the pure-amplifier channel. Let $\mathcal{A}_{G,N_{B}}$ denote this channel.

Finally, the channel in \eqref{eq:additive-noise-channel} is an additive-noise
channel, representing a quantum generalization of the classical additive white
Gaussian noise channel. In \eqref{eq:additive-noise-channel}, $x$ and $p$ are
zero-mean, independent Gaussian random variables each having variance $\xi
\geq0$. Let $\mathcal{T}_{\xi}$ denote this channel. Note that the
additive-noise channel arises from the thermal channel in the limit
$\eta\rightarrow1$, $N_{B}\rightarrow\infty$, but with $\left(  1-\eta\right)
N_{B}\rightarrow\xi$ \cite{GGLMS04}.

Kraus representations for the channels in
\eqref{eq:thermal-channel}--\eqref{eq:additive-noise-channel}\ are available
in \cite{ISS11}, which can be helpful for further understanding their action
on input quantum states.

All of the above channels are phase-insensitive or phase-covariant Gaussian
channels \cite{H12,S17}. Let $N_{S}\geq0$. Since the function $\sup
_{\rho:\operatorname{Tr}\{\hat{n}\rho\}\leq N_{S}}H(B|E)_{\mathcal{U}(\rho)}$
we are evaluating is concave in the input and invariant under local unitaries,
\cite[Remark~22]{SWAT17} applies, implying that the optimal input state for
the entropies of these channels is the bosonic thermal state $\theta(N_{S})$.
We then find by employing well known entropy formulas from \cite{HW01,GLMS03}
(see also \cite{WHG12} in this context)\ that%
\begin{multline}
H(\mathcal{L}_{\eta,N_{B}},\hat{n},N_{S}) =\\
g_{2}(\left[  D_{1}+\left(  1-\eta\right)  \left(  N_{S}-N_{B}\right)
-1\right]  /2)\\
+g_{2}(\left[  D_{1}-\left(  1-\eta\right)  \left(  N_{S}-N_{B}\right)
-1\right]  /2)-g_{2}(N_{S}),
\end{multline}
\begin{multline}
H(\mathcal{A}_{G,N_{B}},\hat{n},N_{S}) =\\
g_{2}(\left[  D_{2}+\left(  G-1\right)  \left(  N_{S}+N_{B}+1\right)
-1\right]  /2)\\
+g_{2}(\left[  D_{2}-\left(  G-1\right)  \left(  N_{S}+N_{B}+1\right)
-1\right]  /2)-g_{2}(N_{S}),
\end{multline}
\begin{multline}
H(\mathcal{T}_{\xi},\hat{n},N_{S}) =g_{2}(\left[  D_{3}-\left(  \xi+1\right)
\right]  /2)\\
+g_{2}(\left[  D_{3}+\xi-1\right]  /2)-g_{2}(N_{S}),
\end{multline}
where $g_{2}$ is the bosonic entropy function defined in
\eqref{eq:bosonic-entropy}\ and\begin{widetext}
\begin{align}
D_{1}  &  \equiv\sqrt{\left[  \left(  \eta+1\right)  N_{S}+\left(
1-\eta\right)  N_{B}+1\right]  ^{2}-4\eta N_{S}\left(  N_{S}+1\right)  },\\
D_{2}  &  \equiv\sqrt{\left[  \left(  G+1\right)  N_{S}+\left(  G-1\right)
\left(  N_{B}+1\right)  +1\right]  ^{2}-4GN_{S}\left(  N_{S}+1\right)  },\\
D_{3}  &  \equiv\sqrt{\left(  \xi+1\right)  ^{2}+4\xi N_{S}}.
\end{align}
\end{widetext}
Note that we arrived at the formula for $H(\mathcal{T}_{\xi},\hat{n},N_{S})$
by considering the limit discussed above. Furthermore, by the same reasoning
as given in \cite[Section~6]{SWAT17}, these functions are decreasing with
increasing $N_{S}$, and so we find that%
\begin{align}
H(\mathcal{L}_{\eta,N_{B}},\hat{n})  &  =\inf_{N_{S}\geq0}H(\mathcal{L}%
_{\eta,N_{B}},\hat{n},N_{S})\\
&  =\lim_{N_{S}\rightarrow\infty}H(\mathcal{L}_{\eta,N_{B}},\hat{n},N_{S}),\\
H(\mathcal{A}_{G,N_{B}},\hat{n})  &  =\inf_{N_{S}\geq0}H(\mathcal{A}_{G,N_{B}%
},\hat{n},N_{S})\\
&  =\lim_{N_{S}\rightarrow\infty}H(\mathcal{A}_{G,N_{B}},\hat{n},N_{S}),\\
H(\mathcal{T}_{\xi},\hat{n})  &  =\inf_{N_{S}\geq0}H(\mathcal{T}_{\xi},\hat
{n},N_{S})\\
&  =\lim_{N_{S}\rightarrow\infty}H(\mathcal{T}_{\xi},\hat{n},N_{S}),
\end{align}
which leads to the following formulas for the unconstrained entropies of the
channels:%
\begin{align}
H(\mathcal{L}_{\eta,N_{B}},\hat{n})  &  =\log_{2}(1-\eta)+g_{2}(N_{B}%
),\label{eq:ent-unc-therm}\\
H(\mathcal{A}_{G,N_{B}},\hat{n})  &  =\log_{2}(G-1)+g_{2}(N_{B}%
),\label{eq:ent-unc-amp}\\
H(\mathcal{T}_{\xi},\hat{n})  &  =\log_{2}(\xi)+\frac{1}{\ln2}.
\label{eq:ent-unc-add-noise}%
\end{align}
These formulas are plotted and interpreted in
Figures~\ref{fig:thermal-example}--\ref{fig:additive-noise-example}. A
Mathematica file is available with the arXiv posting of this paper to automate
these calculations, but we note here that the expansion $g_{2}(x)=\log
_{2}(x)+1/\ln2+O(1/x)$ is helpful for this purpose. We also note that the
formulas in \eqref{eq:ent-unc-therm}--\eqref{eq:ent-unc-amp}\ were presented
in \cite[Eq.~(2)]{PGBL09}\ and the formula in
\eqref{eq:ent-unc-add-noise}\ was presented in \cite[Section~V]{HW01}.

\begin{figure}
\begin{center}
\includegraphics[
width=3.3in
]{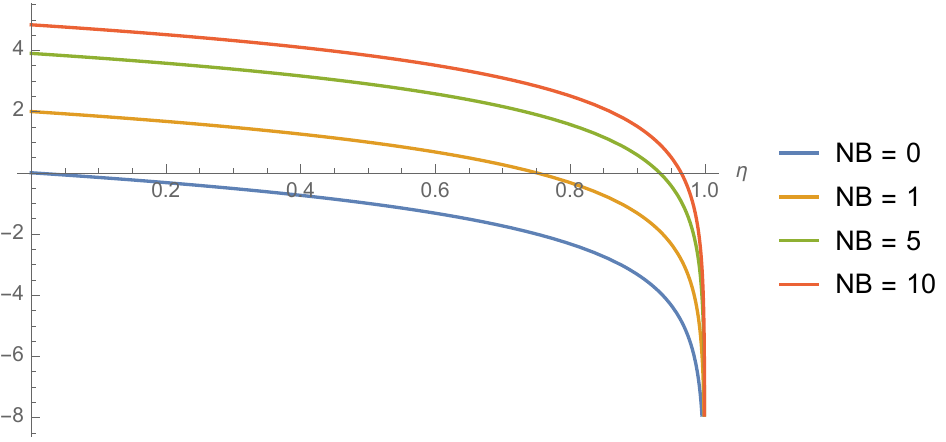}
\end{center}
\caption{Entropy of the bosonic thermal channel as a function of the
transmissivity $\eta$. When $\eta=1$, the thermal channel is the identity
channel and thus takes on its smallest value of $-\infty$, regardless of the
value of $N_{B}$. When $\eta=0$ and $N_{B}=0$, the thermal channel
deterministically replaces the input with the pure vacuum state $|0\rangle
\langle0|$ and thus has entropy equal to zero. As the thermal noise $N_{B}$
increases, the entropy of the thermal channel increases.}%
\label{fig:thermal-example}%
\end{figure}

\begin{figure}
\begin{center}
\includegraphics[
width=3.3in
]{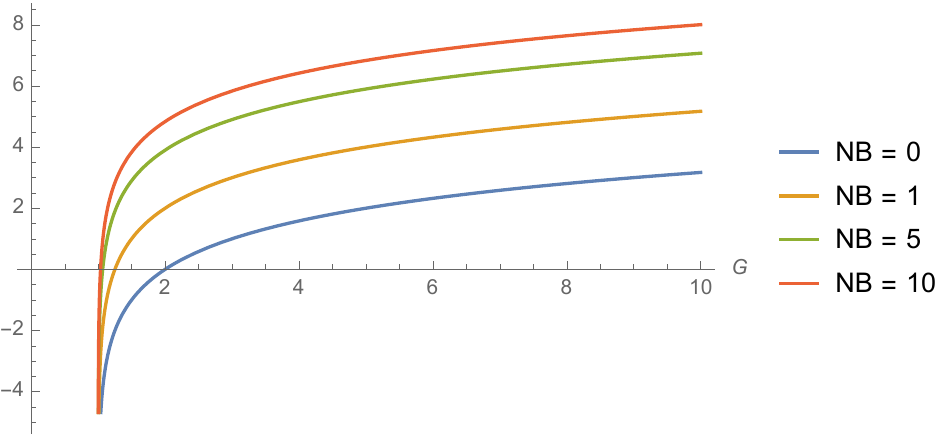}
\end{center}
\caption{Entropy of the bosonic amplifier channel as a function of the
amplifier gain $G$. When $G=1$, the amplifier channel is the identity channel
and thus takes on its smallest value of $-\infty$, regardless of the value of
$N_{B}$. As the amplifier gain $G$ and the thermal noise $N_{B}$ increase, the
entropy of the thermal channel increases.}%
\label{fig:amplifier-example}%
\end{figure}

\begin{figure}
\begin{center}
\includegraphics[
width=3.3in
]{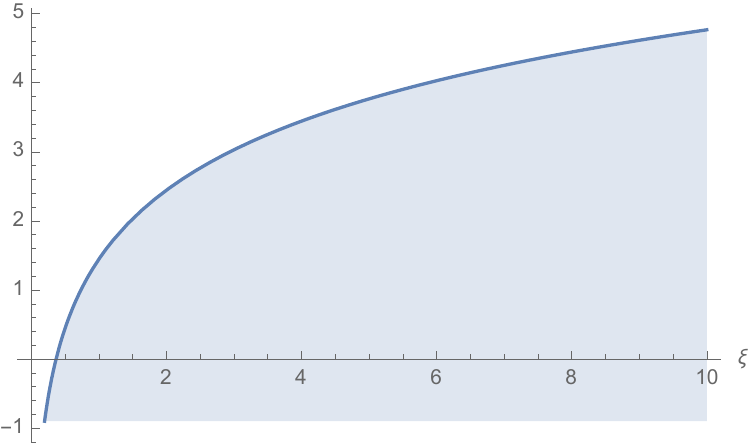}
\end{center}
\caption{Entropy of the bosonic additive-noise channel as a function of the
noise parameter $\xi$. When $\xi=0$, the additive-noise channel is the
identity channel and thus takes on its smallest value of $-\infty$. As $\xi$
increases, the entropy of the additive-noise channel increases
logarithmically.}%
\label{fig:additive-noise-example}%
\end{figure}

\section{R\'{e}nyi entropy of a quantum channel}

\label{sec:renyi-ent-ch}

Generalizing the von Neumann entropy of a quantum state, the R\'{e}nyi entropy
finds extensive application in physics and information theory. Given a pure
bipartite state, the R\'{e}nyi entropy of the reduced state is an entanglement
measure, which finds application in conformal field theory \cite{CC09},
holography \cite{NRT09}, and black holes \cite{Sol11}. The full range of
values of the R\'{e}nyi entropy is known as the entanglement spectrum. The
R\'{e}nyi entropy finds information-theoretic meaning in the expression for
the error exponent of entanglement concentration \cite{HKMMW02} and quantum
data compression \cite{Hay02}, indicating the exponential rate at which errors
in these settings decay to zero. As such, it is worthwhile to understand the
R\'{e}nyi entropy of a channel as a generalization of the R\'{e}nyi entropy of
a state.

In this section, we define the R\'{e}nyi entropy of a channel, following the
same approach discussed in the introduction. That is, we first write the
R\'{e}nyi entropy of a state as the difference of the number of physical
qubits and the R\'{e}nyi relative entropy of the state to the maximally mixed
state. Then we define the R\'{e}nyi entropy of a channel in the same way as in
Definition~\ref{def:entropy-channel}, but replacing the channel relative
entropy with the sandwiched R\'{e}nyi channel relative entropy from
\cite{CMW16}.

The R\'{e}nyi entropy of a quantum state $\rho_{A}$ of system $A$ is defined
for $\alpha\in(0,1)\cup(1,\infty)$ as%
\begin{align}
H_{\alpha}(A)_{\rho}  &  \equiv\frac{1}{1-\alpha}\log_{2}\operatorname{Tr}%
\{\rho_{A}^{\alpha}\}\\
&  =\frac{1}{1-\alpha}\log_{2}\left\Vert \rho_{A}\right\Vert _{\alpha}%
^{\alpha},
\end{align}
where $\left\Vert X\right\Vert _{\alpha}\equiv\lbrack\operatorname{Tr}%
\{\left\vert X\right\vert ^{\alpha}\}]^{1/\alpha}$ and $\left\vert
X\right\vert \equiv\sqrt{X^{\dag}X}$ for an operator $X$. The R\'{e}nyi
relative entropy of quantum states can be defined in two different ways, known
as the Petz--R\'{e}nyi relative entropy \cite{P85,P86} and the sandwiched
R\'{e}nyi relative entropy \cite{MDSFT13,WWY13}. The sandwiched R\'{e}nyi
relative entropy is defined for $\alpha\in(0,1)\cup(1,\infty)$, a state $\rho
$, and a positive semi-definite operator $\sigma$ as%
\begin{equation}
D_{\alpha}(\rho\Vert\sigma)\equiv\frac{1}{\alpha-1}\log_{2}\operatorname{Tr}%
\left\{  \left(  \sigma^{\left(  1-\alpha\right)  /2\alpha}\rho\sigma^{\left(
1-\alpha\right)  /2\alpha}\right)  ^{\alpha}\right\}  ,
\label{eq:sandwiched-Renyi}%
\end{equation}
whenever either $\alpha\in(0,1)$ or $\operatorname{supp}(\rho)\subseteq
\operatorname{supp}(\sigma)$ and $\alpha>1$. Otherwise, it is set to $+\infty
$. The sandwiched R\'{e}nyi relative entropy obeys the data processing
inequality for $\rho$ and $\sigma$ as above, a quantum channel $\mathcal{N}$,
and $\alpha\in\lbrack1/2,1)\cup(1,\infty)$ \cite{FL13} (see also
\cite{B13monotone,MO13,MDSFT13,WWY13,W18opt}):%
\begin{equation}
D_{\alpha}(\rho\Vert\sigma)\geq D_{\alpha}(\mathcal{N}(\rho)\Vert
\mathcal{N}(\sigma)). \label{eq:DP-sand}%
\end{equation}
It converges to the quantum relative entropy in the limit $\alpha\rightarrow1$
\cite{MDSFT13,WWY13}:%
\begin{equation}
\lim_{\alpha\rightarrow1}D_{\alpha}(\rho\Vert\sigma)=D(\rho\Vert\sigma).
\end{equation}
By inspection, the R\'{e}nyi entropy of a state can be written as%
\begin{equation}
H_{\alpha}(A)_{\rho}=\log_{2}\left\vert A\right\vert -D_{\alpha}(\rho_{A}%
\Vert\pi_{A}).
\end{equation}

The sandwiched R\'{e}nyi channel divergence of channels $\mathcal{N}%
_{A\rightarrow B}$ and $\mathcal{M}_{A\rightarrow B}$ is defined for
$\alpha\in\lbrack1/2,1)\cup(1,\infty)$ as~\cite{CMW16}%
\begin{equation}
D_{\alpha}(\mathcal{N}\Vert\mathcal{M})\equiv\sup_{\rho_{RA}}D_{\alpha
}(\mathcal{N}_{A\rightarrow B}(\rho_{RA})\Vert\mathcal{M}_{A\rightarrow
B}(\rho_{RA})),
\end{equation}
where the optimization is with respect to bipartite states $\rho_{RA}$ of a
reference system $R$ of arbitrary size and the channel input system~$A$. Due
to state purification, the data-processing inequality in
\eqref{eq:DP-sand},\ and the Schmidt decomposition theorem, it suffices to
optimize over states $\rho_{RA}$ that are pure and such that system $R$ is
isomorphic to system $A$.

We now define the R\'{e}nyi entropy of a quantum channel as follows:

\begin{definition}
[R\'{e}nyi entropy of a q.~channel]\label{def:renyi-ent-ch}Let $\mathcal{N}%
_{A\rightarrow B}$ be a quantum channel. For $\alpha\in\lbrack1/2,1)\cup
(1,\infty)$, the R\'{e}nyi entropy of the channel $\mathcal{N}$ is defined as%
\begin{equation}
H_{\alpha}(\mathcal{N})\equiv\log_{2}\left\vert B\right\vert -D_{\alpha
}(\mathcal{N}\Vert\mathcal{R}),
\end{equation}
where $\mathcal{R}_{A\rightarrow B}$ is the completely randomizing channel from~\eqref{eq:cm-random-ch}.
\end{definition}

We remark here that $D_{\alpha}(\mathcal{N}\Vert\mathcal{R})$, for $\alpha>1$,
has an operational interpretation as the strong converse exponent for
discrimination of the channel $\mathcal{N}_{A\rightarrow B}$ from the
completely randomizing channel $\mathcal{R}_{A\rightarrow B}$, when
considering any possible channel discrimination strategy \cite{CMW16}.

One could alternatively define a different R\'{e}nyi entropy of a channel
according to the above recipe, but in terms of the Petz--R\'{e}nyi relative
entropy. However, it is unclear whether the additivity property is generally
satisfied for the resulting R\'{e}nyi entropy of a channel, and so we do not
consider it further here, instead leaving this question open.

\subsection{Properties of the R\'{e}nyi entropy of a quantum channel}

The R\'{e}nyi entropy of a channel obeys the three desired axioms from
\cite{G18}, and in fact, the proofs are essentially the same as the previous
ones, but instead using properties of the sandwiched R\'{e}nyi relative entropy.

\begin{proposition}
Let $\mathcal{N}_{A\rightarrow B}$ be a quantum channel, and let $\Theta$ be a
uniformity preserving superchannel as defined above. Then for all
$[1/2,1)\cup(1,\infty)$:%
\begin{equation}
H_{\alpha}(\Theta(\mathcal{N}))\geq H_{\alpha}(\mathcal{N}).
\end{equation}

\end{proposition}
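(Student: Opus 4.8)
The plan is to follow exactly the template of the von Neumann case in Proposition~\ref{prop:monotone-unif-preserv}, replacing the relative entropy $D$ with the sandwiched R\'{e}nyi relative entropy $D_\alpha$ throughout. The crucial input I would need is that $D_\alpha(\mathcal{N}\Vert\mathcal{M})$ is itself a channel divergence satisfying monotonicity under the action of an arbitrary superchannel, i.e.\ that
\begin{equation}
D_\alpha(\mathcal{N}\Vert\mathcal{M})\geq D_\alpha(\Theta(\mathcal{N})\Vert\Theta(\mathcal{M}))
\end{equation}
for every superchannel $\Theta$ and every $\alpha\in[1/2,1)\cup(1,\infty)$. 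This is the analogue of the statement cited from \cite{G18,Yuan2018} in the proof of Proposition~\ref{prop:monotone-unif-preserv}, and I expect this to be the main obstacle: I would need to verify (or cite) that the superchannel data-processing property for the R\'{e}nyi channel divergence holds in this parameter range. Since a superchannel decomposes via the physical realization in \eqref{eqn:superchannel} as a pre-processing channel, tensoring with an identity on the environment, and a post-processing channel, and since the underlying $D_\alpha$ for states obeys data processing precisely for $\alpha\in[1/2,1)\cup(1,\infty)$ by \eqref{eq:DP-sand}, this monotonicity should go through for exactly the stated range of $\alpha$.

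Granting that, the remaining steps are purely formal and mirror \eqref{eq:uniform-pres-1}--\eqref{eq:uniform-pres-last}. First I would write out the definition
\begin{equation}
H_\alpha(\mathcal{N})=\log_2\left\vert B\right\vert-D_\alpha(\mathcal{N}\Vert\mathcal{R}).
\end{equation}
Then I would apply the superchannel monotonicity with $\mathcal{M}=\mathcal{R}_{A\rightarrow B}$ to obtain $D_\alpha(\mathcal{N}\Vert\mathcal{R})\geq D_\alpha(\Theta(\mathcal{N})\Vert\Theta(\mathcal{R}))$, which upon negation and adding $\log_2\left\vert B\right\vert$ gives the inequality
\begin{equation}
H_\alpha(\mathcal{N})\leq\log_2\left\vert B\right\vert-D_\alpha(\Theta(\mathcal{N})\Vert\Theta(\mathcal{R})).
\end{equation}

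Finally I would invoke the defining property of a uniformity preserving superchannel, namely $\Theta(\mathcal{R}_{A\rightarrow B})=\mathcal{R}_{C\rightarrow D}$ from \eqref{eq:unif-preserv}, together with the dimension matching $\left\vert B\right\vert=\left\vert D\right\vert$, to rewrite the right-hand side as $\log_2\left\vert D\right\vert-D_\alpha(\Theta(\mathcal{N})\Vert\mathcal{R})=H_\alpha(\Theta(\mathcal{N}))$. This yields $H_\alpha(\Theta(\mathcal{N}))\geq H_\alpha(\mathcal{N})$, as claimed. The only genuinely substantive point, as noted above, is the R\'{e}nyi superchannel data-processing inequality; everything else is a verbatim repetition of the von Neumann argument with $D$ replaced by $D_\alpha$, and the restriction to $\alpha\in[1/2,1)\cup(1,\infty)$ enters solely to guarantee that data processing holds.
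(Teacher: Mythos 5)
Your proposal is correct and follows essentially the same route as the paper: the paper's proof likewise repeats the steps \eqref{eq:uniform-pres-1}--\eqref{eq:uniform-pres-last} with the substitutions $H\rightarrow H_{\alpha}$, $D\rightarrow D_{\alpha}$, and invokes the superchannel monotonicity of the sandwiched R\'{e}nyi channel divergence for $\alpha\in[1/2,1)\cup(1,\infty)$, citing \cite{G18} for exactly the key fact you identified as the main obstacle. Your sketch of why that monotonicity holds (post-processing and pre-processing via the realization \eqref{eqn:superchannel}, combined with the state-level data processing \eqref{eq:DP-sand}) is precisely the reasoning underlying the cited result, so nothing is missing.
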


\begin{proof}
We follow the same steps as in
\eqref{eq:uniform-pres-1}--\eqref{eq:uniform-pres-last}, but making the
substitutions $H\rightarrow H_{\alpha}$ and $D\rightarrow D_{\alpha}$. Also,
we use the fact that, for $[1/2,1)\cup(1,\infty)$, the sandwiched R\'{e}nyi
channel divergence does not increase under the action of a superchannel, as
shown in \cite{G18}.
\end{proof}

\begin{proposition}
[Additivity]\label{prop:renyi-additive}Let $\mathcal{N}$ and $\mathcal{M}$ be
quantum channels. Then the channel R\'{e}nyi entropy is additive in the
following sense for $\alpha\in(1,\infty)$:%
\begin{equation}
H_{\alpha}(\mathcal{N}\otimes\mathcal{M})=H_{\alpha}(\mathcal{N})+H_{\alpha
}(\mathcal{M}).
\end{equation}

\end{proposition}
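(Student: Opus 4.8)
The plan is to mirror, line for line, the von Neumann additivity argument of Proposition~\ref{prop:entropy-additive}, replacing the relative entropy $D$ by the sandwiched R\'enyi relative entropy $D_{\alpha}$ throughout. Writing the factors as $\mathcal{N}_{A_{1}\rightarrow B_{1}}$ and $\mathcal{M}_{A_{2}\rightarrow B_{2}}$ with associated randomizing channels $\mathcal{R}^{(1)}$ and $\mathcal{R}^{(2)}$, the definition gives
\[
H_{\alpha}(\mathcal{N}\otimes\mathcal{M})=\log_{2}\left\vert B_{1}\right\vert +\log_{2}\left\vert B_{2}\right\vert -D_{\alpha}(\mathcal{N}\otimes\mathcal{M}\Vert\mathcal{R}^{(1)}\otimes\mathcal{R}^{(2)}),
\]
so the claim reduces to proving additivity of the sandwiched R\'enyi channel divergence,
\[
D_{\alpha}(\mathcal{N}\otimes\mathcal{M}\Vert\mathcal{R}^{(1)}\otimes\mathcal{R}^{(2)})=D_{\alpha}(\mathcal{N}\Vert\mathcal{R}^{(1)})+D_{\alpha}(\mathcal{M}\Vert\mathcal{R}^{(2)}),
\]
for $\alpha\in(1,\infty)$.

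The inequality ``$\geq$'' is immediate: I would restrict the supremum defining the left-hand side to product pure inputs $\psi_{R_{1}A_{1}}\otimes\phi_{R_{2}A_{2}}$ and invoke additivity of $D_{\alpha}$ on tensor-product states, namely $D_{\alpha}(\rho_{1}\otimes\rho_{2}\Vert\sigma_{1}\otimes\sigma_{2})=D_{\alpha}(\rho_{1}\Vert\sigma_{1})+D_{\alpha}(\rho_{2}\Vert\sigma_{2})$, and then optimize the two resulting terms independently. It therefore remains to establish ``$\leq$''.

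For ``$\leq$'', I would retrace \eqref{eq:addit-1}--\eqref{eq:addit-last} with $D\rightarrow D_{\alpha}$: fix an arbitrary pure $\psi_{RA_{1}A_{2}}$, set $R^{\prime}\equiv RB_{2}$, $\rho_{R^{\prime}A_{1}}\equiv\mathcal{M}_{A_{2}\rightarrow B_{2}}(\psi_{RA_{1}A_{2}})$ and $\sigma_{R^{\prime}A_{1}}\equiv\mathcal{R}_{A_{2}\rightarrow B_{2}}(\psi_{RA_{1}A_{2}})$, and rewrite $D_{\alpha}((\mathcal{N}\otimes\mathcal{M})(\psi)\Vert(\mathcal{R}^{(1)}\otimes\mathcal{R}^{(2)})(\psi))$ as $D_{\alpha}(\mathcal{N}(\rho_{R^{\prime}A_{1}})\Vert\mathcal{R}^{(1)}(\sigma_{R^{\prime}A_{1}}))$, which is valid for any divergence since the factors act on disjoint systems. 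The crux is then the R\'enyi chain-rule inequality
\[
D_{\alpha}(\mathcal{N}(\rho_{R^{\prime}A_{1}})\Vert\mathcal{R}^{(1)}(\sigma_{R^{\prime}A_{1}}))\leq D_{\alpha}(\mathcal{N}(\rho_{R^{\prime}A_{1}})\Vert\mathcal{R}^{(1)}(\rho_{R^{\prime}A_{1}}))+D_{\alpha}(\rho_{R^{\prime}A_{1}}\Vert\sigma_{R^{\prime}A_{1}}),
\]
after which I would identify $D_{\alpha}(\rho_{R^{\prime}A_{1}}\Vert\sigma_{R^{\prime}A_{1}})=D_{\alpha}(\mathcal{M}(\psi)\Vert\mathcal{R}^{(2)}(\psi))$ (the two states coincide by definition) and take the supremum over each argument separately to obtain the sum of the two channel divergences.

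The main obstacle is precisely this chain-rule step, which is the sandwiched R\'enyi analogue of \cite[Lemma~38]{BHKW18} used in Proposition~\ref{prop:entropy-additive}. Unlike the von Neumann relative entropy, $D_{\alpha}$ does not admit an elementary chain rule, and this is exactly why additivity is claimed only for $\alpha>1$: the required inequality is known in this regime---it amounts to the statement that the amortized sandwiched R\'enyi channel divergence collapses to its non-amortized counterpart, established for $\alpha>1$ in \cite{CMW16,BHKW18}---whereas the corresponding bound is unclear for $\alpha\in[1/2,1)$. I would therefore either invoke the additivity of the sandwiched R\'enyi channel divergence proved directly in \cite{CMW16}, or supply the chain-rule inequality above by adapting the operator-inequality arguments underlying \cite[Lemma~38]{BHKW18} to $D_{\alpha}$, exploiting that the second argument is the product randomizing channel whose output is $\rho_{R^{\prime}}\otimes\pi_{B_{1}}$, so that the relevant conditioning and data-processing reductions apply.
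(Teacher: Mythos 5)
Your proposal is correct and takes essentially the same route as the paper: the paper's proof likewise reruns \eqref{eq:addit-1}--\eqref{eq:addit-last} with the substitution $D\rightarrow D_{\alpha}$, supplying the one nontrivial step---your chain-rule/amortization-collapse inequality for $D_{\alpha}$ against the randomizing channel---via Proposition~41 of \cite{BHKW18}, which rests on the additivity (multiplicativity of completely bounded $1\rightarrow\alpha$ norms) from \cite{DJKR06} (see also \cite{CMW16}). Your diagnosis that this step is exactly why the claim is restricted to $\alpha\in(1,\infty)$ is also accurate.
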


\begin{proof}
The proof here follows the same approach given in the proof of
Proposition~\ref{prop:entropy-additive}, making the substitutions
$H\rightarrow H_{\alpha}$ and $D\rightarrow D_{\alpha}$. The steps in
\eqref{eq:addit-1}--\eqref{eq:addit-last} follow from the same steps given in
the proof of Proposition~41 of \cite{BHKW18}, which in turn rely upon the
additivity result from \cite{DJKR06}. See also \cite{CMW16}\ in this context.
\end{proof}

\begin{proposition}
[Reduction to states]Let the channel $\mathcal{N}_{A\rightarrow B}$ be a
replacer channel, defined such that $\mathcal{N}_{A\rightarrow B}(\rho
_{A})=\sigma_{B}$ for all states $\rho_{A}$ and some state $\sigma_{B}$. Then
the following equality holds for all $\alpha\in(0,1)\cup(1,\infty)$:%
\begin{equation}
H_{\alpha}(\mathcal{N})=H_{\alpha}(B)_{\sigma}.
\end{equation}

\end{proposition}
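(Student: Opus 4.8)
The plan is to follow exactly the strategy used in the proof of Proposition~\ref{prop:reduction-to-states}, merely substituting the channel relative entropy with the sandwiched R\'{e}nyi channel divergence and exploiting additivity of $D_{\alpha}$ in place of additivity of $D$. First I would recall that, by state purification, data processing in \eqref{eq:DP-sand}, and the Schmidt decomposition, it suffices to take the supremum in $D_{\alpha}(\mathcal{N}\Vert\mathcal{R})$ over pure bipartite states $\psi_{RA}$ with $R$ isomorphic to $A$. Since $\mathcal{N}_{A\rightarrow B}$ is a replacer channel, its action on any such input is $\mathcal{N}_{A\rightarrow B}(\psi_{RA})=\psi_{R}\otimes\sigma_{B}$, while the completely randomizing channel produces $\mathcal{R}_{A\rightarrow B}(\psi_{RA})=\psi_{R}\otimes\pi_{B}$. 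Hence the only quantity to evaluate is $D_{\alpha}(\psi_{R}\otimes\sigma_{B}\Vert\psi_{R}\otimes\pi_{B})$.

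The key step is to invoke additivity of the sandwiched R\'{e}nyi relative entropy on tensor products, together with the fact that $D_{\alpha}(\psi_{R}\Vert\psi_{R})=0$, to obtain $D_{\alpha}(\psi_{R}\otimes\sigma_{B}\Vert\psi_{R}\otimes\pi_{B})=D_{\alpha}(\sigma_{B}\Vert\pi_{B})$. Because this value is independent of the chosen input $\psi_{RA}$, the supremum defining $D_{\alpha}(\mathcal{N}\Vert\mathcal{R})$ collapses to the single number $D_{\alpha}(\sigma_{B}\Vert\pi_{B})$. Substituting into Definition~\ref{def:renyi-ent-ch} and applying the state identity $H_{\alpha}(B)_{\sigma}=\log_{2}\left\vert B\right\vert-D_{\alpha}(\sigma_{B}\Vert\pi_{B})$ recorded earlier then yields $H_{\alpha}(\mathcal{N})=\log_{2}\left\vert B\right\vert-D_{\alpha}(\sigma_{B}\Vert\pi_{B})=H_{\alpha}(B)_{\sigma}$, as claimed.

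I do not expect a genuine obstacle here, as the argument is essentially immediate once tensor-product additivity of $D_{\alpha}$ is granted. The only point meriting minor care is that this additivity, and the vanishing $D_{\alpha}(\psi_{R}\Vert\psi_{R})=0$, must hold across the full range $\alpha\in(0,1)\cup(1,\infty)$ asserted in the statement, rather than only on the subinterval where the data-processing inequality is available; this follows directly from the closed-form expression in \eqref{eq:sandwiched-Renyi}, since the trace there factorizes over the two tensor factors. This is precisely why the claimed parameter range is wider than the one appearing in the additivity result of Proposition~\ref{prop:renyi-additive}.
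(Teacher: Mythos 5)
Your proof is correct and takes essentially the same route as the paper, which simply states that the argument of Proposition~\ref{prop:reduction-to-states} carries over under the substitutions $H\rightarrow H_{\alpha}$, $D\rightarrow D_{\alpha}$: the replacer channel's output makes the divergence input-independent and equal to $D_{\alpha}(\sigma_{B}\Vert\pi_{B})$, after which Definition~\ref{def:renyi-ent-ch} gives the claim. One small remark: your opening reduction to pure inputs via the data-processing inequality \eqref{eq:DP-sand} is unavailable for $\alpha\in(0,1/2)$, but it is also unnecessary, since $\mathcal{N}_{A\rightarrow B}(\rho_{RA})=\rho_{R}\otimes\sigma_{B}$ for an arbitrary (possibly mixed) input $\rho_{RA}$, and the tensor factorization you correctly extract from \eqref{eq:sandwiched-Renyi} yields $D_{\alpha}(\rho_{R}\otimes\sigma_{B}\Vert\rho_{R}\otimes\pi_{B})=D_{\alpha}(\sigma_{B}\Vert\pi_{B})$ for every input and every $\alpha\in(0,1)\cup(1,\infty)$.
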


\begin{proof}
The proof is essentially the same as the proof of
Proposition~\ref{prop:reduction-to-states}, making the substitutions
$H\rightarrow H_{\alpha}$ and $D\rightarrow D_{\alpha}$.
\end{proof}

\bigskip

We can then conclude that the R\'{e}nyi entropy of a channel satisfies the
normalization axiom from the fact that $H_{\alpha}(B)_{\sigma}=\log\left\vert
B\right\vert $ if $\sigma_{B}$ is maximally mixed and $H(B)_{\sigma}=0$ if
$\sigma_{B}$ is pure.

\subsection{Alternate representations for the R\'{e}nyi entropy of a quantum
channel}

Just as we showed in Section~\ref{sec:alt-reps-ent}\ that there are alternate
representations for the entropy of a quantum channel, here we do the same for
the R\'{e}nyi entropy of a channel. We define the conditional R\'{e}nyi
entropy of a bipartite state $\rho_{AB}$ as%
\begin{equation}
H_{\alpha}(A|B)_{\rho|\rho}\equiv-D_{\alpha}(\rho_{AB}\Vert I_{A}\otimes
\rho_{B}),
\end{equation}
where $D_{\alpha}(\rho\Vert\sigma)$ is the sandwiched R\'{e}nyi relative
entropy from \eqref{eq:sandwiched-Renyi}. The conditional Petz--R\'{e}nyi
entropy of a bipartite state $\rho_{AB}$ is defined as%
\begin{equation}
\overline{H}_{\alpha}(A|B)_{\rho}\equiv-\inf_{\sigma_{B}}\overline{D}_{\alpha
}(\rho_{AB}\Vert I_{A}\otimes\sigma_{B}),
\end{equation}
where the Petz--R\'{e}nyi relative entropy $\overline{D}_{\alpha}(\rho
\Vert\sigma)$ is defined for $\alpha\in(0,1)\cup(1,\infty)$ as \cite{P85,P86}%
\begin{equation}
\overline{D}_{\alpha}(\rho\Vert\sigma)\equiv\frac{1}{\alpha-1}\log
_{2}\operatorname{Tr}\left\{  \rho^{\alpha}\sigma^{1-\alpha}\right\}  ,
\end{equation}
whenever either $\alpha\in(0,1)$ or $\operatorname{supp}(\rho)\subseteq
\operatorname{supp}(\sigma)$ and $\alpha>1$. Otherwise, it is set to $+\infty
$. The Petz--R\'{e}nyi relative entropy obeys the data processing inequality
for $\rho$ and $\sigma$ as above, a quantum channel $\mathcal{N}$, and
$\alpha\in(0,1)\cup(1,2]$ \cite{P85,P86}:%
\begin{equation}
\overline{D}_{\alpha}(\rho\Vert\sigma)\geq\overline{D}_{\alpha}(\mathcal{N}%
(\rho)\Vert\mathcal{N}(\sigma)).
\end{equation}

The completely bounded $1\rightarrow\alpha$ norm of a quantum channel is
defined for $\alpha\geq1$ as \cite{DJKR06}%
\begin{equation}
\left\Vert \mathcal{N}_{A\rightarrow B}\right\Vert _{\text{CB},1\rightarrow
\alpha}\equiv\sup_{\rho_{R}}\left\Vert \rho_{R}^{1/2\alpha}\mathcal{N}%
_{A\rightarrow B}(\Gamma_{RA})\rho_{R}^{1/2\alpha}\right\Vert _{\alpha},
\end{equation}
where the optimization is with respect to a density operator $\rho_{R}$ and
$\Gamma_{RA}\equiv|\Gamma\rangle\langle\Gamma|_{RA}$ denotes the projection
onto the following maximally entangled vector:%
\begin{equation}
|\Gamma\rangle_{RA}\equiv\sum_{i}|i\rangle_{R}|i\rangle_{A},
\end{equation}
where $\{|i\rangle_{R}\}_{i}$ and $\{|i\rangle_{A}\}_{i}$ are orthonormal
bases and system$~R$ is isomorphic to the channel input system~$A$.

We can now state the alternate representations for the R\'{e}nyi entropy of a channel:

\begin{proposition}
\label{prop:identities-for-renyi}Let $\mathcal{N}_{A\rightarrow B}$ be a
quantum channel, and let $\mathcal{U}_{A\rightarrow BE}^{\mathcal{N}}$ be an
isometric channel extending it, as in \eqref{eq:iso-extend}. Then for
$\alpha\in(0,1)\cup(1,\infty)$,%
\begin{equation}
H_{\alpha}(\mathcal{N})=\inf_{\psi_{RA}}H_{\alpha}(B|R)_{\omega|\omega}%
=-\sup_{\rho_{A}}\overline{H}_{\beta}(B|E)_{\tau}.
\end{equation}
where the first optimization is with respect to bipartite pure states with
system $R$ isomorphic to system $A$, $\omega_{RB}\equiv\mathcal{N}%
_{A\rightarrow B}(\psi_{RA})$, $\tau_{BE}\equiv\mathcal{U}_{A\rightarrow
BE}^{\mathcal{N}}(\rho_{A})$, and $\beta=1/\alpha$.\ For $\alpha\in
\lbrack1/2,1)\cup(1,\infty)$,%
\begin{equation}
\left\vert H_{\alpha}(\mathcal{N})\right\vert \leq\log_{2}\left\vert
B\right\vert .
\end{equation}
For $\alpha\in(1,\infty)$, we have that%
\begin{equation}
H_{\alpha}(\mathcal{N})=\frac{\alpha}{1-\alpha}\log_{2}\left\Vert
\mathcal{N}_{A\rightarrow B}\right\Vert _{\operatorname{CB},1\rightarrow
\alpha}.
\end{equation}

\end{proposition}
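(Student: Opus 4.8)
The plan is to prove the three parts in order, reusing the structure of the proof of Proposition~\ref{prop:alt-reps-ent-ch}. For the first identity, starting from Definition~\ref{def:renyi-ent-ch} and the reduction of the channel divergence to pure inputs, I would apply the scaling property $D_{\alpha}(\rho\Vert c\sigma)=D_{\alpha}(\rho\Vert\sigma)-\log_{2}c$ (valid for the sandwiched R\'enyi relative entropy and any constant $c>0$) with $c=1/\left\vert B\right\vert$ to the second argument $\mathcal{R}_{A\rightarrow B}(\psi_{RA})=\psi_{R}\otimes\pi_{B}$. This replaces $\pi_{B}$ by $I_{B}$ at the cost of a $+\log_{2}\left\vert B\right\vert$ term that cancels the $\log_{2}\left\vert B\right\vert$ in the definition, leaving $H_{\alpha}(\mathcal{N})=-\sup_{\psi_{RA}}D_{\alpha}(\omega_{RB}\Vert I_{B}\otimes\omega_{R})=\inf_{\psi_{RA}}H_{\alpha}(B|R)_{\omega|\omega}$, where I use $\omega_{R}=\psi_{R}$. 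These are exactly the steps \eqref{eq:ent-to-CB-ent-1}--\eqref{eq:ent-to-CB-ent-last} with $D\rightarrow D_{\alpha}$.

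For the second equality I would invoke the duality between the sandwiched and Petz conditional R\'enyi entropies: for the pure tripartite state $\tau_{RBE}=\mathcal{U}_{A\rightarrow BE}^{\mathcal{N}}(\psi_{RA})$, whose $RB$-marginal equals $\omega_{RB}$, one has $H_{\alpha}(B|R)_{\tau|\tau}=-\overline{H}_{\beta}(B|E)_{\tau}$ with $\beta=1/\alpha$; this reduces, as $\alpha\rightarrow1$, to the conditional-entropy duality $H(B|R)_{\omega}=-H(B|E)_{\mathcal{U}(\psi_{A})}$ used in Proposition~\ref{prop:alt-reps-ent-ch}. Since the right-hand side $\overline{H}_{\beta}(B|E)_{\tau}$ depends only on the input $\rho_{A}=\operatorname{Tr}_{R}\{\psi_{RA}\}$ (as $\tau_{BE}=\mathcal{U}_{A\rightarrow BE}^{\mathcal{N}}(\rho_{A})$ is independent of the chosen purification), and because purifications of a fixed $\rho_{A}$ are related by isometries on $R$ under which $H_{\alpha}(B|R)_{\omega|\omega}$ is invariant, the infimum over pure inputs $\psi_{RA}$ becomes a supremum over $\rho_{A}$, giving $\inf_{\psi_{RA}}H_{\alpha}(B|R)_{\omega|\omega}=-\sup_{\rho_{A}}\overline{H}_{\beta}(B|E)_{\tau}$. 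The dimension bound then follows from the first representation together with the uniform bound $-\log_{2}\left\vert B\right\vert\leq H_{\alpha}(B|R)_{\omega|\omega}\leq\log_{2}\left\vert B\right\vert$ for the conditional R\'enyi entropy, since an infimum of numbers lying in $[-\log_{2}\left\vert B\right\vert,\log_{2}\left\vert B\right\vert]$ stays in that interval.

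For the completely bounded norm identity (for $\alpha>1$) I would parametrize a pure input as $|\psi\rangle_{RA}=(\sqrt{\rho_{R}}\otimes I_{A})|\Gamma\rangle_{RA}$, so that $\omega_{R}=\rho_{R}$ and $\omega_{RB}=(\sqrt{\rho_{R}}\otimes I_{B})\mathcal{N}(\Gamma_{RA})(\sqrt{\rho_{R}}\otimes I_{B})$. Writing $D_{\alpha}(\omega_{RB}\Vert I_{B}\otimes\rho_{R})=\frac{\alpha}{\alpha-1}\log_{2}\Vert(\rho_{R}^{s}\otimes I_{B})\,\omega_{RB}\,(\rho_{R}^{s}\otimes I_{B})\Vert_{\alpha}$ with $s=(1-\alpha)/(2\alpha)$, the two factors of $\rho_{R}$ combine as $\rho_{R}^{s+1/2}=\rho_{R}^{1/(2\alpha)}$, reproducing exactly the operator inside the $\operatorname{CB},1\rightarrow\alpha$ norm. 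Because $\frac{\alpha}{\alpha-1}>0$ for $\alpha>1$, maximizing $D_{\alpha}$ over inputs coincides with maximizing the Schatten norm, so $\sup_{\psi_{RA}}D_{\alpha}(\omega_{RB}\Vert I_{B}\otimes\omega_{R})=\frac{\alpha}{\alpha-1}\log_{2}\Vert\mathcal{N}\Vert_{\operatorname{CB},1\rightarrow\alpha}$, whence $H_{\alpha}(\mathcal{N})=\frac{\alpha}{1-\alpha}\log_{2}\Vert\mathcal{N}\Vert_{\operatorname{CB},1\rightarrow\alpha}$. To see that restricting to positive $\sqrt{\rho_{R}}$ loses nothing relative to a general $|\psi\rangle_{RA}=(M_{R}\otimes I_{A})|\Gamma\rangle_{RA}$, I would use the polar decomposition $M_{R}=\sqrt{\rho_{R}}\,U_{R}$ and the invariance of $D_{\alpha}$ under the joint unitary $U_{R}^{\dag}\otimes I_{B}$, which shows that the value for $(\rho_{R},U_{R})$ equals the value for $(U_{R}^{\dag}\rho_{R}U_{R},I)$.

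The main obstacle I anticipate is the duality step: one must pin down the correct sandwiched/Petz pairing with parameter $\beta=1/\alpha$ and the matching fixed-marginal versus optimized-$\sigma$ versions on each side, and verify carefully that the infimum over $\psi_{RA}$ indeed turns into a supremum over $\rho_{A}$ under purification. By comparison, the scaling manipulation of the first identity, the exponent bookkeeping $s+1/2=1/(2\alpha)$, and the unitary-invariance reduction in the norm step are routine.
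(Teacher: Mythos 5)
Your proposal is correct and, for the two entropy identities, follows essentially the paper's own route: you obtain the first identity exactly as in \eqref{eq:ent-to-CB-ent-1}--\eqref{eq:ent-to-CB-ent-last} with the substitution $D\rightarrow D_{\alpha}$, and the second rests on the same duality $H_{\alpha}(B|R)_{\omega|\omega}=-\overline{H}_{\beta}(B|E)_{\tau}$ with $\beta=1/\alpha$ from \cite[Theorem~2]{TBH14}, together with the observation that the Petz side depends only on $\rho_{A}=\operatorname{Tr}_{R}\{\psi_{RA}\}$, which turns the infimum over purifications into a supremum over inputs. You diverge in two places. For the completely bounded norm identity you inline a direct computation---parametrizing $\psi_{RA}=(\sqrt{\rho_{R}}\otimes I_{A})\Gamma_{RA}(\sqrt{\rho_{R}}\otimes I_{A})$, checking the exponent bookkeeping $s+1/2=1/(2\alpha)$, and reducing general pure states via polar decomposition and unitary invariance of $D_{\alpha}$---where the paper simply cites \cite[Lemma~8]{CMW16} and absorbs the scaling map $\Omega_{\pi_{B}^{(1-\alpha)/\alpha}}$; your version is self-contained (it effectively re-proves that lemma for the case $\mathcal{M}=\mathcal{R}$), and the algebra checks out. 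Conversely, for the dimension bound you are less self-contained than the paper: you assert the uniform bound $-\log_{2}\left\vert B\right\vert\leq H_{\alpha}(B|R)_{\omega|\omega}\leq\log_{2}\left\vert B\right\vert$ as known, whereas the upper bound requires data processing (hence $\alpha\geq1/2$) and the lower bound is precisely the nontrivial half that the paper derives via the same TBH14 duality, the comparison in \cite[Corollary~4]{TBH14}, and data processing of the Petz--R\'{e}nyi divergence under measurements \cite[Section~2.2]{Hay06}. The asserted bound is indeed standard for $\alpha\in[1/2,\infty]$ (it can be extracted from \cite{T15}), so this is a citation gap rather than a mathematical one; still, you should either cite it explicitly or run the duality argument, since the restriction to $\alpha\in\lbrack1/2,1)\cup(1,\infty)$ in the statement of the dimension bound originates exactly in this step, and an unqualified appeal to a ``uniform bound for the conditional R\'{e}nyi entropy'' glosses over why the entropy identities are claimed for $\alpha\in(0,1)\cup(1,\infty)$ while the bound is not.
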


\begin{proof}
To establish the equality%
\begin{equation}
H_{\alpha}(\mathcal{N})=\inf_{\psi_{RA}}H_{\alpha}(B|R)_{\omega|\omega},
\label{eq:simple-identity-H-alpha}%
\end{equation}
we follow the same reasoning as in
\eqref{eq:ent-to-CB-ent-1}--\eqref{eq:ent-to-CB-ent-last}, but making the
substitutions $H\rightarrow H_{\alpha}$ and $D\rightarrow D_{\alpha}$. To
establish the equality%
\begin{equation}
\inf_{\psi_{RA}}H_{\alpha}(B|R)_{\omega|\omega}=-\sup_{\rho_{A}}\overline
{H}_{\beta}(B|E)_{\tau},
\end{equation}
we employ the identity \cite[Theorem~2]{TBH14}%
\begin{equation}
H_{\alpha}(B|R)_{\omega|\omega}=-\overline{H}_{\beta}(B|E)_{\tau}.
\end{equation}

To establish the dimension bounds, consider from data processing that%
\begin{equation}
H_{\alpha}(B|R)_{\omega|\omega}\leq H_{\alpha}(B)_{\omega}\leq\log
_{2}\left\vert B\right\vert ,
\end{equation}
where the second inequality follows from a dimension bound for the R\'{e}nyi
entropy. To establish the other dimension bound, let us employ the identity
\cite[Theorem~2]{TBH14} again%
\begin{align}
H_{\alpha}(B|R)_{\omega|\omega}  &  =-\overline{H}_{\beta}(B|E)_{\tau}\\
&  \geq-\inf_{\sigma_{E}}D_{\beta}(\tau_{BE}\Vert I_{B}\otimes\sigma_{E})\\
&  \geq-H_{\beta}(B)_{\tau}\\
&  \geq-\log_{2}\left\vert B\right\vert .
\end{align}
The first inequality is stated in \cite[Corollary~4]{TBH14}, and the second
follows from data processing of the Petz--R\'{e}nyi relative entropy under
measurements, which holds for $\beta\in(0,1)\cup(1,\infty)$, as shown in
\cite[Section~2.2]{Hay06} (note that a measurement in the eigenbasis of
$\tau_{B}$ combined with the partial trace over system $E$ is a particular
kind of measurement).

To establish the connection to the completely bounded norm for $\alpha>1$, we
invoke \cite[Lemma~8]{CMW16} to find that%
\begin{align}
&  H_{\alpha}(\mathcal{N})\nonumber\\
&  =\log_{2}\left\vert B\right\vert -D_{\alpha}(\mathcal{N}\Vert\mathcal{R})\\
&  =\log_{2}\left\vert B\right\vert -\frac{\alpha}{\alpha-1}\log\left\Vert
\Omega_{\pi_{B}^{(1-\alpha)/\alpha}}\circ\mathcal{N}_{A\rightarrow
B}\right\Vert _{\text{CB},1\rightarrow\alpha}\\
&  =\frac{\alpha}{1-\alpha}\log\left\Vert \mathcal{N}_{A\rightarrow
B}\right\Vert _{\text{CB},1\rightarrow\alpha},
\end{align}
where
\begin{align}
\Omega_{\pi_{B}^{(1-\alpha)/\alpha}}(X_{B})  &  \equiv\pi_{B}^{\left(
1-\alpha\right)  /2\alpha}X_{B}\pi_{B}^{\left(  1-\alpha\right)  /2\alpha}\\
&  =\left\vert B\right\vert ^{\left(  \alpha-1\right)  /\alpha}X_{B},
\end{align}
concluding the proof.
\end{proof}

\bigskip

Again, the dimension lower bound is saturated for the identity channel, while
the dimension upper bound is saturated for the completely depolarizing channel.

\section{Min-entropy of a quantum channel}

\label{sec:min-ent-ch}

The min-entropy of a quantum state $\rho_{A}$ of a system $A$ is defined as
\cite{Renner2005}%
\begin{align}
H_{\min}(A)_{\rho}  &  \equiv-\log_{2}\left\Vert \rho\right\Vert _{\infty}\\
&  =\lim_{\alpha\rightarrow\infty}H_{\alpha}(A)_{\rho}.
\end{align}
It has found extensive application in the context of quantum cryptography
\cite{Renner2005}. The max-relative entropy of a state $\rho$ with a positive
semi-definite operator $\sigma$ is defined as \cite{Datta09}%
\begin{align}
D_{\max}(\rho\Vert\sigma)  &  \equiv\inf\left\{  \lambda:\rho\leq2^{\lambda
}\sigma\right\} \\
&  =\log_{2}\left\Vert \sigma^{-1/2}\rho\sigma^{-1/2}\right\Vert _{\infty},
\end{align}
whenever $\operatorname{supp}(\rho)\subseteq\operatorname{supp}(\sigma)$, and
otherwise, it is set to $+\infty$. The max-relative entropy was recently given
an information-theoretic meaning as the distinguishability cost of two quantum
states \cite{WW19states}. It is known that \cite{MDSFT13}%
\begin{equation}
D_{\max}(\rho\Vert\sigma)=\lim_{\alpha\rightarrow\infty}D_{\alpha}(\rho
\Vert\sigma). \label{eq:max-rel-as-limit}%
\end{equation}

Observe that the min-entropy of a quantum state $\rho$ can be written as the
difference of the number of physical qubits for the system~$A$ and the
max-relative entropy of $\rho$ to the maximally mixed state~$\pi_{A}$:
\begin{equation}
H_{\min}(A)_{\rho}=\log_{2}\left\vert A\right\vert -D_{\max}(\rho_{A}\Vert
\pi_{A}).
\end{equation}
Thus following the spirit of previous developments, we define the min-entropy
of a channel as follows:

\begin{definition}
[Min-entropy of a quantum channel]\label{def:min-entropy-ch}We define the
\textit{min-entropy of a quantum channel} $\mathcal{N}_{A\rightarrow B}$
according to the recipe given in the introduction of our paper:%
\begin{equation}
H_{\min}(\mathcal{N})\equiv\log_{2}\left\vert B\right\vert -D_{\max
}(\mathcal{N}\Vert\mathcal{R}),
\end{equation}
where $D_{\max}(\mathcal{N}\Vert\mathcal{R})$ is the max-channel divergence
\cite{CMW16,LKDW18} and $\mathcal{R}_{A\rightarrow B}$ is the completely
randomizing channel from \eqref{eq:cm-random-ch}.
\end{definition}

The max-channel divergence is defined for two arbitrary channels
$\mathcal{N}_{A\rightarrow B}$ and $\mathcal{M}_{A\rightarrow B}$ as
\cite{CMW16,LKDW18}%
\begin{align}
&  D_{\max}(\mathcal{N}\Vert\mathcal{M})\nonumber\\
&  \equiv\sup_{\rho_{RA}}D_{\max}(\mathcal{N}_{A\rightarrow B}(\rho_{RA}%
)\Vert\mathcal{M}_{A\rightarrow B}(\rho_{RA}))\\
&  =D_{\max}(\mathcal{N}_{A\rightarrow B}(\Phi_{RA})\Vert\mathcal{M}%
_{A\rightarrow B}(\Phi_{RA})). \label{eq:max-rel-ch-max-ent}%
\end{align}
The latter equality, that an optimal state is the maximally entangled state
$\Phi_{RA}$, was proved in \cite[Lemma~12]{BHKW18} (see also \cite[Eq.~(45)]%
{GFWRSCW18}\ and \cite[Remark~13]{BHKW18} in this context). In fact, an
optimal state is any pure bipartite state with full Schmidt rank (reduced
state has full support).

Due to the limit in \eqref{eq:max-rel-as-limit} and the equality in
\eqref{eq:max-rel-ch-max-ent}, it follows that%
\begin{equation}
D_{\max}(\mathcal{N}\Vert\mathcal{M})=\lim_{\alpha\rightarrow\infty}D_{\alpha
}(\mathcal{N}\Vert\mathcal{M}).
\end{equation}
As such, we can immediately conclude that the min-entropy of a channel
$H_{\min}(\mathcal{N})$ is equal to the following limit%
\begin{equation}
H_{\min}(\mathcal{N})=\lim_{\alpha\rightarrow\infty}H_{\alpha}(\mathcal{N}%
),\label{eq:min-ent-limit}%
\end{equation}
and that it satisfies non-decrease under a uniformity preserving superchannel,
additivity, and reduction to states (i.e., for a replacer channel, it reduces
to the min-entropy of the replacing state), which, as stated previously, imply
the three axioms from \cite{G18}.

\subsection{Alternate representation for the min-entropy of a channel in terms
of conditional min-entropies}

The conditional min-entropy of a bipartite quantum state $\rho_{AB}$ is
defined as \cite{Renner2005}%
\begin{equation}
H_{\min}(A|B)_{\rho}\equiv-\inf_{\sigma_{B}}D_{\max}(\rho_{AB}\Vert
I_{A}\otimes\sigma_{B}).
\end{equation}
We can also define the following related quantity:%
\begin{equation}
H_{\min}(A|B)_{\rho|\rho}\equiv-D_{\max}(\rho_{AB}\Vert I_{A}\otimes\rho_{B}),
\end{equation}
and clearly we have that%
\begin{equation}
H_{\min}(A|B)_{\rho}\geq H_{\min}(A|B)_{\rho|\rho}.
\end{equation}

The identities in \eqref{eq:cb-min} and \eqref{eq:von-neu-collapse}, as well
as the definition of conditional min-entropy, inspire the following quantity:%
\begin{equation}
H_{\min}^{\uparrow}(\mathcal{N})=\inf_{\psi_{RA}}H_{\min}(B|R)_{\omega}.
\end{equation}
In the above, $\omega_{RB}\equiv\mathcal{N}_{A\rightarrow B}(\psi_{RA})$ and
$\psi_{RA}$ is a pure state with system $R$ isomorphic to the channel input
system $A$.

This quantity might seem different from the min-entropy of a channel, but the
following proposition states that $H_{\min}^{\uparrow}(\mathcal{N})$ is
actually equal to the min-entropy of the channel $H_{\min}(\mathcal{N})$, thus
simplifying the notion of min-entropy of a quantum channel:

\begin{proposition}
\label{prop:alt-min-ent-ch}Let $\mathcal{N}_{A\rightarrow B}$ be a quantum
channel. Then%
\begin{align}
H_{\min}(\mathcal{N})  &  =\inf_{\psi_{RA}}H_{\min}(B|R)_{\omega|\omega}\\
&  =H_{\min}(B|R)_{\Phi^{\mathcal{N}}|\Phi^{\mathcal{N}}}\\
&  =H_{\min}^{\uparrow}(\mathcal{N}),
\end{align}
where $\omega_{RB}\equiv\mathcal{N}_{A\rightarrow B}(\psi_{RA})$ and
$\psi_{RA}$ is a pure state with system $R$ isomorphic to the channel input
system $A$.\ Also, the state $\Phi_{RB}^{\mathcal{N}}=\mathcal{N}%
_{A\rightarrow B}(\Phi_{RA})$ is the Choi state of the channel.
\end{proposition}

\begin{proof}
The first equality follows from the same steps in the proof of
Proposition~\ref{prop:identities-for-renyi} (see the reasoning around
\eqref{eq:simple-identity-H-alpha}). The second equality, i.e.,%
\begin{equation}
H_{\min}(\mathcal{N})=H_{\min}(B|R)_{\Phi^{\mathcal{N}}|\Phi^{\mathcal{N}}},
\label{eq:reduction-channel-min-ent}%
\end{equation}
follows by the observation in \eqref{eq:max-rel-ch-max-ent}.

The proof of the equality $H_{\min}(\mathcal{N})=H_{\min}^{\uparrow
}(\mathcal{N})$\ follows from semi-definite programming duality, similar to
what was done previously for conditional min-entropy in \cite{KRS09}. Consider
that%
\begin{align}
&  H_{\min}^{\uparrow}(\mathcal{N})\nonumber\\
&  =\inf_{\psi_{RA}}H_{\min}(B|R)_{\mathcal{N}_{A\rightarrow B}(\psi_{RA})}\\
&  =\inf_{\psi_{RA}}\left[  -\inf_{\sigma_{R}}D_{\max}(\mathcal{N}%
_{A\rightarrow B}(\psi_{RA})\Vert\sigma_{R}\otimes I_{B})\right] \\
&  =\inf_{\psi_{RA}}\left[  -\inf_{\sigma_{R}}\left\{  \log_{2}%
\operatorname{Tr}\{\sigma_{R}\}:\mathcal{N}_{A\rightarrow B}(\psi_{RA}%
)\leq\sigma_{R}\otimes I_{B}\right\}  \right] \\
&  =-\log_{2}\sup_{\psi_{RA}}\inf_{\sigma_{R}}\left\{  \operatorname{Tr}%
\{\sigma_{R}\}:\mathcal{N}_{A\rightarrow B}(\psi_{RA})\leq\sigma_{R}\otimes
I_{B}\right\}  .
\end{align}
Considering the innermost part of the last line above as the following
semi-definite program%
\begin{equation}
\inf_{\sigma_{R}}\left\{  \operatorname{Tr}\{\sigma_{R}\}:\mathcal{N}%
_{A\rightarrow B}(\psi_{RA})\leq\sigma_{R}\otimes I_{B}\right\}  ,
\end{equation}
its dual is given by%
\begin{equation}
\sup_{X_{RB}}\left\{  \operatorname{Tr}\{X_{RB}\mathcal{N}_{A\rightarrow
B}(\psi_{RA})\}:X_{R}\leq I_{R},\ X_{RB}\geq0\right\}  .
\end{equation}
Now let us write the pure state $\psi_{RA}$ as $\psi_{RA}=Y_{R}\Gamma
_{RA}Y_{R}^{\dag}$, where $Y_{R}$ satisfies $\operatorname{Tr}\{Y_{R}^{\dag
}Y_{R}\}=1$, so that $Y_{R}^{\dag}Y_{R}$ is a density operator $\rho_{R}$. Due
to the fact that the set of pure states $\psi_{RA}$ with full-rank reduced
state $\psi_{R}$ is dense in the set of all pure states, it suffices to
optimize over these. This means that we can rewrite the last line of the first
block (without the negative logarithm) as\begin{widetext}
\begin{equation}
\sup_{Y_{R},X_{RB}}\left\{  \operatorname{Tr}\{Y_{R}^\dag X_{RB}Y
_{R}\mathcal{N}_{A\rightarrow B}(\Gamma_{RA})\}:X_{R}\leq I_{R}%
,\ X_{RB}\geq0,\ \operatorname{Tr}\{Y_{R}^\dag Y_{R}\}=1,\ |Y_{R}|>0\right\}  .
\end{equation}
We now define $X_{RB}^{\prime}\equiv Y_{R}^\dag X_{RB}Y_{R}$, so that
\begin{align}
X_{RB} \geq 0 \quad & \Longleftrightarrow \quad X'_{RB} \geq 0 \\
X_{R} \leq I_R \quad & \Longleftrightarrow \quad X_{R}' \leq Y_R^\dag Y_R = \rho_R.
\end{align}
Then we find that the
above is equal to%
\begin{align}
&  \sup_{\rho_{R},X_{RB}^{\prime}}\left\{  \operatorname{Tr}\{X_{RB}^{\prime
}\mathcal{N}_{A\rightarrow B}(\Gamma_{RA})\}:X_{R}^{\prime}\leq\rho
_{R},\ \operatorname{Tr}\{\rho_{R}\}=1,\ X_{RB}^{\prime}\geq0,\ \rho_{R}%
\geq0\right\} \nonumber\\
&  =\sup_{X_{RB}^{\prime}}\left\{  \operatorname{Tr}\{X_{RB}^{\prime
}\mathcal{N}_{A\rightarrow B}(\Gamma_{RA})\}:\operatorname{Tr}\{X_{R}^{\prime
}\}=1,\ X_{RB}^{\prime}\geq0\right\} \\
&  =\sup_{X_{RB}^{\prime}}\left\{  \operatorname{Tr}\{X_{RB}^{\prime
}\mathcal{N}_{A\rightarrow B}(\Gamma_{RA})\}:\operatorname{Tr}\{X_{RB}%
^{\prime}\}=1,\ X_{RB}^{\prime}\geq0\right\} \\
&  =\left\Vert \mathcal{N}_{A\rightarrow B}(\Gamma_{RA})\right\Vert _{\infty}%
\end{align}
\end{widetext}
The first equality above follows because we are maximizing over both $\rho
_{R}$ and $X^{\prime}_{RB}$, and the objective function only increases by
taking $X_{R}^{\prime}=\rho_{R}$ and with a maximal value one for the trace.
So we conclude that%
\begin{align}
&  \inf_{\psi_{RA}}H_{\min}(B|R)_{\mathcal{N}_{A\rightarrow B}(\psi_{RA}%
)}\nonumber\\
&  =-\log_{2}\left\Vert \mathcal{N}_{A\rightarrow B}(\Gamma_{RA})\right\Vert
_{\infty}\\
&  =-\log_{2}\inf\left\{  \lambda:\mathcal{N}_{A\rightarrow B}(\Gamma
_{RA})\leq\lambda I_{RB}\right\} \\
&  =-\log_{2}\inf\left\{  \lambda:\mathcal{N}_{A\rightarrow B}(\Phi_{RA}%
)\leq\lambda\pi_{R}\otimes I_{B}\right\} \\
&  =H_{\min}(B|R)_{\Phi^{\mathcal{N}}|\Phi^{\mathcal{N}}}\\
&  =H_{\min}(\mathcal{N}),
\end{align}
where $\Phi^{\mathcal{N}}=\mathcal{N}_{A\rightarrow B}(\Phi_{RA})$ and the
last equality follows from \eqref{eq:reduction-channel-min-ent}.
\end{proof}

\subsection{Relation of min-entropy of a channel to its extended min-entropy}

The extended min-entropy of a channel is defined as~\cite{G18}%
\begin{equation}
H_{\min}^{\operatorname{ext}}(\mathcal{N})\equiv H_{\min}(B|R)_{\omega},
\end{equation}
where $\omega_{RA}=\mathcal{N}_{A\rightarrow B}(\Phi_{RA})$, with $\Phi_{RA}$
the maximally entangled state. It is not clear to us whether $H_{\min
}^{\operatorname{ext}}(\mathcal{N})$\ is generally equal to the min-entropy of
a channel $H_{\min}(\mathcal{N})$. However, due to
\eqref{eq:reduction-channel-min-ent}, we conclude that%
\begin{equation}
H_{\min}^{\operatorname{ext}}(\mathcal{N})\geq H_{\min}(\mathcal{N}).
\end{equation}

\section{Asymptotic Equipartition Property}

\label{sec:AEP}

The smoothed conditional min-entropy of a bipartite state $\rho_{AB}$\ is
defined for $\varepsilon\in(0,1)$ as (see, e.g., \cite{T15})%
\begin{equation}
H_{\min}^{\varepsilon}(A|B)_{\rho}\equiv\sup_{P(\rho_{AB},\widetilde{\rho
}_{AB})\leq\varepsilon}H_{\min}(A|B)_{\widetilde{\rho}},
\end{equation}
where the optimization is with respect to all subnormalized states
$\widetilde{\rho}_{AB}$ (satisfying $\widetilde{\rho}_{AB}\geq0$,
$\operatorname{Tr}\{\widetilde{\rho}_{AB}\} \leq1$, and $\widetilde{\rho}%
_{AB}\neq0$) and the sine distance (also called purified distance) of quantum
states $\rho$ and $\sigma$ \cite{R02,R03,GLN04,R06}\ is defined in terms of
the fidelity \cite{U76}\ as%
\begin{align}
P(\rho,\sigma)  &  \equiv\sqrt{1-F(\rho,\sigma)},\\
F(\rho,\sigma)  &  \equiv\left\Vert \sqrt{\rho}\sqrt{\sigma}\right\Vert
_{1}^{2}.
\end{align}
The definition of fidelity is generalized to subnormalized states $\omega$ and
$\tau$ as follows \cite{TCR10}:
\begin{equation}
F(\omega,\tau) \equiv F(\omega\oplus[1-\operatorname{Tr}\{\omega\}],\tau
\oplus[1-\operatorname{Tr}\{\tau\}]),
\end{equation}
where the right-hand side is the usual fidelity of states (that is, we just
add an extra dimension to $\omega$ and $\tau$ and complete them to states).
The smoothed conditional min-entropy satisfies the following asymptotic
equipartition property \cite{TCR09} (see also \cite{T15}), which is one way
that it connects with the conditional entropy of $\rho_{AB}$:%
\begin{equation}
\lim_{n\rightarrow\infty}\frac{1}{n}H_{\min}^{\varepsilon}(A^{n}|B^{n}%
)_{\rho^{\otimes n}}=H(A|B)_{\rho}. \label{eq:AEP-states}%
\end{equation}

The purified channel divergence of two channels $\mathcal{N}_{A\rightarrow B}$
and $\mathcal{M}_{A\rightarrow B}$ is defined as \cite{LKDW18}%
\begin{equation}
P(\mathcal{N},\mathcal{M})\equiv\sup_{\rho_{RA}}P(\mathcal{N}_{A\rightarrow
B}(\rho_{RA}),\mathcal{M}_{A\rightarrow B}(\rho_{RA})),
\label{eq:purified-ch-divergence}%
\end{equation}
Again, due to state purification, the data-processing inequality for
$P(\rho,\sigma)$,\ and the Schmidt decomposition theorem, it suffices to
optimize over states $\rho_{RA}$ that are pure and such that system $R$ is
isomorphic to system $A$. We then use this notion for smoothing the
min-entropy of a channel:

\begin{definition}
[Smoothed min-entropy of a channel]\label{def:smooth-min-ent-channel} The
smoothed min-entropy of a channel is defined for $\varepsilon\in(0,1)$ as%
\begin{equation}
H_{\min}^{\varepsilon}(\mathcal{N})\equiv\sup_{P(\mathcal{N},\widetilde
{\mathcal{N}})\leq\varepsilon}H_{\min}(\widetilde{\mathcal{N}}),
\end{equation}
where $P(\mathcal{N},\widetilde{\mathcal{N}})$ is the purified channel
divergence \cite{LKDW18}.
\end{definition}

In the following theorem, we prove that the smoothed min-entropy of a channel
satisfies an asymptotic equipartition theorem that generalizes \eqref{eq:AEP-states}.

\begin{theorem}
[Asymptotic equipartition property]\label{thm:AEP-channel-entropy}For all
$\varepsilon\in(0,1)$, the following inequality holds%
\begin{equation}
\lim_{n\rightarrow\infty}\frac{1}{n}H_{\min}^{\varepsilon}(\mathcal{N}%
^{\otimes n})\geq H(\mathcal{N}). \label{eq:lower-AEP}%
\end{equation}
We also have that%
\begin{equation}
\lim_{\varepsilon\rightarrow0}\lim_{n\rightarrow\infty}\frac{1}{n}H_{\min
}^{\varepsilon}(\mathcal{N}^{\otimes n})\leq H(\mathcal{N}).
\label{eq:upper-AEP}%
\end{equation}

\end{theorem}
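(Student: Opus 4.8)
The plan is to rewrite the statement in terms of a smoothed max-channel divergence and then treat the two bounds separately, using the state-level AEP in \eqref{eq:AEP-states} for the upper bound and the sandwiched R\'enyi channel entropy for the lower bound. Since every channel $\widetilde{\mathcal{N}}_{A\rightarrow B}$ in the smoothing ball shares the input and output spaces of $\mathcal{N}$, we have $H_{\min}(\widetilde{\mathcal{N}})=\log_{2}|B|-D_{\max}(\widetilde{\mathcal{N}}\Vert\mathcal{R})$, so that $H_{\min}^{\varepsilon}(\mathcal{N})=\log_{2}|B|-\inf_{P(\mathcal{N},\widetilde{\mathcal{N}})\leq\varepsilon}D_{\max}(\widetilde{\mathcal{N}}\Vert\mathcal{R})$; both inequalities are thus AEP statements for this smoothed divergence, and the normalization $\frac{1}{n}\log_{2}|B^{n}|=\log_{2}|B|$ matches the $\log_{2}|B|$ appearing in $H(\mathcal{N})$.

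For the upper bound \eqref{eq:upper-AEP}, I would fix a pure state $\phi_{RA}$ attaining the infimum in $H(\mathcal{N})=\inf_{\psi_{RA}}H(B|R)_{\omega}$ from Proposition~\ref{prop:alt-reps-ent-ch}, and set $\omega_{RB}\equiv\mathcal{N}_{A\rightarrow B}(\phi_{RA})$. For any $\widetilde{\mathcal{N}}$ with $P(\mathcal{N}^{\otimes n},\widetilde{\mathcal{N}})\leq\varepsilon$, I would bound the infimum over inputs in $H_{\min}(\widetilde{\mathcal{N}})$ (using the representation of Proposition~\ref{prop:alt-min-ent-ch}) by the single input $\phi^{\otimes n}$, giving $H_{\min}(\widetilde{\mathcal{N}})\leq H_{\min}(B^{n}|R^{n})_{\widetilde{\mathcal{N}}(\phi^{\otimes n})}$. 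The constraint $P(\mathcal{N}^{\otimes n},\widetilde{\mathcal{N}})\leq\varepsilon$ evaluated at $\phi^{\otimes n}$ forces $\widetilde{\mathcal{N}}(\phi^{\otimes n})$ into the purified-distance ball of radius $\varepsilon$ about $\omega^{\otimes n}$, whence $H_{\min}(B^{n}|R^{n})_{\widetilde{\mathcal{N}}(\phi^{\otimes n})}\leq H_{\min}^{\varepsilon}(B^{n}|R^{n})_{\omega^{\otimes n}}$. Taking the supremum over $\widetilde{\mathcal{N}}$, dividing by $n$, and applying the state AEP \eqref{eq:AEP-states} gives $\lim_{n}\frac{1}{n}H_{\min}^{\varepsilon}(\mathcal{N}^{\otimes n})\leq H(B|R)_{\omega}=H(\mathcal{N})$ for each fixed $\varepsilon$, which is in fact stronger than \eqref{eq:upper-AEP}.

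For the lower bound \eqref{eq:lower-AEP}, I would route through the sandwiched R\'enyi channel entropy $H_{\alpha}(\mathcal{N})$ with $\alpha\in(1,2]$. The target is a \emph{channel smoothing bound} $H_{\min}^{\varepsilon}(\mathcal{N})\geq H_{\alpha}(\mathcal{N})-\frac{c(\varepsilon)}{\alpha-1}$, with $c(\varepsilon)$ finite for $\varepsilon\in(0,1)$ and independent of $\mathcal{N}$ and $\alpha$. Granting this, the additivity of $H_{\alpha}$ from Proposition~\ref{prop:renyi-additive} yields $\frac{1}{n}H_{\min}^{\varepsilon}(\mathcal{N}^{\otimes n})\geq H_{\alpha}(\mathcal{N})-\frac{c(\varepsilon)}{n(\alpha-1)}$; sending $n\rightarrow\infty$ and then $\alpha\rightarrow1^{+}$, and using $H_{\alpha}(\mathcal{N})\rightarrow H(\mathcal{N})$ (which follows from monotonicity of $D_{\alpha}$ in $\alpha$ together with $D_{\alpha}(\mathcal{N}\Vert\mathcal{R})\rightarrow D(\mathcal{N}\Vert\mathcal{R})$), gives \eqref{eq:lower-AEP}.

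The hard part is the channel smoothing bound. For states one projects onto the subspace where $\rho$ is not too large relative to $\sigma$ and invokes a gentle-measurement estimate to relate $D_{\max}^{\varepsilon}(\rho\Vert\sigma)$ to $D_{\alpha}(\rho\Vert\sigma)$; I expect the main obstacle to be that this construction does not transport directly to channels. Indeed, by \eqref{eq:max-rel-ch-max-ent} the quantity $D_{\max}(\widetilde{\mathcal{N}}\Vert\mathcal{R})$ depends only on the Choi state $\Phi^{\widetilde{\mathcal{N}}}$, whereas the smoothing ball is defined through the worst-case purified channel divergence \eqref{eq:purified-ch-divergence}; closeness of Choi states does not imply closeness of channels in worst-case input distance, so one cannot simply smooth $\Phi^{\mathcal{N}}$ and read off a channel. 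Instead I would build $\widetilde{\mathcal{N}}$ from a Stinespring dilation $\mathcal{U}_{A\rightarrow BE}^{\mathcal{N}}$, perturbing the underlying isometry so that, on the one hand, the resulting isometry stays close to $\mathcal{U}^{\mathcal{N}}$ uniformly over inputs, so that Uhlmann's theorem controls $P(\mathcal{N},\widetilde{\mathcal{N}})$ for \emph{every} input at once, while, on the other hand, the Choi spectrum of $\widetilde{\mathcal{N}}$ is flattened enough that $D_{\max}(\widetilde{\mathcal{N}}\Vert\mathcal{R})\leq D_{\alpha}(\mathcal{N}\Vert\mathcal{R})+c(\varepsilon)/(\alpha-1)$. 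Reconciling these two demands, equivalently choosing a smoothing projector that is typical for the worst-case input and not merely for the maximally entangled input used in \eqref{eq:max-rel-ch-max-ent}, is the crux; I would attempt it either through a Sion-type minimax over input states or, exploiting that only the i.i.d. channel $\mathcal{N}^{\otimes n}$ is needed, through a direct typical-projector estimate tailored to the product structure of $(\Phi^{\mathcal{N}})^{\otimes n}$.
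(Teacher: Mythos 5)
Your upper-bound argument for \eqref{eq:upper-AEP} is correct and in fact both simpler and stronger than the paper's. By fixing an optimizer $\phi_{RA}$ of $H(\mathcal{N})=\inf_{\psi}H(B|R)_{\omega}$ (attained by compactness in finite dimensions), testing every smoothing channel $\widetilde{\mathcal{N}}^{n}$ on the single input $\phi^{\otimes n}$ via Proposition~\ref{prop:alt-min-ent-ch}, and noting that the purified channel divergence \eqref{eq:purified-ch-divergence} dominates the purified distance at that input, you obtain $H_{\min}^{\varepsilon}(\mathcal{N}^{\otimes n})\leq H_{\min}^{\varepsilon}(B^{n}|R^{n})_{\omega^{\otimes n}}$, and the state AEP \eqref{eq:AEP-states} then gives $\lim_{n\rightarrow\infty}\frac{1}{n}H_{\min}^{\varepsilon}(\mathcal{N}^{\otimes n})\leq H(\mathcal{N})$ for each \emph{fixed} $\varepsilon\in(0,1)$. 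The paper instead passes from $H_{\min}$ to the von Neumann conditional entropy and invokes the continuity bound of \cite[Lemma~2]{Winter15}, picking up a residual term $2\varepsilon\log_{2}\left\vert B\right\vert$ that forces the outer limit $\varepsilon\rightarrow0$ in \eqref{eq:upper-AEP}; your route avoids the continuity bound entirely and yields the stronger fixed-$\varepsilon$ statement.

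The lower bound \eqref{eq:lower-AEP}, however, has a genuine gap, and you have correctly located it: the channel smoothing bound $H_{\min}^{\varepsilon}(\mathcal{N})\geq H_{\alpha}(\mathcal{N})-c(\varepsilon)/(\alpha-1)$ with $c(\varepsilon)$ independent of the channel is conjectured, not proved, and nothing in your sketch closes it. The tension you identify, namely that $D_{\max}(\widetilde{\mathcal{N}}\Vert\mathcal{R})$ is a Choi-state quantity by \eqref{eq:max-rel-ch-max-ent} while the smoothing ball is worst-case over all inputs, is exactly the difficulty; perturbing a Stinespring isometry so as to flatten the Choi spectrum while remaining uniformly close over all inputs is not known to be possible, and the acknowledgments of the paper record that an earlier proof of \eqref{eq:lower-AEP} contained a gap at precisely this point. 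The paper's resolution differs from both of your suggested repair routes and proves something weaker than your conjectured lemma, yet sufficient: restrict the smoothing to permutation-covariant channels (free, since the superchannel applying a random permutation at the input and its inverse at the output is uniformity preserving and $H_{\min}$ is monotone under such superchannels), evaluate on the purified de Finetti state of \cite{CKR09} (purifying dimension $(n+1)^{\left\vert A\right\vert^{2}-1}$, with full-rank reduced state so that the remark after \eqref{eq:max-rel-ch-max-ent} applies), invoke the postselection technique to upgrade closeness at the de Finetti input to worst-case channel closeness at the price $\varepsilon^{\prime}=\varepsilon(n+1)^{-2(\left\vert A\right\vert^{2}-1)}$, identify channel smoothing with state smoothing under a fixed-marginal constraint (Lemmas~10 and 11 in Appendix~B of \cite{FWTB18}), remove the marginal constraint via \cite[Theorem~3]{ABJT19}, and only then apply the state-level bound \cite[Proposition~6.5]{T15} together with additivity (Proposition~\ref{prop:renyi-additive}). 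The resulting cost is not a channel-independent $c(\varepsilon)$ but grows like $O(\left\vert A\right\vert^{2}\log n)$, which is harmless after dividing by $n$; whether your stronger dimension-independent smoothing bound holds appears to be open. Until that lemma, or a substitute of the postselection type, is supplied, your proof of \eqref{eq:lower-AEP} is incomplete.
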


\begin{proof}
We first prove the inequality in \eqref{eq:lower-AEP}. Let $\omega_{R^{n}%
A^{n}}$ denote the de Finetti state \cite{CKR09}, defined as%
\begin{equation}
\omega_{R^{n}A^{n}}\equiv\int d(\sigma_{RA})\ \sigma_{RA}^{\otimes n},
\end{equation}
where $\sigma_{RA}$ is a pure state with system $R$ isomorphic to the channel
input system $A$, and $d(\sigma_{RA})$ denotes the Haar measure on pure
states. This state is the maximally mixed state of the symmetric subspace of
the systems $(RA)^{n}$, and it is permutation invariant \cite{H13}. That is,
for a unitary channel $\mathcal{W}_{R^{n}}^{\pi}\otimes\mathcal{W}_{A^{n}%
}^{\pi}$ corresponding to a permutation $\pi$, we have that $\omega
_{R^{n}A^{n}}=(\mathcal{W}_{R^{n}}^{\pi}\otimes\mathcal{W}_{A^{n}}^{\pi
})(\omega_{R^{n}A^{n}})$ for all $\pi\in S_{n}$, with $S_{n}$ denoting the
symmetric group. Let $\omega_{R^{\prime}R^{n}A^{n}}$ denote the purification
of the de Finetti state, with the purifying system $R^{\prime}$ satisfying the
inequality $\left\vert R^{\prime}\right\vert \leq\left(  n+1\right)
^{\left\vert A\right\vert ^{2}-1}$ \cite{CKR09}. The reduced state
$\omega_{A^{n}}$ is permutation invariant and has full rank. The latter
follows because the set of pure states $\psi_{RA}$ with a full-rank reduced
density operator $\psi_{R}$ is dense in the set of all pure states, and tensor
products of full-rank states are full rank. Let $\omega_{R^{\prime}R^{n}B^{n}%
}^{\widetilde{\mathcal{N}}^{n}}$ denote the state resulting from the action of
the quantum channel $\widetilde{\mathcal{N}}_{A^{n}\rightarrow B^{n}}^{n}$ on
the input state $\omega_{R^{\prime}R^{n}A^{n}}$, and let $\omega_{R^{\prime
}R^{n}B^{n}}^{\mathcal{N}^{\otimes n}}$ denote the state resulting from the
action of the quantum channel $\mathcal{N}_{A\rightarrow B}^{\otimes n}$ on
the input state $\omega_{R^{\prime}R^{n}A^{n}}$. Let CPTP$(A^{n}\rightarrow
B^{n})$ denote the set of all quantum channels from input system $A^{n}$ to
output system $B^{n}$. Let Perm$(A^{n}\rightarrow B^{n})$ denote the set of
all permutation covariant quantum channels from input system $A^{n}$ to output
system $B^{n}$. Define $\psi_{RB^{n}}^{\widetilde{\mathcal{N}}^{n}}$ to be the
state resulting from the action of the channel $\widetilde{\mathcal{N}}%
_{A^{n}\rightarrow B^{n}}^{n}$ on the input state $\psi_{RA^{n}}$. Then
consider that%
\begin{align}
&  H_{\min}^{\varepsilon}(\mathcal{N}^{\otimes n})\nonumber\\
&  =\sup_{\substack{\widetilde{\mathcal{N}}^{n}\in\text{CPTP}(A^{n}\rightarrow
B^{n}):\\P(\mathcal{N}^{\otimes n},\widetilde{\mathcal{N}}^{n})\leq
\varepsilon}}\inf_{\psi_{RA^{n}}}H_{\min}(B^{n}|R)_{\psi^{\widetilde
{\mathcal{N}}^{n}}|\psi^{\widetilde{\mathcal{N}}^{n}}}\\
&  \geq\sup_{\substack{\widetilde{\mathcal{N}}^{n}\in\text{Perm}%
(A^{n}\rightarrow B^{n}):\\P(\mathcal{N}^{\otimes n},\widetilde{\mathcal{N}%
}^{n})\leq\varepsilon}}\inf_{\psi_{RA^{n}}}H_{\min}(B^{n}|R)_{\psi
^{\widetilde{\mathcal{N}}^{n}}|\psi^{\widetilde{\mathcal{N}}^{n}}}\\
&  =\sup_{\substack{\widetilde{\mathcal{N}}^{n}\in\text{Perm}(A^{n}\rightarrow
B^{n}):\\P(\mathcal{N}^{\otimes n},\widetilde{\mathcal{N}}^{n})\leq
\varepsilon}}H_{\min}(B^{n}|R^{n}R^{\prime})_{\omega^{\widetilde{\mathcal{N}%
}^{n}}|\omega^{\widetilde{\mathcal{N}}^{n}}}\\
&  \geq\sup_{\substack{\widetilde{\mathcal{N}}^{n}\in\text{Perm}%
(A^{n}\rightarrow B^{n}):\\P(\omega^{\mathcal{N}^{\otimes n}},\omega
^{\widetilde{\mathcal{N}}^{n}})\leq\varepsilon^{\prime}}}H_{\min}(B^{n}%
|R^{n}R^{\prime})_{\omega^{\widetilde{\mathcal{N}}^{n}}|\omega^{\widetilde
{\mathcal{N}}^{n}}}\label{eq:last-line-1st-blk-AEP}%
\end{align}
The first equality follows from Definition~\ref{def:smooth-min-ent-channel}.
The first inequality follows by restricting the maximization to
permutation-covariant channels. The second equality follows because the
reduced state $\omega_{A^{n}}$ has full rank and by applying the remark after
\eqref{eq:max-rel-ch-max-ent}, to conclude that%
\begin{align}
& \inf_{\psi_{RA^{n}}}H_{\min}(B^{n}|R)_{\psi^{\widetilde{\mathcal{N}}^{n}%
}|\psi^{\widetilde{\mathcal{N}}^{n}}}\nonumber\\
& =-\sup_{\psi_{RA^{n}}}D_{\max}(\widetilde{\mathcal{N}}_{A^{n}\rightarrow
B^{n}}^{n}(\psi_{RA^{n}})\Vert\psi_{R}\otimes I_{B^{n}})\\
& =-\sup_{\psi_{RA^{n}}}D_{\max}(\widetilde{\mathcal{N}}_{A^{n}\rightarrow
B^{n}}^{n}(\psi_{RA^{n}})\Vert\psi_{R}\otimes\pi_{B^{n}})\nonumber\\
& \qquad+n\log_{2}\left\vert B\right\vert \\
& =-\sup_{\psi_{RA^{n}}}D_{\max}(\widetilde{\mathcal{N}}_{A^{n}\rightarrow
B^{n}}^{n}(\psi_{RA^{n}})\Vert\mathcal{R}_{A\rightarrow B}^{\otimes n}%
(\psi_{RA^{n}}))\nonumber\\
& \qquad+n\log_{2}\left\vert B\right\vert \\
& =-D_{\max}(\widetilde{\mathcal{N}}_{A^{n}\rightarrow B^{n}}^{n}%
(\omega_{R^{\prime}R^{n}A^{n}})\Vert\mathcal{R}_{A\rightarrow B}^{\otimes
n}(\omega_{R^{\prime}R^{n}A^{n}}))\nonumber\\
& \qquad+n\log_{2}\left\vert B\right\vert \\
& =H_{\min}(B^{n}|R^{n}R^{\prime})_{\omega^{\widetilde{\mathcal{N}}^{n}%
}|\omega^{\widetilde{\mathcal{N}}^{n}}}.
\end{align}
The second inequality follows by applying the post-selection technique
\cite[Theorem~1]{CKR09}\ with%
\begin{equation}
\varepsilon^{\prime}\equiv\varepsilon\left(  n+1\right)  ^{-2\left(
\left\vert A\right\vert ^{2}-1\right)  }.\label{eq:eps-post-select}%
\end{equation}
(See also Proposition~D.5 of \cite{BCR09}.)\ Note that the factor of two in
the exponent of \eqref{eq:eps-post-select} is necessary because we are
employing the sine distance as the channel distance measure. To be clear, the
statement we are invoking is that if%
\begin{equation}
\mathcal{N}^{n},\mathcal{M}^{n}\in\text{Perm}(A^{n}\rightarrow B^{n})
\end{equation}
satisfy%
\begin{equation}
P(\mathcal{N}_{A^{n}\rightarrow B^{n}}^{n}(\omega_{R^{\prime}R^{n}A^{n}%
}),\mathcal{M}_{A^{n}\rightarrow B^{n}}^{n}(\omega_{R^{\prime}R^{n}A^{n}%
}))\leq\varepsilon^{\prime},
\end{equation}
then%
\begin{equation}
P(\mathcal{N}_{A^{n}\rightarrow B^{n}}^{n},\mathcal{M}_{A^{n}\rightarrow
B^{n}}^{n})\leq\varepsilon.
\end{equation}
Continuing, we have that%
\begin{align}
&  \text{Eq.}~\eqref{eq:last-line-1st-blk-AEP}\nonumber\\
&  =\sup_{\substack{\widetilde{\mathcal{N}}^{n}\in\text{CPTP}(A^{n}\rightarrow
B^{n}):\\P(\omega^{\mathcal{N}^{\otimes n}},\omega^{\widetilde{\mathcal{N}%
}^{n}})\leq\varepsilon^{\prime}}}H_{\min}(B^{n}|R^{n}R^{\prime})_{\omega
^{\widetilde{\mathcal{N}}^{n}}|\omega^{\widetilde{\mathcal{N}}^{n}}}\\
&  =\sup_{\substack{\sigma_{R^{\prime}R^{n}B^{n}}:\\P(\mathcal{N}^{\otimes
n}(\omega_{R^{\prime}R^{n}A^{n}}),\sigma_{R^{\prime}R^{n}B^{n}})\leq
\varepsilon^{\prime},\\\sigma_{R^{\prime}R^{n}}=\omega_{R^{\prime}R^{n}}%
}}H_{\min}(B^{n}|R^{n}R^{\prime})_{\sigma|\sigma}\\
&  \geq\sup_{\substack{\sigma_{R^{\prime}R^{n}B^{n}}:\\P(\mathcal{N}^{\otimes
n}(\omega_{R^{\prime}R^{n}A^{n}}),\sigma_{R^{\prime}R^{n}B^{n}})\leq
2\varepsilon^{\prime}/3}}H_{\min}(B^{n}|R^{n}R^{\prime})_{\sigma|\sigma
}\nonumber\\
&  \qquad-\log_{2}\!\left(  \frac{8+\left[  \varepsilon^{\prime}/3\right]
^{2}}{\left[  \varepsilon^{\prime}/3\right]  ^{2}}\right)  .
\end{align}
The first equality follows from reasoning similar to that given for Lemma~11
in Appendix~B of \cite{FWTB18}, i.e., that a permutation-covariant channel is
optimal among all channels, due to the fact that the original channel
$\mathcal{N}^{\otimes n}$ is permutation covariant. In our case, it follows by
employing the fact that the channel min-entropy does not decrease under the
action of a uniformity preserving superchannel (see the discussion after
\eqref{eq:min-ent-limit}), and the superchannel that randomly performs a
permutation at the channel input and the inverse permutation at the channel
output is one such superchannel. The second equality is a consequence of the
fact that the following two sets are equal:
\begin{multline}
\Big\{\widetilde{\mathcal{N}}_{A^{n}\rightarrow B^{n}}^{n}(\phi_{RA^{n}}):\\
P(\widetilde{\mathcal{N}}_{A^{n}\rightarrow B^{n}}^{n}(\phi_{RA^{n}%
}),\mathcal{N}_{A\rightarrow B}^{\otimes n}(\phi_{RA^{n}}))\leq\varepsilon,\\
\widetilde{\mathcal{N}}_{A^{n}\rightarrow B^{n}}^{n}\in\text{CPTP}\Big\}\\
=\Big\{\widetilde{\omega}_{RB^{n}}\in\mathcal{D}(\mathcal{H}_{RB^{n}%
}):P(\widetilde{\omega}_{RB^{n}},\mathcal{N}_{A\rightarrow B}^{\otimes n}%
(\phi_{RA^{n}}))\leq\varepsilon,\\
\widetilde{\omega}_{R}=\phi_{R}\Big\},
\end{multline}
which follows from applying Lemma~10 in Appendix~B of \cite{FWTB18}. The
inequality follows from Theorem~3 of \cite{ABJT19} (while noting that the
state $\hat{\rho}_{AB}$ defined therein satisfies $\hat{\rho}_{B}=\rho_{B}$,
so that the proof Theorem~3 of \cite{ABJT19} applies to our situation).

Continuing, and by applying \cite[Eq.~(L10)]{WW19states} and definitions, we
find that%
\begin{align}
&  \sup_{\substack{\sigma_{R^{\prime}R^{n}B^{n}}:\\P(\mathcal{N}^{\otimes
n}(\omega_{R^{\prime}R^{n}A^{n}}),\sigma_{R^{\prime}R^{n}B^{n}})\leq
2\varepsilon^{\prime}/3}}H_{\min}(B^{n}|R^{n}R^{\prime})_{\sigma|\sigma
}\nonumber\\
&  \geq H_{\alpha}(B^{n}|R^{n}R^{\prime})_{\omega^{\mathcal{N}^{\otimes n}%
}|\omega^{\mathcal{N}^{\otimes n}}} + f(2\varepsilon^{\prime}/3,\alpha)\\
&  \geq H_{\alpha}(\mathcal{N}^{\otimes n})++ f(2\varepsilon^{\prime}%
/3,\alpha)\\
&  =nH_{\alpha}(\mathcal{N})+ f(2\varepsilon^{\prime}/3,\alpha),
\end{align}
where
\begin{equation}
f(\delta,\alpha) \equiv\log_{2}(1-\delta^{2}) + \frac{\log_{2}(\delta^{2}%
)}{\alpha- 1}.
\end{equation}
The second inequality follows from the definition of the R\'enyi entropy of a
channel (Definition~\ref{def:renyi-ent-ch}), and the equality follows from the
additivity of the R\'enyi entropy of a channel
(Proposition~\ref{prop:renyi-additive}). Putting everything above together, we
conclude the following bound:%
\begin{multline}
\frac{1}{n}H_{\min}^{\varepsilon}(\mathcal{N})\geq H_{\alpha}(\mathcal{N}%
)-\frac{1}{n}\log_{2}\!\left(  \frac{8+\left[  \varepsilon^{\prime}/3\right]
^{2}}{\left[  \varepsilon^{\prime}/3\right]  ^{2}}\right) \\
+ \frac{f(2\varepsilon^{\prime}/3,\alpha)}{n}.
\end{multline}
Taking the limit as $n\rightarrow\infty$, we conclude that the following
inequality holds for all $\alpha>1$:%
\begin{equation}
\lim_{n\rightarrow\infty}\frac{1}{n}H_{\min}^{\varepsilon}(\mathcal{N})\geq
H_{\alpha}(\mathcal{N}).
\end{equation}
Since this inequality holds for all $\alpha>1$, we can take the limit as
$\alpha\rightarrow1$ to conclude that%
\begin{equation}
\lim_{n\rightarrow\infty}\frac{1}{n}H_{\min}^{\varepsilon}(\mathcal{N})\geq
H(\mathcal{N}).
\end{equation}
This concludes the proof of the inequality in \eqref{eq:lower-AEP}.

To arrive at the second inequality in \eqref{eq:upper-AEP}, let $\widetilde
{\mathcal{N}}^{n}$ be a channel such that%
\begin{equation}
P(\mathcal{N}^{\otimes n},\widetilde{\mathcal{N}}^{n})\leq\varepsilon.
\end{equation}
Now let $\phi_{RA^{n}}$ be an arbitrary state. We then have from the
definition in \eqref{eq:purified-ch-divergence} that%
\begin{equation}
P(\mathcal{N}_{A\rightarrow B}^{\otimes n}(\phi_{RA^{n}}),\widetilde
{\mathcal{N}}_{A^{n}\rightarrow B^{n}}^{n}(\phi_{RA^{n}}))\leq\varepsilon.
\end{equation}
Defining the states%
\begin{align}
\widetilde{\omega}_{RB^{n}}  &  \equiv\widetilde{\mathcal{N}}_{A^{n}%
\rightarrow B^{n}}^{n}(\phi_{RA^{n}}),\\
\omega_{RB^{n}}  &  \equiv\mathcal{N}_{A\rightarrow B}^{\otimes n}%
(\phi_{RA^{n}}),
\end{align}
we find that%
\begin{align}
H_{\min}(\widetilde{\mathcal{N}}^{n})  &  \leq H_{\min}(B^{n}|R)_{\widetilde
{\omega}|\widetilde{\omega}}\\
&  \leq H(B^{n}|R)_{\widetilde{\omega}}\\
&  \leq H(B^{n}|R)_{\omega}+\varepsilon2n\log_{2}\left\vert B\right\vert
+g_{2}(\varepsilon),
\end{align}
where%
\begin{equation}
g_{2}(\varepsilon)\equiv\left(  \varepsilon+1\right)  \log_{2}(\varepsilon
+1)-\varepsilon\log_{2}\varepsilon. \label{eq:bosonic-entropy}%
\end{equation}
The second inequality follows from monotonicity of the conditional R\'{e}nyi
entropy with respect to$~\alpha$, and the last from the uniform continuity
bound in \cite[Lemma~2]{Winter15}. The above bound holds for any choice of
$\phi_{RA^{n}}$, and so we conclude that%
\begin{align}
H_{\min}(\widetilde{\mathcal{N}}^{n})  &  \leq H(\mathcal{N}^{\otimes
n})+\varepsilon2n\log_{2}\left\vert B\right\vert +g_{2}(\varepsilon)\\
&  =nH(\mathcal{N})+\varepsilon2n\log_{2}\left\vert B\right\vert
+g_{2}(\varepsilon),
\end{align}
where the equality follows from the additivity of the entropy of a channel
(Proposition~\ref{prop:entropy-additive}). Now, the inequality has been shown
for all $\widetilde{\mathcal{N}}^{n}$ satisfying $P(\mathcal{N}^{\otimes
n},\widetilde{\mathcal{N}}^{n})\leq\varepsilon$, and so we conclude, after
dividing by $n$, that%
\begin{equation}
\frac{1}{n}H_{\min}^{\varepsilon}(\mathcal{N}^{\otimes n})\leq H(\mathcal{N}%
)+2\varepsilon\log_{2}\left\vert B\right\vert +\frac{1}{n}g_{2}(\varepsilon).
\end{equation}
Taking the limit as $n\rightarrow\infty$, we get that%
\begin{equation}
\lim_{n\rightarrow\infty}\frac{1}{n}H_{\min}^{\varepsilon}(\mathcal{N}%
^{\otimes n})\leq H(\mathcal{N})+2\varepsilon\log_{2}\left\vert B\right\vert .
\end{equation}
Now taking the limit as $\varepsilon\rightarrow0$, we arrive at the second
inequality in \eqref{eq:upper-AEP}.
\end{proof}

\bigskip

In Appendix~\ref{app:max-mut-AEP}, we point out how an approach similar to
that in the above proof leads to an alternate proof of the upper bound in
\cite[Theorem~8]{FWTB18}, regarding an asymptotic equipartition property for
the smoothed max-mutual information of a quantum channel.

\section{Generalized channel entropies from generalized divergences}

\label{sec:other-ents}

In this section, we discuss other possibilities for defining generalized
entropies of a quantum channel. One main concern might be how unique or
distinguished our notion of entropy of a channel from
Definition~\ref{def:entropy-channel}\ is, being based on the channel relative
entropy of the channel of interest and the completely randomizing channel. As
a consequence of the fact that there are alternate ways of defining channel
relative entropies, there could be alternate notions of channel entropies.
However, we should recall that one of the main reasons we have chosen the
definition in Definition~\ref{def:entropy-channel}\ is that the channel
relative entropy appearing there has a particularly appealing operational
interpretation in the context of channel discrimination \cite{CMW16}. That is,
for what one might consider the most natural and general setting of quantum
channel discrimination, the optimal rate for distinguishing a channel from the
completely randomizing channel is given by the channel relative entropy in
\eqref{eq:channel-rel-ent} \cite{CMW16}. As we show in what follows, there are
further reasons to focus on our definition of the entropy of a channel from
Definition~\ref{def:entropy-channel}, as well as our definition of the
min-entropy of a channel from Definition~\ref{def:min-entropy-ch}.

To begin the discussion, let $\mathcal{S}(C)$ denote the set of quantum states
for an arbitrary quantum system $C$. Let us recall that a function
$\mathbf{D}:\mathcal{S}(C)\times\mathcal{S}(C)\rightarrow\mathbb{R}%
\cup\{+\infty\}$ is a generalized divergence \cite{PV10,SW12} if for arbitrary
Hilbert spaces $\mathcal{H}_{A}$ and $\mathcal{H}_{B}$, arbitrary states
$\rho_{A},\sigma_{A}\in\mathcal{S}(A)$, and an arbitrary channel
$\mathcal{N}_{A\rightarrow B}$, the following data processing inequality
holds
\begin{equation}
\mathbf{D}(\rho_{A}\Vert\sigma_{A})\geq\mathbf{D}(\mathcal{N}_{A\rightarrow
B}(\rho_{A})\Vert\mathcal{N}_{A\rightarrow B}(\sigma_{A})).
\label{eq:data-processing}%
\end{equation}
Examples of interest are in particular the quantum relative entropy, the
Petz-R\'{e}nyi divergences, the sandwiched R\'{e}nyi divergences, as
considered in this paper.

Based on generalized divergences, one can define at least two different
channel divergences as a measure for the distinguishability of two quantum
channels $\mathcal{N}_{A\rightarrow B}$ and $\mathcal{M}_{A\rightarrow B}$.
Here we consider a function of two quantum channels to be a channel divergence
if it is monotone under the action of a superchannel.

\begin{enumerate}
\item Generalized channel divergence \cite{LKDW18}:%
\begin{equation}
\mathbf{D}(\mathcal{N}\Vert\mathcal{M})\equiv\sup_{\rho_{RA}}\mathbf{D}%
(\mathcal{N}_{A\rightarrow B}(\rho_{RA})\Vert\mathcal{M}_{A\rightarrow B}%
(\rho_{RA})).
\end{equation}
In the above, the optimization can be restricted to pure states of systems $R$
and $A$ with $R$ isomorphic to system $A$. The monotonicity of the generalized
channel divergence under the action of a superchannel was proven in \cite{G18}.

\item Amortized channel divergence \cite{BHKW18}:%
\begin{multline}
\mathbf{D}^{\mathcal{A}}(\mathcal{N}\Vert\mathcal{M})\equiv\\
\sup_{\rho_{RA},\sigma_{RA}}\mathbf{D}(\mathcal{N}_{A\rightarrow B}(\rho
_{RA})\Vert\mathcal{M}_{A\rightarrow B}(\sigma_{RA}))-\mathbf{D}(\rho
_{RA}\Vert\sigma_{RA}).
\end{multline}
The monotonicity of the amortized channel divergence under the action of a
superchannel was proven in \cite{BHKW18}.
\end{enumerate}

We can consider other divergences as follows, but they are not known to be
monotone under the action of a general superchannel, and so we do not label
them as channel divergences:

\begin{enumerate}
\item Choi divergence:%
\begin{equation}
\mathbf{D}^{\Phi}(\mathcal{N}\Vert\mathcal{M})\equiv\mathbf{D}(\mathcal{N}%
_{A\rightarrow B}(\Phi_{RA})\Vert\mathcal{M}_{A\rightarrow B}(\Phi_{RA})).
\end{equation}
As we show in Appendix~\ref{app:choi-div-monotone}, the Choi divergence is
monotone under the action of a superchannel consisting of mixtures of a unital
pre-processing channel and an arbitrary post-processing channel.

\item Adversarial divergence:%
\begin{equation}
\mathbf{D}^{\text{adv}}(\mathcal{N}\Vert\mathcal{M})\equiv\sup_{\rho_{RA}}%
\inf_{\sigma_{RA}}\mathbf{D}(\mathcal{N}_{A\rightarrow B}(\rho_{RA}%
)\Vert\mathcal{M}_{A\rightarrow B}(\sigma_{RA})). \label{eq:adv-ch-div}%
\end{equation}
In the above, due to state purification, data processing, and the Schmidt
decomposition, the maximization can be restricted to pure states $\rho_{RA}$
of systems $R$ and $A$ with $R$ isomorphic to system $A$. The minimization
should be taken over mixed states $\sigma_{RA}$. For a proof of this fact, see
Appendix~\ref{app:adv-ch-div-limited}.

\item Adversarial Choi divergence:%
\begin{equation}
\mathbf{D}^{\text{adv},\Phi}(\mathcal{N}\Vert\mathcal{M})\equiv\inf
_{\sigma_{RA}}\mathbf{D}(\mathcal{N}_{A\rightarrow B}(\Phi_{RA})\Vert
\mathcal{M}_{A\rightarrow B}(\sigma_{RA})).
\end{equation}

\item \textquotedblleft No quantum memory\textquotedblright\ divergence:%
\begin{equation}
\sup_{\rho_{A}}\mathbf{D}(\mathcal{N}_{A\rightarrow B}(\rho_{A})\Vert
\mathcal{M}_{A\rightarrow B}(\rho_{A})).
\end{equation}

\end{enumerate}

There could certainly even be other divergences to consider. In our context,
two effective ways of singling out particular divergences as primary and
others as secondary are 1)\ whether the channel divergence has a compelling
operational interpretation for a channel discrimination task and 2)\ whether
the channel divergence leads to an entropy function that satisfies the axioms
from~\cite{G18}.

Based on the recipe given in the introduction, from a given divergence
$\mathbf{D}^{\prime}(\mathcal{N}\Vert\mathcal{M})$ (any of the choices above),
one could then define a generalized entropy function of a channel
$\mathcal{N}_{A\rightarrow B}$\ as%
\begin{equation}
\mathbf{H}(\mathcal{N})\equiv\log_{2}\left\vert B\right\vert -\mathbf{D}%
^{\prime}(\mathcal{N}\Vert\mathcal{R}),
\end{equation}
where $\mathcal{R}_{A\rightarrow B}$ is the completely randomizing channel from~\eqref{eq:cm-random-ch}.

Taking the above approach to pruning entropy functions, we can already rule
out the last one (\textquotedblleft no quantum memory\textquotedblright), as
done in \cite{G18}, because, after taking $\mathbf{D}$ to be the most
prominent case of quantum relative entropy, the resulting entropy function is
the minimum output entropy of a channel, which is known to be non-additive
\cite{Hastings2009}. While an entropy arising from the Choi divergence leads
to an entropy function satisfying the axioms desired for an entropy function,
the Choi divergence itself does not appear to have a compelling operational
interpretation in the sense of being a \textquotedblleft channel
measure\textquotedblright\ because it simply reduces a channel discrimination
problem to a state discrimination problem (i.e., it does not make use of the
most general approach one could take for discriminating arbitrary channels).
This point could be debated, and we do return to entropy functions derived
from Choi and adversarial Choi divergences in
Section~\ref{sec:ent-funcs-Choi-adv-Choi}\ below.

\subsection{Collapse of entropy functions derived from quantum relative
entropy}

From the list above, by focusing on the operational and axiomatic criteria
listed above, this leaves us with the generalized channel divergence and the
amortized channel divergence. Here we also consider the adversarial
divergence. Interestingly, after taking $\mathbf{D}$ to be the prominent case
of quantum relative entropy and the channel $\mathcal{M}$ to be the completely
randomizing channel, we find the following collapse of the divergences:%
\begin{equation}
D(\mathcal{N}\Vert\mathcal{R})=D^{\mathcal{A}}(\mathcal{N}\Vert\mathcal{R}%
)=D^{\text{adv}}(\mathcal{N}\Vert\mathcal{R}). \label{eq:div-collapse}%
\end{equation}
The first equality was shown in \cite{CMW16,BHKW18}, and we show the second
one now. From the definitions, we have that $\mathbf{D}^{\text{adv}%
}(\mathcal{N}\Vert\mathcal{M})\leq\mathbf{D}(\mathcal{N}\Vert\mathcal{M})$ for
any generalized divergence $\mathbf{D}$ and any channel $\mathcal{M}$. So we
show the opposite inequality for the special case of $\mathbf{D}=D$ and
$\mathcal{M}=\mathcal{R}$. Let $\rho_{RA}$ and $\sigma_{RA}$ be arbitrary
states. Then%
\begin{align}
&  D(\mathcal{N}_{A\rightarrow B}(\rho_{RA})\Vert\mathcal{R}_{A\rightarrow
B}(\sigma_{RA}))\nonumber\\
&  =D(\mathcal{N}_{A\rightarrow B}(\rho_{RA})\Vert\sigma_{R}\otimes\pi
_{B})\label{eq:adv-div-to-ch-div-1}\\
&  =-H(\mathcal{N}_{A\rightarrow B}(\rho_{RA}))\nonumber\\
&  \qquad-\operatorname{Tr}\{\mathcal{N}_{A\rightarrow B}(\rho_{RA})\log
_{2}(\sigma_{R}\otimes\pi_{B})\}\\
&  =-H(\mathcal{N}_{A\rightarrow B}(\rho_{RA}))-\operatorname{Tr}\{\rho
_{R}\log_{2}\sigma_{R}\}\nonumber\\
&  \qquad-\operatorname{Tr}\{\mathcal{N}_{A\rightarrow B}(\rho_{A})\log_{2}%
\pi_{B}\}\\
&  =-H(\mathcal{N}_{A\rightarrow B}(\rho_{RA}))-\operatorname{Tr}\{\rho
_{R}\log_{2}\rho_{R}\}\nonumber\\
&  \qquad+D(\rho_{R}\Vert\sigma_{R})-\operatorname{Tr}\{\mathcal{N}%
_{A\rightarrow B}(\rho_{A})\log_{2}\pi_{B}\}\\
&  =-H(\mathcal{N}_{A\rightarrow B}(\rho_{RA}))\nonumber\\
&  \qquad-\operatorname{Tr}\{\mathcal{N}_{A\rightarrow B}(\rho_{RA})\log
_{2}(\rho_{R}\otimes\pi_{B})\}+D(\rho_{R}\Vert\sigma_{R})\\
&  =D(\mathcal{N}_{A\rightarrow B}(\rho_{RA})\Vert\mathcal{R}_{A\rightarrow
B}(\rho_{RA}))+D(\rho_{R}\Vert\sigma_{R}). \label{eq:adv-div-to-ch-div-last}%
\end{align}
Now taking an infimum over all $\sigma_{RA}$ and invoking the non-negativity
of quantum relative entropy, we conclude that%
\begin{multline}
\inf_{\sigma_{RA}}D(\mathcal{N}_{A\rightarrow B}(\rho_{RA})\Vert
\mathcal{R}_{A\rightarrow B}(\sigma_{RA}))\\
=D(\mathcal{N}_{A\rightarrow B}(\rho_{RA})\Vert\mathcal{R}_{A\rightarrow
B}(\rho_{RA})).
\end{multline}
By taking a supremum over $\rho_{RA}$, we then conclude that $D^{\text{adv}%
}(\mathcal{N}\Vert\mathcal{R})=D(\mathcal{N}\Vert\mathcal{R})$.

Thus, the collapse in \eqref{eq:div-collapse}, as well as the operational
interpretation of $D(\mathcal{N}\Vert\mathcal{R})$ from \cite{CMW16} and the
fact that the resulting entropy function satisfies the axioms from \cite{G18},
indicate that our choice of the entropy of a quantum channel in
Definition~\ref{def:entropy-channel} is cogent.

\subsection{Collapse of entropy functions derived from max-relative entropy}

Interestingly, a similar and further collapse occurs when taking $\mathbf{D}$
to be the max-relative entropy:%
\begin{align}
D_{\max}(\mathcal{N}\Vert\mathcal{R})  &  =D_{\max}^{\Phi}(\mathcal{N}%
\Vert\mathcal{R})\\
&  =D_{\max}^{\mathcal{A}}(\mathcal{N}\Vert\mathcal{R})\\
&  =D_{\max}^{\text{adv}}(\mathcal{N}\Vert\mathcal{R}).
\end{align}
The first two equalities were shown in \cite[Proposition~10]{BHKW18} for
arbitrary channels $\mathcal{N}$ and $\mathcal{M}$. By employing a
semi-definite programming approach as in the proof of
Proposition~\ref{prop:alt-min-ent-ch}, we can conclude the last equality.
Thus, this collapse, as well as the facts that the max-relative entropy
$D_{\max}(\mathcal{N}\Vert\mathcal{M})$ is an upper bound on the rate at which
any two channels can be distinguished in an arbitrary context
\cite[Corollary~18]{BHKW18}\ and the resulting entropy function $H_{\min
}(\mathcal{N})$ satisfies the axioms from \cite{G18}, indicate that our choice
of the min-entropy of a quantum channel in Definition~\ref{def:min-entropy-ch}
is also cogent.

\subsection{Entropy functions derived from R\'enyi relative entropies}

In Section~\ref{sec:renyi-ent-ch}, we defined the R\'{e}nyi entropy of a
channel as in Definition~\ref{def:renyi-ent-ch}, in terms of the sandwiched
R\'{e}nyi relative entropy. The following collapse is known for the sandwiched
R\'{e}nyi relative entropy for $\alpha\in(1,\infty)$ \cite{CMW16,BHKW18}:%
\begin{equation}
D_{\alpha}(\mathcal{N}\Vert\mathcal{R})=D_{\alpha}^{\mathcal{A}}%
(\mathcal{N}\Vert\mathcal{R}).
\end{equation}
However, it is not known whether these quantities are equal for $\alpha
\in(0,1)$ or whether they are equal to the adversarial divergence $D_{\alpha
}^{\text{adv}}(\mathcal{N}\Vert\mathcal{R})$ for any $\alpha\in(0,1)\cup
(1,\infty)$. At the same time, one of the most compelling reasons to fix the
definition of channel R\'{e}nyi entropy as we have done is that the channel
divergence $D_{\alpha}(\mathcal{N}\Vert\mathcal{R})$\ has both a convincing
operational interpretation in channel discrimination as the optimal strong
converse exponent and the entropy function satisfies all of the desired axioms
for an entropy function. Furthermore, the entropy function $H_{\alpha
}(\mathcal{N})$ represents a useful bridge between the entropy and min-entropy
of a quantum channel, due to the facts that $\lim_{\alpha\rightarrow
1}H_{\alpha}(\mathcal{N})=H(\mathcal{N})$, $\lim_{\alpha\rightarrow\infty
}H_{\alpha}(\mathcal{N})=H_{\min}(\mathcal{N})$, and $H_{\alpha}%
(\mathcal{N})\leq H_{\beta}(\mathcal{N})$ for $\alpha\geq\beta\geq1$.

One should notice that we did not define the R\'enyi entropy of a channel in
terms of the Petz--R\'enyi relative entropy and the resulting channel
divergence, amortized channel divergence, or adversarial divergence. One of
the main reasons for this is that it is not known whether the resulting
entropy functions are additive. Furthermore, operational interpetations for
these divergences have not been established, having been open since the paper
\cite{CMW16} appeared. As such, it very well could be the case that one could
derive cogent notions of channel entropy from the Petz--R\'enyi relative
entropy, but this remains the topic of future work.

\subsection{Entropy functions derived from Choi and adversarial Choi
divergences}

\label{sec:ent-funcs-Choi-adv-Choi}In this subsection, we discuss various
entropy functions derived from Choi and adversarial Choi divergences. As
emphasized previously, we note again here that the operational interpretations
for these divergences are really about state discrimination tasks rather than
channel discrimination tasks. Nevertheless, the resulting entropy functions
satisfy the axioms put forward in \cite{G18}.

By picking the divergence $\mathbf{D}$ to be the quantum relative entropy $D$,
we find that the Choi and adversarial Choi divergences are equal when
discriminating an arbitrary channel $\mathcal{N}_{A\rightarrow B}$ from the
completely randomizing channel $\mathcal{R}_{A\rightarrow B}$:%
\begin{equation}
D^{\Phi}(\mathcal{N}\Vert\mathcal{R})=D^{\text{adv},\Phi}(\mathcal{N}%
\Vert\mathcal{R}).
\end{equation}
The proof of this statement follows along the lines of
\eqref{eq:adv-div-to-ch-div-1}--\eqref{eq:adv-div-to-ch-div-last}. There is a
simple operational interpretation for $D^{\Phi}(\mathcal{N}\Vert\mathcal{R})$
in terms of state discrimination \cite{HP91,ON00}, while an operational
interpretation for $D^{\text{adv},\Phi}(\mathcal{N}\Vert\mathcal{R})$ in terms
of state discrimination was given recently in \cite{HT16}.

We could also pick the divergence $\mathbf{D}$ to be Petz--R\'enyi relative
entropy $\overline{D}_{\alpha}$ or the sandwiched R\'enyi relative entropy
$D_{\alpha}$. The resulting Choi and adversarial Choi divergences are then
generally not equal when discriminating an arbitrary channel $\mathcal{N}%
_{A\rightarrow B}$ from the completely randomizing channel $\mathcal{R}%
_{A\rightarrow B}$. There is an operational interpretation for $\overline
{D}_{\alpha}^{\Phi}(\mathcal{N}\Vert\mathcal{R})$ for $\alpha\in(0,1)$ in
terms of state discrimination \cite{hayashi2007error,Nagaoka06}\ (error
exponent problem), and there is an operational interpretation for $D_{\alpha
}^{\Phi}(\mathcal{N}\Vert\mathcal{R})$ for $\alpha\in(1,\infty)$ in terms of
state discrimination \cite{MO13}\ (strong converse exponent problem).
Interestingly, \cite{HT16} has given a meaningful operational interpretation
for the adversarial Choi divergences $\overline{D}_{\alpha}^{\text{adv},\Phi
}(\mathcal{N}\Vert\mathcal{R})$ for $\alpha\in(0,1)$ and $D_{\alpha
}^{\text{adv},\Phi}(\mathcal{N}\Vert\mathcal{R})$ for $\alpha\in(1,\infty)$ in
terms of error exponent and strong converse exponent state discrimination
problems, respectively.

For $\mathcal{N}_{A\rightarrow B}$ a quantum channel and $\Phi_{RB}%
^{\mathcal{N}}\equiv\mathcal{N}_{A\rightarrow B}(\Phi_{RA})$ the Choi state,
the resulting channel entropy functions are then as follows:%
\begin{align}
H^{\Phi}(\mathcal{N})  &  \equiv H(B|R)_{\Phi^{\mathcal{N}}}=H^{\text{adv}%
,\Phi}(\mathcal{N}),\label{eq:Choi-ent-func-1}\\
H_{\alpha}^{\Phi}(\mathcal{N})  &  \equiv H_{\alpha}(B|R)_{\Phi^{\mathcal{N}%
}|\Phi^{\mathcal{N}}},\label{eq:Choi-ent-func-2}\\
\overline{H}_{\alpha}^{\Phi}(\mathcal{N})  &  \equiv\overline{H}_{\alpha
}(B|R)_{\Phi^{\mathcal{N}}|\Phi^{\mathcal{N}}},\label{eq:Choi-ent-func-3}\\
H_{\alpha}^{\text{adv},\Phi}(\mathcal{N})  &  \equiv H_{\alpha}(B|R)_{\Phi
^{\mathcal{N}}},\label{eq:Choi-ent-func-4}\\
\overline{H}_{\alpha}^{\text{adv},\Phi}(\mathcal{N})  &  \equiv\overline
{H}_{\alpha}(B|R)_{\Phi^{\mathcal{N}}}. \label{eq:Choi-ent-func-last}%
\end{align}
It then follows that all of the above entropy functions are additive for
$\alpha\in(0,1)\cup(1,\infty)$ (with the exception of additivity holding for
$H_{\alpha}^{\text{adv},\Phi}(\mathcal{N})$ for $\alpha\in\lbrack
1/2,1)\cup(1,\infty)$), due to the facts that the Choi state of a
tensor-product channel is equal to the tensor product of the Choi states of
the individual channels, as well as the additivity of the underlying
conditional entropies, for $H_{\alpha}^{\text{adv},\Phi}(\mathcal{N})$ shown
in \cite{T15} and for $\overline{H}_{\alpha}^{\text{adv},\Phi}(\mathcal{N})$
following from the quantum Sibson identity \cite[Lemma~7]{SW12} (see also
\cite[Lemma~1]{TBH14}). Normalization and reduction to states (as in
Proposition~\ref{prop:reduction-to-states}) follows for all of the above
quantities. What remains is monotonicity under random unitary superchannels,
and what we can show is something stronger:\ monotonicity under doubly
stochastic superchannels, the latter defined in \cite{G18} as superchannels
$\Theta$\ such that their adjoint $\Theta^{\dag}$\ is also a superchannel,
where the adjoint is defined with respect to the inner product for supermaps
considered in \cite{G18}.

\begin{theorem}
Let $\Theta$ be a doubly stochastic superchannel given by
\begin{equation}
\Theta\left[  \mathcal{N}_{A\rightarrow B}\right]  \equiv\Omega_{BE\rightarrow
D}\circ\mathcal{N}_{A\rightarrow B}\circ\Lambda_{C\rightarrow AE}\;.
\end{equation}
with $\Omega_{BE\rightarrow D}$ and $\Lambda_{C\rightarrow AE}$ quantum
channels, $E$ a quantum memory system, $|A|=|C|$, and $|B|=|D|$. Then, for
$\mathbf{H}$ any of the entropy functions in
\eqref{eq:Choi-ent-func-1}--\eqref{eq:Choi-ent-func-last}, the following
inequality holds%
\begin{equation}
\mathbf{H}\left(  \Theta\left[  \mathcal{N}_{A\rightarrow B}\right]  \right)
\geq\mathbf{H}(\mathcal{N}_{A\rightarrow B})\;. \label{eq:mono-doubly-stoc}%
\end{equation}
The inequality above holds for $\alpha\in\lbrack1/2,1)\cup(1,\infty)$ for the
functions in \eqref{eq:Choi-ent-func-2} and \eqref{eq:Choi-ent-func-4} and for
$\alpha\in(0,1)\cup(1,2]$ for the functions in \eqref{eq:Choi-ent-func-3} and \eqref{eq:Choi-ent-func-last}.
\end{theorem}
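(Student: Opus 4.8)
The plan is to reduce the claim to a single monotonicity statement for the underlying Choi divergence and then establish that statement by data processing. Because the reduced state $\Phi^{\mathcal{N}}_{R}=\pi_{R}$ of any Choi state is maximally mixed, each of the functions in \eqref{eq:Choi-ent-func-1}--\eqref{eq:Choi-ent-func-last} takes the recipe form $\mathbf{H}(\mathcal{N})=\log_{2}|B|-\mathbf{D}^{\Phi}(\mathcal{N}\Vert\mathcal{R})$ for the non-adversarial entries and $\mathbf{H}(\mathcal{N})=\log_{2}|B|-\mathbf{D}^{\operatorname{adv},\Phi}(\mathcal{N}\Vert\mathcal{R})$ for the adversarial ones, where $\mathbf{D}$ is $D$, $D_{\alpha}$, or $\overline{D}_{\alpha}$ as appropriate. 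Since $|B|=|D|$, the inequality \eqref{eq:mono-doubly-stoc} is therefore equivalent to
\begin{equation}
\mathbf{D}^{\Phi}(\Theta[\mathcal{N}]\Vert\mathcal{R}_{C\rightarrow D})\leq\mathbf{D}^{\Phi}(\mathcal{N}\Vert\mathcal{R}_{A\rightarrow B})
\end{equation}
and its analogue with $\mathbf{D}^{\operatorname{adv},\Phi}$. So I would first record this reduction and then work entirely with Choi divergences against $\mathcal{R}$.

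Next I would extract two structural facts about the doubly stochastic superchannel $\Theta$, with realization $\Omega_{BE\rightarrow D}\circ\mathcal{N}_{A\rightarrow B}\circ\Lambda_{C\rightarrow AE}$ as in \eqref{eqn:superchannel}, from \cite{G18}: (i) $\Theta$ is uniformity preserving, $\Theta(\mathcal{R}_{A\rightarrow B})=\mathcal{R}_{C\rightarrow D}$, as in \eqref{eq:unif-preserv}; and (ii) the marginal pre-processing $\operatorname{Tr}_{E}\circ\Lambda_{C\rightarrow AE}$ is unital, so that the state $\xi_{\tilde{R}AE}\equiv\Lambda_{C\rightarrow AE}(\Phi_{\tilde{R}C})$ has $\xi_{A}=\pi_{A}$. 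Since $\xi_{A}=\pi_{A}=\Phi_{A}$ and $\Phi_{RA}$ purifies $\pi_{A}$, every extension of $\pi_{A}$ on $\tilde{R}E$ is reachable from $\Phi_{RA}$ by a channel on the purifying system; hence there is a channel $\mathcal{E}_{R\rightarrow\tilde{R}E}$ with $\mathcal{E}(\Phi_{RA})=\xi_{\tilde{R}AE}$. As $\mathcal{E}$ and the channels $\mathcal{N},\mathcal{R}$ act on disjoint systems, this gives $\mathcal{N}_{A\rightarrow B}(\xi_{\tilde{R}AE})=\mathcal{E}(\Phi^{\mathcal{N}}_{RB})$ and $\mathcal{R}_{A\rightarrow B}(\xi_{\tilde{R}AE})=\mathcal{E}(\pi_{R})\otimes\pi_{B}=\mathcal{E}(\Phi^{\mathcal{R}}_{RB})$.

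The non-adversarial cases then follow from a single data-processing step. Using the realization together with fact (i),
\begin{align}
\mathbf{D}^{\Phi}(\Theta[\mathcal{N}]\Vert\mathcal{R}_{C\rightarrow D}) &=\mathbf{D}(\Omega\mathcal{E}(\Phi^{\mathcal{N}}_{RB})\Vert\Omega\mathcal{E}(\Phi^{\mathcal{R}}_{RB}))\\
&\leq\mathbf{D}(\Phi^{\mathcal{N}}_{RB}\Vert\Phi^{\mathcal{R}}_{RB})=\mathbf{D}^{\Phi}(\mathcal{N}\Vert\mathcal{R}_{A\rightarrow B}),
\end{align}
where the first equality uses $\Theta[\mathcal{N}](\Phi_{\tilde{R}C})=\Omega\mathcal{E}(\Phi^{\mathcal{N}}_{RB})$ and $\mathcal{R}_{C\rightarrow D}(\Phi_{\tilde{R}C})=\Theta[\mathcal{R}](\Phi_{\tilde{R}C})=\Omega\mathcal{E}(\Phi^{\mathcal{R}}_{RB})$, and the inequality is data processing \eqref{eq:data-processing} of $\mathbf{D}$ under the composite channel $\Omega_{BE\rightarrow D}\circ\mathcal{E}_{R\rightarrow\tilde{R}E}$ from $RB$ to $\tilde{R}D$. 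This settles \eqref{eq:Choi-ent-func-1}--\eqref{eq:Choi-ent-func-3}; the stated parameter ranges ($\alpha\in[1/2,1)\cup(1,\infty)$ for the sandwiched case, $\alpha\in(0,1)\cup(1,2]$ for the Petz case) are precisely the ranges on which $D_{\alpha}$ and $\overline{D}_{\alpha}$ obey \eqref{eq:data-processing}.

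For the adversarial functions \eqref{eq:Choi-ent-func-4}--\eqref{eq:Choi-ent-func-last} I would run the same argument but thread the infimum: given a near-optimal $\sigma_{R}$ in $\mathbf{D}^{\operatorname{adv},\Phi}(\mathcal{N}\Vert\mathcal{R})=\inf_{\sigma_{RA}}\mathbf{D}(\Phi^{\mathcal{N}}_{RB}\Vert\sigma_{R}\otimes\pi_{B})$, push $\sigma_{R}\otimes\pi_{B}$ through $\Omega\mathcal{E}$ and apply data processing. The point needing care is that $\Omega\mathcal{E}(\sigma_{R}\otimes\pi_{B})$ must remain of the replacer form $\theta_{\tilde{R}}\otimes\pi_{D}$ to be admissible in the adversarial infimum against $\mathcal{R}_{C\rightarrow D}$; this holds provided the $B$-port of $\Omega$ is completely uniformity preserving, i.e. $\Omega_{BE\rightarrow D}(\pi_{B}\otimes\zeta_{\tilde{R}E})=\zeta_{\tilde{R}}\otimes\pi_{D}$ for all $\zeta_{\tilde{R}E}$. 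I expect the main obstacle to be exactly this step: verifying, from the definition of a doubly stochastic superchannel in \cite{G18} (that $\Theta^{\dagger}$ is again a superchannel), the two structural properties invoked above---unitality of $\operatorname{Tr}_{E}\circ\Lambda$ and complete uniformity of $\Omega$---because the memory system $E$ makes the naive ``unital pre- and post-processing'' reading fail on dimensional grounds, so the correct statement must be read off from the link-product form of the adjoint $\Theta^{\dagger}$. Consistently threading the infimum in the adversarial case is the remaining, comparatively minor, technical point.
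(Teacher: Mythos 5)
Your proposal is correct and follows essentially the same route as the paper's proof: the channel $\mathcal{E}_{R\rightarrow\tilde{R}E}$ obtained via the purification argument from unitality of $\operatorname{Tr}_{E}\circ\Lambda$, the property $\Omega_{BE\rightarrow D}(I_{B}\otimes\rho_{ER})=I_{D}\otimes\rho_{R}$, and two applications of data processing are exactly the ingredients used there, with your Choi-divergence-against-$\mathcal{R}$ normalization differing only cosmetically from the paper's conditional-entropy formulation. The obstacle you flag at the end is not one: both structural facts---unitality of the reduced pre-processing, $\operatorname{Tr}_{E}\{\Lambda_{C\rightarrow AE}(I_{C})\}=I_{A}$, and the complete uniformity of $\Omega$ stated above---are precisely the characterization of doubly stochastic superchannels that the paper itself simply recalls from \cite{G18}, so your argument closes exactly as written, including the threading of the adversarial infimum.
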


\begin{proof}
Recall from \cite{G18}\ that, since $\Theta$ is doubly stochastic, we have
that%
\begin{align}
\operatorname{Tr}_{E}\{\Lambda_{C\rightarrow AE}(I_{C})\}  &  =I_{A}%
,\label{c1}\\
\Omega_{BE\rightarrow D}\left(  I_{B}\otimes\rho_{ER}\right)   &
=I_{D}\otimes\rho_{R}\;.
\end{align}
Let $\Theta$ be as above, and let us begin by considering the adversarial
quantities for the ranges of $\alpha$ for which data processing holds. Let
$\omega_{R}$ be an arbitrary state. Let $\xi_{AER}\equiv\Lambda_{C\rightarrow
AE}\left(  \Phi_{CR}\right)  $, and note that the marginal $\xi_{A}$ is the
maximally mixed state due to~\eqref{c1} and the dimension constraint
$|A|=|C|$. Therefore, there exists a quantum channel $\mathcal{E}%
_{R\rightarrow ER}$ such that
\begin{equation}
\xi_{AER}=\mathcal{E}_{R\rightarrow ER}(\Phi_{AR})\;.
\label{eq:key-relation-doubly-stoch}%
\end{equation}
Let $\sigma_{ER}\equiv\mathcal{E}_{R\rightarrow ER}(\omega_{R})$. With these
notations set, and working with the specific entropy function in
\eqref{eq:Choi-ent-func-4}, we find that\begin{widetext}
\begin{align}
H_{\alpha}^{\text{adv},\Phi}(\Theta\left[  \mathcal{N}_{A\rightarrow
B}\right]  )  &  \geq-D_{\alpha}(\Theta\left[  \mathcal{N}_{A\rightarrow
B}\right]  \left(  \Phi_{CR}\right)  \Vert I_{D}\otimes\sigma_{R})\\
&  =-D_{\alpha}(\Omega_{BE\rightarrow D}\circ\mathcal{N}_{A\rightarrow B}%
\circ\Lambda_{C\rightarrow AE}\left(  \Phi_{CR}\right)  \Vert I_{D}%
\otimes\sigma_{R})\\
&  =-D_{\alpha}(\Omega_{BE\rightarrow D}\circ\mathcal{N}_{A\rightarrow B}%
\circ\Lambda_{C\rightarrow AE}\left(  \Phi_{CR}\right)  \Vert\Omega
_{BE\rightarrow D}\left(  I_{B}\otimes\sigma_{ER}\right)  )\\
&  \geq-D_{\alpha}(\mathcal{N}_{A\rightarrow B}\circ\Lambda_{C\rightarrow
AE}\left(  \Phi_{CR}\right)  \Vert I_{B}\otimes\sigma_{ER})\\
&  =-D_{\alpha}(\mathcal{N}_{A\rightarrow B}\left(  \xi_{AER}\right)  \Vert
I_{B}\otimes\sigma_{ER})\\
&  =-D_{\alpha}(\mathcal{E}_{R\rightarrow ER}\circ\mathcal{N}_{A\rightarrow
B}\left(  \Phi_{AR}\right)  \Vert\mathcal{E}_{R\rightarrow ER}(I_{B}%
\otimes\omega_{R}))\\
&  \geq-D_{\alpha}(\mathcal{N}_{A\rightarrow B}\left(  \Phi_{AR}\right)  \Vert
I_{B}\otimes\omega_{R})\ .
\end{align}
\end{widetext}
Since the inequality holds for an arbitrary state $\omega_{R}$, we conclude
that%
\begin{equation}
H_{\alpha}^{\text{adv},\Phi}(\Theta\left[  \mathcal{N}_{A\rightarrow
B}\right]  )\geq H_{\alpha}^{\text{adv},\Phi}(\mathcal{N}_{A\rightarrow B}),
\end{equation}
which is the inequality in \eqref{eq:mono-doubly-stoc} for the adversarial
Choi R\'{e}nyi entropy $H_{\alpha}^{\text{adv},\Phi}(\mathcal{N})$. The proof
for the entropy functions in \eqref{eq:Choi-ent-func-1} and
\eqref{eq:Choi-ent-func-last} goes the same way, since the above proof only
relied upon the data processing inequality.

To arrive at the inequality in \eqref{eq:mono-doubly-stoc} for the entropy
functions in \eqref{eq:Choi-ent-func-2}--\eqref{eq:Choi-ent-func-3}, we
exploit the same proof, but we choose $\omega_{R}$ to be the maximally mixed
state. By tracing over systems $AE$ in \eqref{eq:key-relation-doubly-stoch},
we find that%
\begin{align}
\pi_{R}=\operatorname{Tr}_{AE}\{\xi_{AER}\} &  =\operatorname{Tr}%
_{AE}\{\mathcal{E}_{R\rightarrow ER}(\Phi_{AR})\}\\
&  =(\operatorname{Tr}_{E}\circ\mathcal{E}_{R\rightarrow ER})(\pi_{A}),
\end{align}
and so we conclude that the reduced channel $\operatorname{Tr}_{E}%
\circ\mathcal{E}_{R\rightarrow ER}$ is unital. This means that, by choosing
$\sigma_{ER}\equiv\mathcal{E}_{R\rightarrow ER}(\omega_{R})$ again, we can
conclude that $\sigma_{R}=\pi_{R}$. By applying the same steps as above, we
then find that%
\begin{align}
H_{\alpha}^{\Phi}(\Theta\left[  \mathcal{N}_{A\rightarrow B}\right]  )  &
=-D_{\alpha}(\Theta\left[  \mathcal{N}_{A\rightarrow B}\right]  \left(
\Phi_{CR}\right)  \Vert I_{D}\otimes\pi_{R})\\
&  \geq-D_{\alpha}(\mathcal{N}_{A\rightarrow B}\left(  \Phi_{AR}\right)  \Vert
I_{B}\otimes\pi_{R})\\
&  =H_{\alpha}^{\Phi}(\mathcal{N}_{A\rightarrow B}),
\end{align}
which is the inequality in \eqref{eq:mono-doubly-stoc} for the entropy
function in \eqref{eq:Choi-ent-func-2}. The proof for the entropy function in
\eqref{eq:Choi-ent-func-3} then goes the same way.
\end{proof}

\bigskip

As a final remark to conclude this section, we note that the following limit
holds%
\begin{equation}
\lim_{\alpha\rightarrow\infty}H_{\alpha}^{\text{adv},\Phi}(\mathcal{N}%
_{A\rightarrow B})=H_{\min}^{\mathrm{ext}}(\mathcal{N}_{A\rightarrow B}),
\end{equation}
as a consequence of \eqref{eq:max-rel-as-limit}, and so the proof given above
represents a different way, from that given in \cite{G18}, for arriving at the
conclusion that the extended min-entropy of a channel is non-decreasing under
the action of a doubly stochastic superchannel.

\section{Conclusion and outlook}

\label{sec:concl}

In this paper, we have introduced a definition for the entropy of a quantum
channel, based on the channel relative entropy between the channel of interest
and the completely randomizing channel. Building on this approach, we defined
the R\'{e}nyi and min-entropy of a channel. We proved that these channel
entropies satisfy the axioms for entropy functions, recently put forward in
\cite{G18}. We also proved that the entropy of a channel is equal to the
completely bounded entropy of \cite{DJKR06}, and the R\'{e}nyi entropy of a
channel is related to the completely bounded $1\rightarrow p$ norm considered
in \cite{DJKR06}. The smoothed min-entropy of a channel satisfies an
asymptotic equipartition property that generalizes the same property for
smoothed min-entropy of quantum states \cite{TCR09}. We showed that the
entropy of a channel has an operational interpretation in terms of a task
called quantum channel merging, in which the goal is for the receiver to merge
his share of the channel with the environment's share, and this task is a
dynamical counterpart of the known task of quantum state merging
\cite{HOW05,HOW07}. We evaluated the entropy of a channel for several common
channel models. Finally, we considered other generalized entropies of a
quantum channel and gave further evidence that
Definition~\ref{def:entropy-channel}\ is a cogent approach for defining
entropy of a quantum channel.

Going forward from here, one of the most interesting open questions is to
determine if there is a set of axioms that uniquely identifies the entropy of
a quantum channel, similar to how there is a set of axioms that uniquely
characterizes Shannon entropy \cite{C08}. We wonder the same for the R\'{e}nyi
entropy of a channel, given that the R\'{e}nyi entropies were originally
identified \cite{Renyi61}\ by removing one of the axioms that uniquely
characterizes Shannon entropy. On a different front, one could alternatively
define the entropy of $n$ uses of a quantum channel in terms of an
optimization over quantum co-strategies \cite{GW07,G09}\ or quantum combs
\cite{CDP08a}, and for analyzing the asymptotic equipartition property in this
scenario, one could alternatively smooth with respect to the strategy norm of
\cite{CDP08a,G12}. The results of \cite{CMW16} suggest that the asymptotic
equipartition property might still hold in this more complex scenario, but
further analysis is certainly required. Note that a related scenario has been
considered recently in \cite{CE16}. Finally, if the Petz--R\'{e}nyi channel
divergence between an arbitrary channel and the completely depolarizing
channel is additive, then a R\'{e}nyi channel entropy defined from it would be
convincing. This question about Petz--R\'{e}nyi channel divergence has been
open since \cite{CMW16}.

\begin{acknowledgments}
We are grateful to the local organizers (Graeme Smith and Felix Leditzky)\ of
the Rocky Mountain Summit on Quantum Information, held at JILA, Boulder,
Colorado during June 2018. We are especially grateful to Xiao Yuan for
notifying us of a gap in a previous proof of \eqref{eq:lower-AEP}. GG
acknowledges support from the Natural Sciences and Engineering Research
Council of Canada (NSERC). MMW\ acknowledges support from the National Science
Foundation under grant nos.~1714215 and 1907615.
\end{acknowledgments}


\bibliography{Ref}
\bibliographystyle{unsrt}

\appendix

\section{Quantum channel merging capacity proof}

\label{sec:channel-merging-proof}

This appendix details the proof of Theorem~\ref{thm:opt-int-ent-ch}.

\subsection{Converse bound}

Let us begin by considering the converse part, following the approach given in
\cite{HOW07}\ for quantum state merging.

\begin{proposition}
\label{prop:1-shot-converse}Fix $n,L,K\in\mathbb{N}$ and $\varepsilon
\in\lbrack0,1]$. Let $\mathcal{N}_{A\rightarrow B}$ be a quantum channel. Then
an $(n,L/K,\varepsilon)$ quantum channel merging protocol for $\mathcal{N}%
_{A\rightarrow B}$ satisfies the following bound:%
\begin{multline}
\frac{1}{n}\left[  (1-\sqrt{\varepsilon})\log_{2}L-\log_{2}K\right] \\
\leq H(\mathcal{N})+\sqrt{\varepsilon}\log_{2}\left\vert A\right\vert
+g_{2}(\sqrt{\varepsilon}).
\end{multline}

\end{proposition}

\begin{proof}
We closely follow the approach given in \cite[Section~IV-B]{HOW07}, which
established the converse part of the quantum state merging theorem. Consider
an arbitrary $(n,L/K,\varepsilon)$ quantum channel merging protocol of the
form described above. To prove the converse, we can really employ any
entanglement measure that reduces to the entropy of entanglement for pure
states and is asymptotically continuous. So let us choose the entanglement of
formation \cite{BDSW96}, which is defined for a bipartite state $\rho_{AB}$
as
\begin{multline}
E_{F}(A;B)_{\rho}\equiv\\
\inf\left\{  \sum_{x}p_{X}(x)H(A)_{\psi^{x}}:\rho_{AB}=\sum_{x}p_{X}%
(x)\psi_{AB}^{x}\right\}  , \label{eq:EoF}%
\end{multline}
where the infimum is with respect to all convex decompositions of $\rho_{AB}$
into pure states $\psi_{AB}^{x}$. The entanglement of formation does not
increase under the action of an LOCC\ channel \cite{BDSW96}. For the purposes
of the converse, as in \cite[Section~IV-B]{HOW07}, we imagine that the
reference party$~R$ is working together with Bob$~B$, and they are spatially
separated from Eve$~E$. Let $\omega_{R\widetilde{B}_{E}^{n}\widetilde{E}%
^{n}\overline{B}_{1}\overline{E}_{1}}$ and $\widetilde{\omega}_{R\widetilde
{B}_{E}^{n}\widetilde{E}^{n}\overline{B}_{1}\overline{E}_{1}}$\ denote the
following respective states:%
\begin{multline}
\omega_{R\widetilde{B}_{E}^{n}\widetilde{E}^{n}\overline{B}_{1}\overline
{E}_{1}} \equiv\\
[\operatorname{id}_{BE\rightarrow\widetilde{B}_{E}\widetilde{E}}^{\otimes
n}\circ(\mathcal{U}_{A\rightarrow BE}^{\mathcal{N}})^{\otimes n}](\psi
_{RA^{n}})\otimes\Phi_{\overline{B}_{1}\overline{E}_{1}}^{L},
\end{multline}
\begin{multline}
\widetilde{\omega}_{R\widetilde{B}_{E}^{n}\widetilde{E}^{n}\overline{B}%
_{1}\overline{E}_{1}} \equiv\\
\mathcal{P}_{B^{n}E^{n}\overline{B}_{0}\overline{E}_{0}\rightarrow
\widetilde{B}_{E}^{n}\widetilde{E}^{n}\overline{B}_{1}\overline{E}_{1}%
}([(\mathcal{U}_{A\rightarrow BE}^{\mathcal{N}})^{\otimes n}(\psi_{RA^{n}%
})]\otimes\Phi_{\overline{B}_{0}\overline{E}_{0}}^{K}).
\end{multline}
Define%
\begin{equation}
f(n,\varepsilon,\left\vert A\right\vert ,L)\equiv\sqrt{\varepsilon}n\log
_{2}\left\vert A\right\vert +\sqrt{\varepsilon}\log_{2}L+g_{2}(\sqrt
{\varepsilon}).
\end{equation}
We then have that%
\begin{align}
& \log_{2}L+H(R)_{\omega}\nonumber\\
&  =H(\overline{B}_{1})_{\omega}+H(R)_{\omega}\\
&  =H(R\overline{B}_{1})_{\omega}\\
&  =E_{F}(R\overline{B}_{1};\widetilde{B}_{E}^{n}\widetilde{E}^{n}\overline
{E}_{1})_{\omega}\\
&  \leq E_{F}(R\overline{B}_{1};\widetilde{B}_{E}^{n}\widetilde{E}%
^{n}\overline{E}_{1})_{\widetilde{\omega}}+f(n,\varepsilon,\left\vert
A\right\vert ,L).
\end{align}
The first equality follows because $\log_{2}L=H(\overline{B}_{1})_{\omega}$
for a maximally entangled state of Schmidt rank $L$. The second equality
follows because quantum entropy is additive with respect to product states.
The third equality follows because the entanglement of formation reduces to
entropy of entanglement for pure states. The inequality is a consequence of
the uniform continuity bound from \cite[Corollary~4]{Winter15}. Continuing, we
have that%
\begin{align}
&  E_{F}(R\overline{B}_{1};\widetilde{B}_{E}^{n}\widetilde{E}^{n}\overline
{E}_{1})_{\widetilde{\omega}}\nonumber\\
&  \leq E_{F}(RB^{n}\overline{B}_{0};E^{n}\overline{E}_{0})_{(\mathcal{U}%
^{\mathcal{N}})^{\otimes n}(\psi)\otimes\Phi^{K}}\\
&  =H(RB^{n}\overline{B}_{0})_{(\mathcal{U}^{\mathcal{N}})^{\otimes n}%
(\psi)\otimes\Phi^{K}}\\
&  =H(RB^{n})_{(\mathcal{U}^{\mathcal{N}})^{\otimes n}(\psi)}+H(\overline
{B}_{0})_{\Phi^{K}}\\
&  =H(RB^{n})_{(\mathcal{U}^{\mathcal{N}})^{\otimes n}(\psi)}+\log_{2}K.
\end{align}
The first inequality follows from LOCC\ monotonicity of the entanglement of
formation under the action of the one-way LOCC\ channel $\mathcal{P}%
_{B^{n}E^{n}\overline{B}_{0}\overline{E}_{0}\rightarrow\widetilde{B}_{E}%
^{n}\widetilde{E}^{n}\overline{B}_{1}\overline{E}_{1}}$. The last three
equalities follow for reasons similar to what have been given above. Putting
everything together, we find that%
\begin{align}
\log_{2}M  &  =\log_{2}L-\log_{2}K\\
&  \leq H(B^{n}|R)_{(\mathcal{U}^{\mathcal{N}})^{\otimes n}(\psi
)}+f(n,\varepsilon,\left\vert A\right\vert ,L).
\end{align}
Since the protocol is required to work for every possible input state
$\psi_{RA^{n}}$, we conclude the following bound%
\begin{align}
\log_{2}M  &  \leq\inf_{\psi_{RA^{n}}}H(B^{n}|R)_{(\mathcal{U}^{\mathcal{N}%
})^{\otimes n}(\psi)}+f(n,\varepsilon,\left\vert A\right\vert ,L)\\
&  =nH(\mathcal{N})+f(n,\varepsilon,\left\vert A\right\vert ,L),
\end{align}
with the equality following from the additivity of the entropy of a channel
\cite{DJKR06} (recalled here as Proposition~\ref{prop:entropy-additive}). The
inequality in the statement of the proposition follows by dividing by $n$ and rearranging.
\end{proof}

\subsection{Achievability bound}

Now let us consider the achievability part.

\begin{proposition}
\label{prop:achievability}Fix $n,L,K\in\mathbb{N}$ and $\varepsilon\in(0,1)$.
Let $\mathcal{N}_{A\rightarrow B}$ be a quantum channel. Then there exists an
$(n,L/K,\varepsilon)$ channel merging protocol for $\mathcal{N}_{A\rightarrow
B}$ such that its entanglement gain satisfies the following inequality for all
$\alpha>1$:
\begin{multline}
\frac{1}{n}\left[  \log_{2}L-\log_{2}K\right]  \geq H_{\alpha}(\mathcal{N})\\
-\frac{\alpha}{n\left(  \alpha-1\right)  }\left[  4\log_{2}(1/\varepsilon
)+4(\left\vert A\right\vert ^{2}-1)\log_{2}(n+1)\right] \\
-\frac{\alpha}{n\left(  \alpha-1\right)  }\left[ 1/\alpha+2\log_{2}13\right]
.
\end{multline}

\end{proposition}

\begin{proof}
For the achievability part, we employ ideas used in the theory of quantum
channel simulation \cite{BCR09,BBCW13,BRW14,B13}. In particular, the main
challenge of quantum channel merging over quantum state merging is that it is
necessary for the protocol to work for every possible state $\psi_{RA^{n}}%
$\ that could be input, not merely for a fixed state input. In prior work on
quantum channel simulation \cite{BCR09,BBCW13,BRW14,B13}, this challenge has
been met by appealing to the post-selection technique \cite[Theorem~1]{CKR09}.
Here, we use the same approach. In the context of the post-selection
technique, it is helpful to consult the unpublished note \cite{M10} for
further details.

Let $\zeta_{A^{n}\hat{A}^{n}}$ denote the maximally mixed state of the
symmetric subspace of the $A^{n}\hat{A}^{n}$ systems \cite{H13}, where
$\hat{A}$ is isomorphic to the channel input system $A$. Note that this state
can be written as \cite[Proposition~6]{H13}%
\begin{equation}
\zeta_{A^{n}\hat{A}^{n}}=\int d\psi_{A\hat{A}}\ \psi_{A\hat{A}}^{\otimes n},
\end{equation}
where $\psi_{A\hat{A}}$ denotes a pure state and $d\psi_{A\hat{A}}$ is the
Haar measure over the pure states. This state is permutation invariant; i.e.,
for a unitary channel $\mathcal{W}_{A^{n}}^{\pi} \otimes\mathcal{W}_{\hat
{A}^{n}}^{\pi}$ corresponding to a permutation $\pi$, we have that
$\zeta_{A^{n}\hat{A}^{n}}=(\mathcal{W}_{A^{n}}^{\pi}\otimes\mathcal{W}%
_{\hat{A}^{n}}^{\pi})(\zeta_{A^{n}\hat{A}^{n}})$ for all $\pi\in S_{n}$, with
$S_{n}$ denoting the symmetric group. Let $\zeta_{R^{\prime}\hat{A}^{n}A^{n}}$
be a purification of $\zeta_{A^{n}\hat{A}^{n}}$, and note that it can be
chosen such that \cite{M10}%
\begin{equation}
\zeta_{R^{\prime}\hat{A}^{n}A^{n}}=(\mathcal{W}_{R^{\prime}}^{\pi}%
\otimes\mathcal{W}_{\hat{A}^{n}}^{\pi}\otimes\mathcal{W}_{A^{n}}^{\pi}%
)(\zeta_{R^{\prime}\hat{A}^{n}A^{n}}),
\end{equation}
where $\mathcal{W}_{R^{\prime}}^{\pi}$ is some unitary, which implies that%
\begin{equation}
(\mathcal{W}_{R^{\prime}}^{\pi^{-1}} \otimes\mathcal{W}_{\hat{A}^{n}}%
^{\pi^{-1}})(\zeta_{R^{\prime}\hat{A}^{n}A^{n}})=\mathcal{W}_{A^{n}}^{\pi
}(\zeta_{R^{\prime}\hat{A}^{n}A^{n}}). \label{eq:symmetry-de-finetti-state}%
\end{equation}

The first goal is to show the existence of a state merging protocol for the
state $(\mathcal{U}_{A\rightarrow BE}^{\mathcal{N}})^{\otimes n}%
(\zeta_{R^{\prime}\hat{A}^{n}A^{n}})$. As shown in \cite[Theorem~5.2]%
{DBWR14}\ (see also the earlier \cite[Proposition~4.7]{B09} in this context),
there exists a state merging protocol with error $\sqrt{13\varepsilon^{\prime
}}$, with the entanglement gain satisfying%
\begin{multline}
\log_{2}L-\log_{2}K\geq H_{\min}^{\varepsilon^{\prime}}(B^{n}|\hat{A}%
^{n}R^{\prime})_{(\mathcal{U}^{\mathcal{N}})^{\otimes n}(\zeta)}\\
-2\log_{2}\!\left(  \frac{1}{\varepsilon^{\prime}}\right)  .
\label{eq:berta-one-shot}%
\end{multline}
(To arrive at the inequality in \eqref{eq:berta-one-shot}, one needs to use
the fact that $P(\rho,\sigma) \geq\frac{1}{2}\left\Vert \rho-\sigma\right\Vert
_{1}$ for any two states.) That is, there exists a one-way LOCC\ channel
$\mathcal{P}_{B^{n}E^{n}\overline{B}_{0}\overline{E}_{0}\rightarrow
\widetilde{B}_{E}^{n}\widetilde{E}^{n}\overline{B}_{1}\overline{E}_{1}}$ such
that the following inequality holds \begin{widetext}
\begin{multline}
\frac{1}{2}\bigg\Vert[\operatorname{id}_{BE\rightarrow\widetilde{B}%
_{E}\widetilde{E}}^{\otimes n}\circ(\mathcal{U}_{A\rightarrow BE}%
^{\mathcal{N}})^{\otimes n}](\zeta_{R'\hat{A}^{n}A^{n}})\otimes\Phi
_{\overline{B}_{1}\overline{E}_{1}}^{L}\label{eq:de-finetti-performance}\\
-\mathcal{P}_{B^{n}E^{n}\overline{B}_{0}\overline{E}_{0}\rightarrow
\widetilde{B}_{E}^{n}\widetilde{E}^{n}\overline{B}_{1}\overline{E}_{1}%
}([(\mathcal{U}_{A\rightarrow BE}^{\mathcal{N}})^{\otimes n}(\zeta_{\hat
R'{A}^{n}A^{n}})]\otimes\Phi_{\overline{B}_{0}\overline{E}_{0}}^{K}%
)\bigg\Vert_{1}\leq\sqrt{13\varepsilon^{\prime}}.
\end{multline}
Now our goal is for \eqref{eq:criterion-CM} to be satisfied for all possible
states $\psi_{RA^{n}}$. As a first step toward this goal, note that we can
symmetrize the protocol $\mathcal{P}_{B^{n}E^{n}\overline{B}_{0}\overline
{E}_{0}\rightarrow\widetilde{B}_{E}^{n}\widetilde{E}^{n}\overline{B}%
_{1}\overline{E}_{1}}$ as follows%
\begin{equation}
\overline{\mathcal{P}}_{B^{n}E^{n}\overline{B}_{0}\overline{E}_{0}%
\rightarrow\widetilde{B}_{E}^{n}\widetilde{E}^{n}\overline{B}_{1}\overline
{E}_{1}}\equiv\frac{1}{n!}\sum_{\pi\in S_{n}}\left(  \mathcal{W}%
_{\widetilde{B}_{E}^{n}}^{\pi^{-1}}\otimes\mathcal{W}_{\widetilde{E}^{n}}%
^{\pi^{-1}}\right)  \circ\mathcal{P}_{B^{n}E^{n}\overline{B}_{0}\overline
{E}_{0}\rightarrow\widetilde{B}_{E}^{n}\widetilde{E}^{n}\overline{B}%
_{1}\overline{E}_{1}}\circ\left(  \mathcal{W}_{B^{n}}^{\pi}\otimes
\mathcal{W}_{E^{n}}^{\pi}\right)  ,
\end{equation}
and the inequality in \eqref{eq:de-finetti-performance} is still satisfied,
i.e.,%
\begin{multline}
\frac{1}{2}\bigg\Vert[\operatorname{id}_{BE\rightarrow\widetilde{B}%
_{E}\widetilde{E}}^{\otimes n}\circ(\mathcal{U}_{A\rightarrow BE}%
^{\mathcal{N}})^{\otimes n}](\zeta_{R'\hat{A}^{n}A^{n}})\otimes\Phi
_{\overline{B}_{1}\overline{E}_{1}}^{L}\\
-\overline{\mathcal{P}}_{B^{n}E^{n}\overline{B}_{0}\overline{E}_{0}%
\rightarrow\widetilde{B}_{E}^{n}\widetilde{E}^{n}\overline{B}_{1}\overline
{E}_{1}}\!\left(  [(\mathcal{U}_{A\rightarrow BE}^{\mathcal{N}})^{\otimes
n}(\zeta_{R'\hat{A}^{n}A^{n}})]\otimes\Phi_{\overline{B}_{0}\overline{E}_{0}%
}^{K}\right)  \bigg\Vert_{1}\leq\sqrt{13\varepsilon^{\prime}}.
\end{multline}
This follows from the unitary invariance and convexity of the trace norm, the
permutation covariance of the maps $[\operatorname{id}_{BE\rightarrow
\widetilde{B}_{E}\widetilde{E}}^{\otimes n}\circ(\mathcal{U}_{A\rightarrow
BE}^{\mathcal{N}})^{\otimes n}]$ and $(\mathcal{U}_{A\rightarrow
BE}^{\mathcal{N}})^{\otimes n}$:%
\begin{align}
\forall\pi &  \in S_{n}:\left(  \mathcal{W}_{\widetilde{B}_{E}^{n}}^{\pi^{-1}%
}\otimes\mathcal{W}_{\widetilde{E}^{n}}^{\pi^{-1}}\right)  \circ
\operatorname{id}_{BE\rightarrow\widetilde{B}_{E}\widetilde{E}}^{\otimes
n}\circ(\mathcal{U}_{A\rightarrow BE}^{\mathcal{N}})^{\otimes n}%
\circ\mathcal{W}_{A^{n}}^{\pi}=\operatorname{id}_{BE\rightarrow\widetilde
{B}_{E}\widetilde{E}}^{\otimes n}\circ(\mathcal{U}_{A\rightarrow
BE}^{\mathcal{N}})^{\otimes n},\\
\forall\pi &  \in S_{n}:\left(  \mathcal{W}_{B^{n}}^{\pi^{-1}}\otimes
\mathcal{W}_{E^{n}}^{\pi^{-1}}\right)  \circ(\mathcal{U}_{A\rightarrow
BE}^{\mathcal{N}})^{\otimes n}\circ\mathcal{W}_{A^{n}}^{\pi}=(\mathcal{U}%
_{A\rightarrow BE}^{\mathcal{N}})^{\otimes n},
\end{align}
and the equality in \eqref{eq:symmetry-de-finetti-state}. Furthermore, the
symmetrization can be accomplished by one-way LOCC (Bob randomly picks $\pi$,
applies $\mathcal{W}_{B^{n}}^{\pi}$, communicates the value to Eve, who
applies $\mathcal{W}_{E^{n}}^{\pi}$ at the input and $\mathcal{W}%
_{\widetilde{B}_{E}^{n}}^{\pi^{-1}}\otimes\mathcal{W}_{\widetilde{E}^{n}}%
^{\pi^{-1}}$ at the output), and is thus free in our model. Since the
symmetrized protocol, the target channel $[\operatorname{id}_{BE\rightarrow
\widetilde{B}_{E}\widetilde{E}}^{\otimes n}\circ(\mathcal{U}_{A\rightarrow
BE}^{\mathcal{N}})^{\otimes n}]$, and the channel $(\mathcal{U}_{A\rightarrow
BE}^{\mathcal{N}})^{\otimes n}$ are permutation covariant, we can now invoke
the post-selection technique \cite[Theorem~1]{CKR09}\ to conclude that as long
as we choose $\varepsilon^{\prime}=\varepsilon\left(  n+1\right)  ^{-2\left(
\left\vert A\right\vert ^{2}-1\right)  }$, then it is guaranteed that%
\begin{multline}
\sup_{\psi_{RA^{n}}}\frac{1}{2}\bigg\Vert[\operatorname{id}_{BE\rightarrow
\widetilde{B}_{E}\widetilde{E}}^{\otimes n}\circ(\mathcal{U}_{A\rightarrow
BE}^{\mathcal{N}})^{\otimes n}](\psi_{RA^{n}})\otimes\Phi_{\overline{B}%
_{1}\overline{E}_{1}}^{L}\\
-\overline{\mathcal{P}}_{B^{n}E^{n}\overline{B}_{0}\overline{E}_{0}%
\rightarrow\widetilde{B}_{E}^{n}\widetilde{E}^{n}\overline{B}_{1}\overline
{E}_{1}}([(\mathcal{U}_{A\rightarrow BE}^{\mathcal{N}})^{\otimes n}%
(\psi_{RA^{n}})]\otimes\Phi_{\overline{B}_{0}\overline{E}_{0}}^{K}%
)\bigg\Vert_{1}\leq\sqrt{13\varepsilon}.
\end{multline}
Propagating this choice of $\varepsilon^{\prime}$ to the quantity in
\eqref{eq:berta-one-shot}, this means that we require%
\begin{equation}
\log_{2}L-\log_{2}K\geq H_{\min}^{\varepsilon\left(  n+1\right)  ^{-2\left(
\left\vert A\right\vert ^{2}-1\right)  }}(B^{n}|\hat{A}^{n}R')_{(\mathcal{U}%
^{\mathcal{N}})^{\otimes n}(\zeta)}-2\log_{2}\!\left(  \frac{1}{\varepsilon
}\right)  -4\left(  \left\vert A\right\vert ^{2}-1\right)  \log_{2}\left(
n+1\right)  .
\end{equation}
At this point, we invoke \cite[Eq.~(6.92)]{T15}, as well as the inequality
$1-\sqrt{1-\delta^{2}}\geq\delta^{2}/2$ holding for all $\delta\in\left[
0,1\right]  $, to conclude the following bound for $\alpha>1$:%
\begin{align}
&  \!\!\!\! H_{\min}^{\varepsilon\left(  n+1\right)  ^{-2\left(  \left\vert
A\right\vert ^{2}-1\right)  }}(B^{n}|\hat{A}^{n}R')_{(\mathcal{U}^{\mathcal{N}%
})^{\otimes n}(\zeta)}\nonumber\\
&  \geq H_{\alpha}(B^{n}|\hat{A}^{n}R')_{(\mathcal{U}^{\mathcal{N}})^{\otimes
n}(\zeta)|(\mathcal{U}^{\mathcal{N}})^{\otimes n}(\zeta)}+\frac{2\log
_{2}(\varepsilon\left(  n+1\right)  ^{-2\left(  \left\vert A\right\vert
^{2}-1\right)  })-1}{\alpha-1}\\
&  \geq\inf_{\phi_{RA^{n}}}H_{\alpha}(B^{n}|R)_{\omega|\omega}-\frac{2\log
_{2}(1/\varepsilon)+4\left(  \left\vert A\right\vert ^{2}-1\right)  \log
_{2}\left(  n+1\right)  -1}{\alpha-1}\\
&  =H_{\alpha}(\mathcal{N}^{\otimes n})-\frac{2\log_{2}(1/\varepsilon
)+4\left(  \left\vert A\right\vert ^{2}-1\right)  \log_{2}\left(  n+1\right)
-1}{\alpha-1}\\
&  =nH_{\alpha}(\mathcal{N})-\frac{2\log_{2}(1/\varepsilon)+4\left(
\left\vert A\right\vert ^{2}-1\right)  \log_{2}\left(  n+1\right)  -1}%
{\alpha-1}.
\end{align}
\end{widetext}
where the first equality follows from
Proposition~\ref{prop:identities-for-renyi}, with $\omega_{RB^{n}}%
\equiv\mathcal{N}_{A\rightarrow B}^{\otimes n}(\phi_{RA^{n}})$,\ and the last
equality critically relies upon the additivity $H_{\alpha}(\mathcal{N}%
^{\otimes n})=nH_{\alpha}(\mathcal{N})$\ from
Proposition~\ref{prop:renyi-additive}, which in turn directly follows from the
main result of \cite{DJKR06}. Putting everything together, we conclude that
for $\varepsilon\in(0,1/13)$, there exists an $(n,L/K,\sqrt{13\varepsilon})$
channel merging protocol for $\mathcal{N}_{A\rightarrow B}$ such that its
entanglement gain satisfies the following inequality for all $\alpha>1$:
\begin{multline}
\frac{1}{n}\left[  \log_{2}L-\log_{2}K\right]  \geq H_{\alpha}(\mathcal{N})\\
-\frac{\alpha}{n\left(  \alpha-1\right)  }\left[  2\log_{2}(\frac
{1}{\varepsilon} )+4\left(  \left\vert A\right\vert ^{2}-1\right)  \log
_{2}(n+1)+\frac{1}{\alpha} \right]  .
\end{multline}
We arrive at the statement of the proposition by a final substitution
$\varepsilon^{\prime\prime}=\sqrt{13\varepsilon}\in(0,1)$, which implies that
$\varepsilon=(\varepsilon^{\prime\prime})^{2}/13$ and $2\log_{2}%
(1/\varepsilon)=4\log_{2}(1/\varepsilon^{\prime\prime})+2\log_{2}13$.
\end{proof}

\subsection{Quantum channel merging capacity is equal to the entropy of a
channel}

We can now put together the previous two propositions to conclude the
following theorem:

\begin{proof}
[Proof of Theorem~\ref{thm:opt-int-ent-ch}]By applying the limits
$n\rightarrow\infty$ and $\varepsilon\rightarrow0$, the following bound is a
consequence of Proposition~\ref{prop:1-shot-converse}:%
\begin{equation}
C_{M}(\mathcal{N})\leq H(\mathcal{N}).
\end{equation}
For an arbitrary $\alpha>1$, $\varepsilon\in(0,1)$, and $\delta>0$, we can
conclude from Proposition~\ref{prop:achievability} that there exists an
$(n,2^{n\left[  H_{\alpha}(\mathcal{N})-\delta\right]  },\varepsilon)$ channel
merging protocol by taking $n$ sufficiently large. This implies that
$H_{\alpha}(\mathcal{N})$ is an achievable rate for all $\alpha>1$. However,
since this statement is true for all $\alpha>1$, we can conclude that the rate
$\sup_{\alpha>1}H_{\alpha}(\mathcal{N})=H(\mathcal{N})$ is achievable also.
This establishes that $C_{M}(\mathcal{N})\geq H(\mathcal{N})$.
\end{proof}

\section{Max-mutual information of a channel and the asymptotic equipartition
property}

\label{app:max-mut-AEP}In this appendix, we point out how the max-mutual
information of a quantum channel is a limit of the sandwiched R\'{e}nyi mutual
information of a channel, the latter having been defined in \cite{GW15}. We
then show how to arrive at an alternate proof of the asymptotic equipartition
property in \cite[Theorem~8]{FWTB18}\ by making use of this connection.

First recall that the sandwiched R\'{e}nyi mutual information of a channel is
defined for $\alpha\in(0,1)\cup(1,\infty)$ as \cite[Eq.~(3.5)]{GW15}%
\begin{equation}
I_{\alpha}(\mathcal{N})\equiv\max_{\psi_{RA}}I_{\alpha}(R;B)_{\omega},
\end{equation}
where%
\begin{align}
\omega_{RB}  &  \equiv\mathcal{N}_{A\rightarrow B}(\psi_{RA}),\\
I_{\alpha}(R;B)_{\omega}  &  \equiv\min_{\sigma_{B}}D_{\alpha}(\omega
_{RB}\Vert\omega_{R}\otimes\sigma_{B}),
\end{align}
where $D_{\alpha}$ is the sandwiched R\'{e}nyi relative entropy from
\eqref{eq:sandwiched-Renyi}. It was subsequently used in \cite{CMW16}. The
max-mutual information of a channel is equal to \cite[Definition~4]{FWTB18}
\begin{align}
I_{\max}(\mathcal{N})  &  \equiv\max_{\psi_{RA}}I_{\max}(R;B)_{\omega},\\
I_{\max}(R;B)_{\omega}  &  \equiv\min_{\sigma_{B}}D_{\max}(\omega_{RB}%
\Vert\omega_{R}\otimes\sigma_{B}).
\end{align}

\begin{proposition}
For a quantum channel $\mathcal{N}_{A\rightarrow B}$, the following limit
holds
\begin{equation}
I_{\max}(\mathcal{N})=\lim_{\alpha\rightarrow\infty}I_{\alpha}(\mathcal{N}).
\end{equation}

\end{proposition}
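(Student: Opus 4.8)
The plan is to reduce the claim to an interchange of the limit $\alpha\rightarrow\infty$ with the minimization over $\sigma_{B}$, exploiting two facts about the sandwiched R\'enyi relative entropy: it is monotonically non-decreasing in $\alpha$, and $D_{\max}(\rho\Vert\sigma)=\sup_{\alpha>1}D_{\alpha}(\rho\Vert\sigma)=\lim_{\alpha\rightarrow\infty}D_{\alpha}(\rho\Vert\sigma)$ by \eqref{eq:max-rel-as-limit}.

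First I would dispatch the easy inequality. Since $D_{\alpha}(\rho\Vert\sigma)\leq D_{\max}(\rho\Vert\sigma)$ for every $\alpha$ and every pair $(\rho,\sigma)$, taking first the minimum over $\sigma_{B}$ and then the maximum over $\psi_{RA}$ preserves this bound, so $I_{\alpha}(\mathcal{N})\leq I_{\max}(\mathcal{N})$ for all $\alpha$. Moreover, the pointwise minimum and maximum of a family of functions that are monotonically non-decreasing in $\alpha$ is again monotonically non-decreasing, so $I_{\alpha}(\mathcal{N})$ is non-decreasing in $\alpha$ and bounded above by $I_{\max}(\mathcal{N})$; hence $\lim_{\alpha\rightarrow\infty}I_{\alpha}(\mathcal{N})$ exists and is at most $I_{\max}(\mathcal{N})$.

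The substance of the proof lies in the reverse inequality, and the main obstacle is the exchange of the $\alpha\rightarrow\infty$ limit with the inner minimization. For this I would establish the single-state lemma that, for a fixed state $\omega_{RB}$,
\[
\lim_{\alpha\rightarrow\infty}\min_{\sigma_{B}}D_{\alpha}(\omega_{RB}\Vert\omega_{R}\otimes\sigma_{B})=\min_{\sigma_{B}}D_{\max}(\omega_{RB}\Vert\omega_{R}\otimes\sigma_{B}).
\]
Writing $g_{\alpha}(\sigma_{B})$ for the argument of the minimization, I note that each $g_{\alpha}$ is lower semicontinuous in $\sigma_{B}$ and that the set of density operators on $\mathcal{H}_{B}$ is compact, so a minimizer $\sigma^{(\alpha)}$ exists; the choice $\sigma_{B}=\pi_{B}$ shows the minimum is finite, since $\operatorname{supp}(\omega_{RB})\subseteq\operatorname{supp}(\omega_{R})\otimes\mathcal{H}_{B}$. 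Passing to a subsequence $\alpha_{k}\rightarrow\infty$ along which $\sigma^{(\alpha_{k})}\rightarrow\sigma^{\ast}$, I would use monotonicity to write $g_{\beta}(\sigma^{(\alpha_{k})})\leq g_{\alpha_{k}}(\sigma^{(\alpha_{k})})=\min_{\sigma_{B}}g_{\alpha_{k}}$ for $\alpha_{k}\geq\beta$, and lower semicontinuity to obtain $g_{\beta}(\sigma^{\ast})\leq\liminf_{k}g_{\beta}(\sigma^{(\alpha_{k})})\leq L$, where $L\equiv\lim_{\alpha\rightarrow\infty}\min_{\sigma_{B}}g_{\alpha}$. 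Since this holds for every $\beta$ and $D_{\max}=\sup_{\beta}D_{\beta}$, taking the supremum over $\beta$ gives $g_{\max}(\sigma^{\ast})\leq L$, whence $\min_{\sigma_{B}}g_{\max}\leq L$; together with the easy direction this yields the lemma.

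Finally I would promote the lemma to the channel statement, sidestepping any need to argue that the outer maximization over $\psi_{RA}$ is attained. Fixing $\varepsilon>0$, I choose $\psi^{\varepsilon}_{RA}$ with $I_{\max}(R;B)_{\omega^{\varepsilon}}\geq I_{\max}(\mathcal{N})-\varepsilon$, where $\omega^{\varepsilon}_{RB}=\mathcal{N}_{A\rightarrow B}(\psi^{\varepsilon}_{RA})$. Then the lemma gives
\[
\lim_{\alpha\rightarrow\infty}I_{\alpha}(\mathcal{N})\geq\lim_{\alpha\rightarrow\infty}I_{\alpha}(R;B)_{\omega^{\varepsilon}}=I_{\max}(R;B)_{\omega^{\varepsilon}}\geq I_{\max}(\mathcal{N})-\varepsilon,
\]
and letting $\varepsilon\rightarrow0$ yields $\lim_{\alpha\rightarrow\infty}I_{\alpha}(\mathcal{N})\geq I_{\max}(\mathcal{N})$. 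Combining the two inequalities completes the proof; the only compactness and semicontinuity input required is for the inner minimization, which is where I expect the care to be concentrated.
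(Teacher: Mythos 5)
Your proof is correct, but it takes a noticeably different route from the paper's. The paper handles the nontrivial direction by restricting attention to the maximally entangled state: it lower-bounds $\sup_{\alpha>1}\max_{\psi_{RA}}\min_{\sigma_{B}}D_{\alpha}$ by the same expression evaluated at $\Phi_{RA}$, exchanges $\min_{\sigma_B}$ and $\sup_{\alpha>1}$ there by citing a minimax result (Corollary~A.2 of the Mosonyi--Hiai paper), and then invokes the remark after Definition~4 of the smoothed max-information paper, namely that $\Phi_{RA}$ is an optimal input for $I_{\max}(\mathcal{N})$, to identify the result with $I_{\max}(\mathcal{N})$. You avoid both citations: your single-state lemma reproves the minimax exchange inline via compactness of the set of states $\sigma_B$, lower semicontinuity of $\sigma_B\mapsto D_{\alpha}(\omega_{RB}\Vert\omega_R\otimes\sigma_B)$, and monotonicity of $D_{\alpha}$ in $\alpha$ (this is essentially the content of the cited corollary, and your subsequence argument is sound, including the finiteness check via $\sigma_B=\pi_B$ and $\operatorname{supp}(\omega_{RB})\subseteq\operatorname{supp}(\omega_R)\otimes\mathcal{H}_B$); and your $\varepsilon$-optimal input state $\psi^{\varepsilon}_{RA}$ removes any dependence on knowing a universal optimizer for $I_{\max}(\mathcal{N})$. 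What each approach buys: the paper's proof is shorter and leans on established results, with the maximally entangled state doing double duty; yours is self-contained and more robust---it would go through verbatim for any channel information quantity of the form $\max_{\psi}\min_{\sigma}D_{\alpha}$ without a known universal optimal input, at the modest cost of spelling out the lower semicontinuity input (in finite dimensions this follows, e.g., from writing $D_{\alpha}(\rho\Vert\sigma)=\sup_{\epsilon>0}D_{\alpha}(\rho\Vert\sigma+\epsilon I)$ as a supremum of continuous functions, using antimonotonicity in the second argument).
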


\begin{proof}
To see this, consider that%
\begin{align}
&  \lim_{\alpha\rightarrow\infty}I_{\alpha}(\mathcal{N})\nonumber\\
&  =\lim_{\alpha\rightarrow\infty}\max_{\psi_{RA}}\min_{\sigma_{B}}D_{\alpha
}(\mathcal{N}_{A\rightarrow B}(\psi_{RA})\Vert\psi_{R}\otimes\sigma_{B})\\
&  =\sup_{\alpha>1}\max_{\psi_{RA}}\min_{\sigma_{B}}D_{\alpha}(\mathcal{N}%
_{A\rightarrow B}(\psi_{RA})\Vert\psi_{R}\otimes\sigma_{B})\\
&  \leq\max_{\psi_{RA}}\min_{\sigma_{B}}\sup_{\alpha>1}D_{\alpha}%
(\mathcal{N}_{A\rightarrow B}(\psi_{RA})\Vert\psi_{R}\otimes\sigma_{B})\\
&  =\max_{\psi_{RA}}\min_{\sigma_{B}}D_{\max}(\mathcal{N}_{A\rightarrow
B}(\psi_{RA})\Vert\psi_{R}\otimes\sigma_{B})\\
&  =I_{\max}(\mathcal{N}).
\end{align}
Now consider that%
\begin{align}
&  \lim_{\alpha\rightarrow\infty}I_{\alpha}(\mathcal{N})\nonumber\\
&  =\sup_{\alpha>1}\max_{\psi_{RA}}\min_{\sigma_{B}}D_{\alpha}(\mathcal{N}%
_{A\rightarrow B}(\psi_{RA})\Vert\psi_{R}\otimes\sigma_{B})\\
&  \geq\sup_{\alpha>1}\min_{\sigma_{B}}D_{\alpha}(\mathcal{N}_{A\rightarrow
B}(\Phi_{RA})\Vert\Phi_{R}\otimes\sigma_{B})\\
&  =\min_{\sigma_{B}}\sup_{\alpha>1}D_{\alpha}(\mathcal{N}_{A\rightarrow
B}(\Phi_{RA})\Vert\Phi_{R}\otimes\sigma_{B})\\
&  =\min_{\sigma_{B}}D_{\max}(\mathcal{N}_{A\rightarrow B}(\Phi_{RA})\Vert
\Phi_{R}\otimes\sigma_{B})\\
&  =I_{\max}(\mathcal{N}).
\end{align}
with the exchange of $\min$ and $\sup$ in the last line following from
\cite[Corollary~A.2]{MH11}. The last equality follows from the remark after
\cite[Definition~4]{FWTB18}.
\end{proof}

\bigskip

The smoothed max-mutual information of a quantum channel $\mathcal{N}%
_{A\rightarrow B}$ is then defined for $\varepsilon\in(0,1)$ as
\cite[Definition~5]{FWTB18}%
\begin{equation}
I_{\max}^{\varepsilon}(\mathcal{N})\equiv\inf_{\widetilde{\mathcal{N}%
}\ :\ P(\mathcal{N},\widetilde{\mathcal{N}})\leq\varepsilon}I_{\max
}(\widetilde{\mathcal{N}}).
\end{equation}
(Here we smooth with respect to purified distance for convenience.) We then
have that \cite[Theorem~8]{FWTB18}%
\begin{equation}
\lim_{n\rightarrow\infty}\frac{1}{n}I_{\max}^{\varepsilon}(\mathcal{N}%
^{\otimes n})\leq I(\mathcal{N}), \label{eq:AEP-max-mut}%
\end{equation}
where $I(\mathcal{N})$ is the mutual information of a channel \cite{AC97},
defined as%
\begin{equation}
I(\mathcal{N})=\lim_{\alpha\rightarrow1}\widetilde{I}_{\alpha}(\mathcal{N}%
)=\sup_{\psi_{RA}}I(R;B)_{\omega}, \label{eq:mut-inf-ch-def}%
\end{equation}
where $\omega_{RB}\equiv\mathcal{N}_{A\rightarrow B}(\psi_{RA})$.

To arrive at an alternate proof of the upper bound in \cite[Theorem~8]%
{FWTB18}, consider that an application of \cite[Eq.~(6.92)]{T15}, definitions,
and arguments similar to those in the first part of
Theorem~\ref{thm:AEP-channel-entropy} imply the following inequality for all
$\alpha>1$ and $\varepsilon\in(0,1)$:%
\begin{align}
I_{\max}^{\varepsilon}(\mathcal{N}^{\otimes n})  &  \leq I_{\alpha
}(\mathcal{N}^{\otimes n}) + f(\varepsilon,\alpha)\\
&  =nI_{\alpha}(\mathcal{N})+ f(\varepsilon,\alpha),
\end{align}
where the equality follows from \cite[Lemma~6]{GW15} and $f(\varepsilon
,\alpha)$ is a function of $\varepsilon$ and $\alpha$ that vanishes when
dividing by $n$ and taking the large $n$ limit. Dividing by $n$ and taking the
limit $n\rightarrow\infty$, we find that%
\begin{equation}
\lim_{n\rightarrow\infty}\frac{1}{n}I_{\max}^{\varepsilon}(\mathcal{N}%
^{\otimes n})\leq I_{\alpha}(\mathcal{N}).
\end{equation}
Since the inequality holds for all $\alpha>1$, we can take the limit
$\alpha\rightarrow1$, apply \eqref{eq:mut-inf-ch-def}, and conclude the bound
in \eqref{eq:AEP-max-mut}.

\section{Data processing of the Choi divergence under particular
superchannels}

\label{app:choi-div-monotone}

\begin{proposition}
Let $\Theta$ be a superchannel of the following form:
\begin{equation}
\Theta(\mathcal{N}_{A\rightarrow B})=\sum_{x}p(x)\;\Omega_{BE\rightarrow
D}^{x}\circ\mathcal{N}_{A\rightarrow B}\circ\Lambda_{C\rightarrow AE}^{x}
\label{maintheta}%
\end{equation}
where $p(x)$ is a probability distribution, and for each $x$ the map
$\Omega_{BE\rightarrow D}^{x}$ is an arbitrary quantum channel, and
$\Lambda_{C\rightarrow AE}^{x}$ is a unital quantum channel (hence
$|C|=|A||E|$). Then the Choi divergence is monotone under such superchannels:
\begin{equation}
\mathbf{D}^{\Phi}(\mathcal{N}\Vert\mathcal{M})\geq\mathbf{D}^{\Phi}%
(\Theta(\mathcal{N})\Vert\Theta(\mathcal{M})).
\end{equation}

\end{proposition}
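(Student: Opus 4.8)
The plan is to reduce the statement to a single application of the data-processing inequality \eqref{eq:data-processing} satisfied by any generalized divergence $\mathbf{D}$. Concretely, I aim to exhibit one quantum channel $\mathcal{P}_{RB\rightarrow R'D}$ (the \emph{same} for $\mathcal{N}$ and $\mathcal{M}$) such that the Choi states of $\Theta(\mathcal{N})$ and $\Theta(\mathcal{M})$ are obtained by applying $\mathcal{P}$ to the Choi states of $\mathcal{N}$ and $\mathcal{M}$, i.e. $\Theta(\mathcal{N})(\Phi_{R'C})=\mathcal{P}_{RB\rightarrow R'D}(\mathcal{N}_{A\rightarrow B}(\Phi_{RA}))$ and likewise for $\mathcal{M}$, where $R'$ is isomorphic to $C$ and $R$ is isomorphic to $A$.

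First I would expand the Choi state of $\Theta(\mathcal{N})$ using \eqref{maintheta},
\begin{equation}
\Theta(\mathcal{N})(\Phi_{R'C})=\sum_{x}p(x)\,\Omega_{BE\rightarrow D}^{x}\circ\mathcal{N}_{A\rightarrow B}\!\left(\xi_{R'AE}^{x}\right),\qquad\xi_{R'AE}^{x}\equiv\Lambda_{C\rightarrow AE}^{x}(\Phi_{R'C}).
\end{equation}
The key step is to analyze $\xi_{R'AE}^{x}$. Tracing out $R'$ gives $\Lambda^{x}(\pi_{C})=\tfrac{1}{|C|}\Lambda^{x}(I_{C})=\tfrac{1}{|C|}I_{AE}$ by unitality, and tracing out $E$ as well then yields $\xi_{A}^{x}=\tfrac{|E|}{|C|}I_{A}=\pi_{A}$, using $|C|=|A||E|$. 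Since $\xi_{R'AE}^{x}$ therefore has maximally mixed marginal on $A$ and $R$ is isomorphic to $A$, the Choi--Jamio\l{}kowski isomorphism (exactly the fact invoked around \eqref{eq:key-relation-doubly-stoch}) supplies a quantum channel $\mathcal{E}_{R\rightarrow R'E}^{x}$, depending only on $\Lambda^{x}$, with $\xi_{R'AE}^{x}=\mathcal{E}_{R\rightarrow R'E}^{x}(\Phi_{RA})$.

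Because $\mathcal{N}_{A\rightarrow B}$ acts on system $A$ while $\mathcal{E}_{R\rightarrow R'E}^{x}$ acts on the reference $R$, the two commute, so $\mathcal{N}_{A\rightarrow B}(\xi_{R'AE}^{x})=\mathcal{E}_{R\rightarrow R'E}^{x}(\mathcal{N}_{A\rightarrow B}(\Phi_{RA}))$. Absorbing the post-processing into $\mathcal{F}_{RB\rightarrow R'D}^{x}\equiv\Omega_{BE\rightarrow D}^{x}\circ\mathcal{E}_{R\rightarrow R'E}^{x}$ and setting $\mathcal{P}_{RB\rightarrow R'D}\equiv\sum_{x}p(x)\,\mathcal{F}_{RB\rightarrow R'D}^{x}$, which is a bona fide channel, I obtain $\Theta(\mathcal{N})(\Phi_{R'C})=\mathcal{P}_{RB\rightarrow R'D}(\mathcal{N}_{A\rightarrow B}(\Phi_{RA}))$ and, by the identical computation, the same identity for $\mathcal{M}$. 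Applying data processing under $\mathcal{P}_{RB\rightarrow R'D}$ then gives $\mathbf{D}^{\Phi}(\mathcal{N}\Vert\mathcal{M})=\mathbf{D}(\mathcal{N}(\Phi_{RA})\Vert\mathcal{M}(\Phi_{RA}))\geq\mathbf{D}^{\Phi}(\Theta(\mathcal{N})\Vert\Theta(\mathcal{M}))$.

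The step I expect to require the most care is the bookkeeping that makes $\mathcal{P}_{RB\rightarrow R'D}$ independent of the input channel: the channel $\mathcal{E}_{R\rightarrow R'E}^{x}$ must be read off from $\Lambda^{x}$ alone (not from $\mathcal{N}$ or $\mathcal{M}$), so that one and the same $\mathcal{P}$ processes both Choi states and a single data-processing inequality applies. The unitality of each $\Lambda^{x}$ together with $|C|=|A||E|$ is precisely what certifies $\xi_{A}^{x}=\pi_{A}$ and hence licenses the Choi--Jamio\l{}kowski representation; without unitality this marginal would not be maximally mixed and the argument would break, which explains why the hypothesis is stated as it is.
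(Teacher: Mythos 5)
Your proposal is correct, and it reaches the result by a genuinely different (and more economical) route than the paper's. The paper first treats a single term $\Upsilon(\mathcal{N})=\Omega_{BE\rightarrow D}\circ\mathcal{N}_{A\rightarrow B}\circ\Lambda_{C\rightarrow AE}$ via the transpose trick: it writes $\Lambda_{C\rightarrow AE}(\Phi_{\tilde{C}C})=\Lambda_{\tilde{A}\tilde{E}\rightarrow\tilde{C}}^{t}(\Phi_{\tilde{A}A}\otimes\Phi_{\tilde{E}E})$, where $\Lambda^{t}$ (Kraus operators transposed) is trace preserving precisely because $\Lambda$ is unital, and then strips $\Omega$ and $\Lambda^{t}$ by two applications of data processing together with $\mathbf{D}(\rho\otimes\omega\Vert\sigma\otimes\omega)=\mathbf{D}(\rho\Vert\sigma)$; the mixture over $x$ then requires a separate classical-flag argument (embedding the mixtures as $\operatorname{Tr}_{X}$ of flagged states and invoking the single-term monotonicity termwise). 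You instead invoke the Choi--Jamio\l{}kowski correspondence---the same device the paper itself uses at \eqref{eq:key-relation-doubly-stoch} in its theorem on doubly stochastic superchannels---to realize each $\Lambda_{C\rightarrow AE}^{x}(\Phi_{R^{\prime}C})$ as $\mathcal{E}_{R\rightarrow R^{\prime}E}^{x}(\Phi_{RA})$ with $\mathcal{E}^{x}$ read off from $\Lambda^{x}$ alone, and you fold the post-processing and the mixture into the single channel $\mathcal{P}=\sum_{x}p(x)\,\Omega_{BE\rightarrow D}^{x}\circ\mathcal{E}_{R\rightarrow R^{\prime}E}^{x}$ acting on the Choi systems, so that one data-processing application finishes everything, the mixture being handled for free by convexity of the set of channels. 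The two mechanisms are two faces of the same fact (unitality of $\Lambda$ is equivalent to its transpose being a channel, and implies that the $A$-marginal of its Choi state is maximally mixed), but your packaging is cleaner and unifies this proposition with the paper's doubly-stochastic argument; moreover, your route only ever uses the marginal condition $\xi_{A}^{x}=\pi_{A}$, i.e., $\operatorname{Tr}_{E}\{\Lambda^{x}(I_{C})\}=\left\vert E\right\vert I_{A}$, which is strictly weaker than the full unitality that the transpose trick needs to make $\Lambda^{t}$ trace preserving, so you in fact prove a slightly more general statement. What the paper's version buys in exchange is an explicit Kraus-level construction of the simulating map, where yours rests on an existence claim from the state--channel correspondence; your own remark that $\mathcal{P}$ must not depend on $\mathcal{N}$ or $\mathcal{M}$ is exactly the right point of care, and your construction satisfies it.
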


\begin{proof}
To prove it, we first prove the monotonicity under any superchannel of the
form
\begin{equation}
\Upsilon(\mathcal{N}_{A\rightarrow B})=\Omega_{BE\rightarrow D}\circ
\mathcal{N}_{A\rightarrow B}\circ\Lambda_{C\rightarrow AE} \label{upsilon}%
\end{equation}
with $\Omega_{BE\rightarrow D}$ an arbitrary quantum channel, and
$\Lambda_{C\rightarrow AE}$ a unital quantum channel. Indeed, denoting by
$\Lambda_{AE\rightarrow C}^{t}$ the quantum channel obtained from
$\Lambda_{C\rightarrow AE}$ by taking the transpose on each of its Kraus
operators, and denoting by $\tilde{A}$, $\tilde{C}$, and $\tilde{E}$, replicas
of systems $A$, $C$, and $E$, we find that\begin{widetext}
\begin{align}
&  \mathbf{D}^{\Phi}(\Upsilon(\mathcal{N})\Vert\Upsilon(\mathcal{M}%
))\nonumber\\
&  =\mathbf{D}(\Upsilon(\mathcal{N}_{A\rightarrow B})(\Phi_{\tilde{C}C}%
)\Vert\Upsilon(\mathcal{M}_{A\rightarrow B})(\Phi_{\tilde{C}C}))\\
&  =\mathbf{D}((\Omega_{BE\rightarrow D}\circ\mathcal{N}_{A\rightarrow B}%
\circ\Lambda_{C\rightarrow AE})(\Phi_{\tilde{C}C})\Vert(\Omega_{BE\rightarrow
D}\circ\mathcal{M}_{A\rightarrow B}\circ\Lambda_{C\rightarrow AE}%
)(\Phi_{\tilde{C}C}))\\
&  \leq\mathbf{D}((\mathcal{N}_{A\rightarrow B}\circ\Lambda_{C\rightarrow
AE})(\Phi_{\tilde{C}C})\Vert(\mathcal{M}_{A\rightarrow B}\circ\Lambda
_{C\rightarrow AE})(\Phi_{\tilde{C}C}))\\
&  =\mathbf{D}((\Lambda_{\tilde{A}\tilde{E}\rightarrow\tilde{C}}^{t}%
\circ\mathcal{N}_{A\rightarrow B})(\Phi_{\tilde{A}A}\otimes\Phi_{\tilde{E}%
E})\Vert(\Lambda_{\tilde{A}\tilde{E}\rightarrow\tilde{C}}^{t}\circ
\mathcal{M}_{A\rightarrow B})(\Phi_{\tilde{A}A}\otimes\Phi_{\tilde{E}E}))\\
&  \leq\mathbf{D}(\mathcal{N}_{A\rightarrow B}(\Phi_{\tilde{A}A}\otimes
\Phi_{\tilde{E}E})\Vert\mathcal{M}_{A\rightarrow B}(\Phi_{\tilde{A}A}%
\otimes\Phi_{\tilde{E}E}))\\
&  =\mathbf{D}(\mathcal{N}_{A\rightarrow B}(\Phi_{\tilde{A}A})\Vert
\mathcal{M}_{A\rightarrow B}(\Phi_{\tilde{A}A}))\\
&  =\mathbf{D}^{\Phi}(\mathcal{N}\Vert\mathcal{M}).
\end{align}
\end{widetext}
where, in both inequalities, we used data processing of the divergence
$\mathbf{D}$, for the second equality we used the relation $\Lambda
_{C\rightarrow AE}(\Phi_{\tilde{C}C})=\Lambda_{\tilde{A}\tilde{E}%
\rightarrow\tilde{C}}^{t}(\Phi_{\tilde{A}A}\otimes\Phi_{\tilde{E}E})$, and for
the third equality we used the property $\mathbf{D}(\rho\otimes\omega
\Vert\sigma\otimes\omega)=\mathbf{D}(\rho\Vert\sigma)$ \cite{WWY13}. Now, to
prove the monotonicity under $\Theta$ as in~\eqref{maintheta}, we write
$\Theta=\sum_{x}p(x)\Upsilon_{x}$, where each $\Upsilon_{x}$ has the
form~\eqref{upsilon}. With this notation, we find that\begin{widetext}
\begin{align}
&  \mathbf{D}^{\Phi}(\Theta(\mathcal{N})\Vert\Theta(\mathcal{M}))=\mathbf{D}%
\left(  \sum_{x}p(x)\Upsilon_{x}(\mathcal{N}_{A\rightarrow B})(\Phi_{\tilde
{C}C})\Big\Vert\sum_{x}p(x)\Upsilon_{x}(\mathcal{M}_{A\rightarrow B}%
)(\Phi_{\tilde{C}C})\right)  =\label{g1}\\
&  \mathbf{D}\left(  \mathrm{Tr}_{X}\Big[\sum_{x}p(x)\Upsilon_{x}%
(\mathcal{N}_{A\rightarrow B})(\Phi_{\tilde{C}C})\otimes|x\rangle\langle
x|_{X}\Big]\Big\Vert\mathrm{Tr}_{X}\Big[\sum_{x}p(x)\Upsilon_{x}%
(\mathcal{M}_{A\rightarrow B})(\Phi_{\tilde{C}C})\otimes|x\rangle\langle
x|_{X}\Big]\right) \\
&  \leq\mathbf{D}\left(  \sum_{x}p(x)\Upsilon_{x}(\mathcal{N}_{A\rightarrow
B})(\Phi_{\tilde{C}C})\otimes|x\rangle\langle x|_{X}\Big\Vert\sum
_{x}p(x)\Upsilon_{x}(\mathcal{M}_{A\rightarrow B})(\Phi_{\tilde{C}C}%
)\otimes|x\rangle\langle x|_{X}\right) \\
&  \leq\mathbf{D}\left(  \sum_{x}p(x)\mathcal{N}_{A\rightarrow B}(\Phi
_{\tilde{A}A})\otimes|x\rangle\langle x|_{X}\Big\Vert\sum_{x}p(x)\mathcal{M}%
_{A\rightarrow B}(\Phi_{\tilde{A}A})\otimes|x\rangle\langle x|_{X}\right) \\
&  =\mathbf{D}\left(  \mathcal{N}_{A\rightarrow B}(\Phi_{\tilde{A}%
A})\Big\Vert\mathcal{M}_{A\rightarrow B}(\Phi_{\tilde{A}A})\right) \\
&  =\mathbf{D}^{\Phi}(\mathcal{N}\Vert\mathcal{M}). \label{g2}%
\end{align}
\end{widetext}
where, in the first inequality, we used the monotonicity of the divergence
under data processing, and for the second inequality, we used the monotonicity
under maps of the form in~\eqref{upsilon}.
\end{proof}

\section{Optimizing the adversarial channel divergence}

\label{app:adv-ch-div-limited}By definition, we always have that%
\begin{equation}
\mathbf{D}^{\text{adv}}(\mathcal{N}\Vert\mathcal{M})\geq\sup_{\psi_{RA}}%
\inf_{\sigma_{RA}}\mathbf{D}(\mathcal{N}_{A\rightarrow B}(\psi_{RA}%
)\Vert\mathcal{M}_{A\rightarrow B}(\sigma_{RA})),
\end{equation}
where $\psi_{RA}$ is pure with system$~R$ isomorphic to system$~A$.

To see the claim after \eqref{eq:adv-ch-div}, let $\rho_{RA}$ be an arbitrary
state with purification $\phi_{R^{\prime}RA}$. It thus holds that
$\phi_{R^{\prime}RA}$ is a purification of $\rho_{A}$, with $R^{\prime}R$
acting as the purifying systems. By taking a \textquotedblleft
canonical\textquotedblright\ purification of $\rho_{A}$ that is in direct
correspondence with its eigendecomposition, there exists a purification
$\varphi_{SA}$\ of $\rho_{A}$ with system $S$ isomorphic to system $A$. Since
the purification $\phi_{R^{\prime}RA}$ is related by an isometric channel
$\mathcal{U}_{S\rightarrow R^{\prime}R}$ to the purification $\varphi_{SA}$ as
$\phi_{R^{\prime}RA}=\mathcal{U}_{S\rightarrow R^{\prime}R}(\varphi_{SA})$ and
applying the isometric invariance of generalized divergences \cite{TWW14}, we
conclude for an arbitrary state $\omega_{SA}$ that%
\begin{align}
&  \mathbf{D}(\mathcal{N}_{A\rightarrow B}(\varphi_{SA})\Vert\mathcal{M}%
_{A\rightarrow B}(\omega_{SA}))\nonumber\\
&  =\mathbf{D}((\mathcal{U}_{S\rightarrow R^{\prime}R}\circ\mathcal{N}%
_{A\rightarrow B})(\varphi_{SA})\Vert(\mathcal{U}_{S\rightarrow R^{\prime}%
R}\circ\mathcal{M}_{A\rightarrow B})(\omega_{SA}))\\
&  =\mathbf{D}(\mathcal{N}_{A\rightarrow B}(\phi_{R^{\prime}RA})\Vert
\mathcal{M}_{A\rightarrow B}(\mathcal{U}_{S\rightarrow R^{\prime}R}%
(\omega_{SA})))\\
&  \geq\mathbf{D}(\mathcal{N}_{A\rightarrow B}(\rho_{RA})\Vert\mathcal{M}%
_{A\rightarrow B}((\operatorname{Tr}_{R^{\prime}}\circ\mathcal{U}%
_{S\rightarrow R^{\prime}R})(\omega_{SA})))\\
&  \geq\inf_{\sigma_{RA}}\mathbf{D}(\mathcal{N}_{A\rightarrow B}(\rho
_{RA})\Vert\mathcal{M}_{A\rightarrow B}(\sigma_{RA})).
\end{align}
The first inequality is from data processing under the partial trace over
$R^{\prime}$. Since the inequality holds for arbitrary $\omega_{SA}$, we
conclude that%
\begin{multline}
\inf_{\sigma_{RA}}\mathbf{D}(\mathcal{N}_{A\rightarrow B}(\rho_{RA}%
)\Vert\mathcal{M}_{A\rightarrow B}(\sigma_{RA}))\leq\\
\inf_{\omega_{SA}}\mathbf{D}(\mathcal{N}_{A\rightarrow B}(\varphi_{SA}%
)\Vert\mathcal{M}_{A\rightarrow B}(\omega_{SA})).
\end{multline}
We can then take a supremum to conclude that%
\begin{multline}
\inf_{\sigma_{RA}}\mathbf{D}(\mathcal{N}_{A\rightarrow B}(\rho_{RA}%
)\Vert\mathcal{M}_{A\rightarrow B}(\sigma_{RA}))\leq\\
\sup_{\varphi_{SA}}\inf_{\omega_{SA}}\mathbf{D}(\mathcal{N}_{A\rightarrow
B}(\varphi_{SA})\Vert\mathcal{M}_{A\rightarrow B}(\omega_{SA})).
\end{multline}
Since the inequality holds for an arbitrary choice of $\rho_{RA}$, we conclude
that%
\begin{equation}
\mathbf{D}^{\text{adv}}(\mathcal{N}\Vert\mathcal{M})\leq\sup_{\varphi_{SA}%
}\inf_{\omega_{SA}}\mathbf{D}(\mathcal{N}_{A\rightarrow B}(\varphi_{SA}%
)\Vert\mathcal{M}_{A\rightarrow B}(\omega_{SA})).
\end{equation}
This concludes the proof.

\end{document}